\newtheorem{lemma}{Lemma}
\newtheorem{example}{Example}
\newtheorem{remark}{Remark}
\newtheorem{proposition}{Proposition}
\newtheorem{theorem}{Theorem}
\newtheorem{claim}{Claim}
\newtheorem{corollary}{Corollary}
\numberwithin{equation}{section}
\renewcommand{\leq}{\leqslant}
\renewcommand{\geq}{\geqslant}
\newcommand{\tright}{{\rhd\hskip-5pt\rightarrow}}
\newcommand{\Shift}[2]{{\displaystyle\mathop\circleright^{#1}(#2)}}
\newcommand{\ShiftDiam}[2]{{\displaystyle\mathop\Diamondright^{#1}(#2)}}
\newcommand{\ShiftDiamnk}[1]{{\displaystyle\mathop\Diamondright^{#1}}}
\newcommand{\ShiftBoxnk}[1]{{\displaystyle\mathop\boxRight^{#1}}}
\newcommand{\Shiftdnk}[1]{{\displaystyle\mathop\circleddotright^{#1}}}
\newcommand{\ShiftTnk}[1]{{\displaystyle\mathop\tright^{#1}}}
\newcommand{\Shiftnk}[1]{{\displaystyle\mathop\circleright^{#1}}}
\newcommand{\ShiftDiamD}[2]{{\displaystyle\mathop\Diamonddotright^{#1}(#2)}}
\newcommand{\ShiftDiamDnk}[1]{{\displaystyle\mathop\Diamonddotright^{#1}}}
\newcommand{\goth}{\mathfrak}
\newcommand{\N}{{\mathbb N}}
\newcommand{\Had}{{\mathbb H}}
\newcommand{\ARef}{\mathrm{ARef}}
\newcommand{\Z}{{\mathbb Z}}
\newcommand{\RAS}{\mathrm{RAS}}
\newcommand{\ARAS}{\mathrm{ARAS}}
\newcommand{\pp}{\mathbf{p}}
\newcommand{\Free}{\mathrm{Free}}
\newcommand{\Hom}{\mathrm{Hom}}
\newcommand{\OP}{\mathrm{OP}}
\newcommand{\QOSET}{\mathrm{QOSet}}
\tikzstyle{every picture}=[>=stealth',shorten >=1pt,node distance=1.44cm,
\tikzstyle{Injection} = [black!100,draw,>->]
\tikzstyle{Surjection} = [black!100,draw,->>]
\title{Operads, quasiorders, and regular languages}
\keywords{Operad; Multi-tilde; Language; Regular language; Quasiorder.}
\subjclass[2010]{05E99, 68Q45, 18D50.}
\date{\today}
\author{Samuele Giraudo}
\address{Laboratoire d'Informatique Gaspard-Monge, Université
    Paris-Est Marne-la-Vallée, 5 boulevard Descartes, Champs-sur-Marne,
    77454 Marne-la-Vallée cedex 2, France.}
\email{samuele.giraudo@univ-mlv.fr}
\author{Jean-Gabriel Luque}
\address{Laboratoire LITIS - EA 4108
    Université de Rouen. Avenue de l'Université - BP 8
    76801 Saint-Étienne-du-Rouvray Cedex.}
\email{jean-gabriel.luque@univ-rouen.fr}
\author{Ludovic Mignot}
\address{Laboratoire LITIS - EA 4108
    Université de Rouen. Avenue de l'Université - BP 8
    76801 Saint-Étienne-du-Rouvray Cedex.}
\email{ludovic.mignot@univ-rouen.fr}
\author{Florent Nicart}
\address{Laboratoire LITIS - EA 4108
    Université de Rouen. Avenue de l'Université - BP 8
    76801 Saint-Étienne-du-Rouvray Cedex.}
\email{florent.nicart@univ-rouen.fr}
\begin{document}

\begin{abstract}
We generalize the construction of multi-tildes in the aim to provide
double multi-tilde operators for regular languages. We show that the underlying
algebraic structure involves the action of some operads. An operad is an
algebraic structure that mimics the composition of the functions. The
involved operads are described in terms of combinatorial objects. These
operads are obtained from more primitive objects, namely  precompositions,
whose algebraic counter-parts are investigated. One of these operads acts
faithfully on languages in the sense that two different operators act in
two different ways.
\end{abstract}

\maketitle

\tableofcontents

\section*{Introduction}
Following the Chomsky-Sch\"utzenberger hierarchy~\cite{CH56}, regular
languages are defined  to be the formal languages that are generated by
Type-3 grammars (also called regular grammars). These particular languages
have been studied from several years since they have many applications in
several areas such as pattern matching, compilation, verification, and
bioinformatics. Their generalization as rational series links them to
various algebraic or combinatorial topics like enumeration (manipulation
of generating functions), rational approximation (for instance Pade
approximation), representation theory (modules viewed as automata), and
combinatorial optimization ($(\max,+)$-automata).
\smallskip

One of the main specificity of regular languages is that they can be
represented by various tools: regular grammars, automata, regular
expressions, \emph{etc.} Whilst they can be represented by both automata
and regular expressions~\cite{Kle56}, these two tools are not equivalent.
Indeed, Ehrenfeucht and Zeiger~\cite{EZ76} showed a one parameter family
of automata whose shortest equivalent regular expressions have a width
exponentially growing with the numbers of states. Note that it is possible
to compute an automaton from  a regular expression $E$ such that the number
of its states is a linear function  of the alphabet width (\emph{i.e.},
the number of occurrences of alphabet symbols)
of~$E$~\cite{Ant96,CZ01a,Glu61,MY60}.
\smallskip

In the aim to increase expressiveness of regular expressions for a bounded
length, Caron~\emph{et al.}~\cite{CCM11a} introduced the so-called
multi-tilde operators and applied these to represent finite languages.
Investigating the equivalence of two multi-tilde expressions, they define
a natural notion of composition which endows the set of multi-tilde
operators with a structure of operad. This structure has been investigated
in~\cite{LMN12}.
\smallskip

Originating from the algebraic topology~\cite{May72,BV73}, operad theory
has been developed as a field of abstract algebra concerned by prototypical
algebras that model classical properties such as commutativity and
associativity~\cite{LV10}. Generally defined in terms of categories, this
notion can be naturally applied to computer science. Indeed, an operad is
just a set of operations, each one having exactly one output and a fixed
finite number of inputs, endowed with the composition operation. An
operad can then model the compositions of functions occurring during the
execution of a program. In terms of theoretical computer science, this
can be represented by trees with branching rules. The whole point of the
operads in the context of the computer science is that this allows to use
different tools and concepts from algebra (such as morphisms,
quotients, substructures, generating sets).
\smallskip

In order to illustrate this point of view, let us recall the main results
of our previous paper~\cite{LMN12}. In this paper, we first showed that
the set of multi-tilde operators has a structure of operad. We used the
concept of morphism in the aim to choose the operad allowing us to describe
in the simplest way a given operation or a property. For instance, the
original definition of the action of the multi-tildes on languages is
rather complicated. But, \emph{via} an intermediate operad based on set
of boolean vectors, the action was described in a more natural way. In
the same way, the equivalence problem is clearer when asked in a operad
based on antisymmetric and reflexive relations which is isomorphic to the
operad of multi-tildes: two operators are equivalent if and only if they
have the same transitive closure. The transitive closure being compatible
with the composition, we defined an operad based on partial ordered sets
as a quotient of the previous operad and we showed that this representation
is optimal in the sense that two different operators act in two different
ways on languages. This not only helps to clarify constructions but also
to ask new questions. For instance, how many different ways do $n$-ary
multi-tildes act on languages? Precisely, the answer is the number of
posets on $\{1,\dots,n+1\}$ that are compatible with the natural order
on integers.
\smallskip

The goal of this paper is to generalize this construction to regular
languages. We investigate several operads (based on double multi-tildes,
antireflexive relations or quasiorders) allowing to represent a regular
language as an $n$-ary operator acting on a $n$-tuple of symbols
$(\alpha_1,\dots,\alpha_n)$ where the $\alpha_i$ are symbols or
$\emptyset$. These operators generalize the multi-tildes and the
investigated properties involve their underlying operads. Such a generalization  induces the definition of new parametrized operators that allow to increase the number of regular languages   denoted by an expression with a fixed alphabetical width \emph{i.e.}, the number of occurrences of its symbols. 
One of the main properties of such a family of operators is that the expressions can be easily translated in terms of $\varepsilon$-automata.
This paper is the first step of this process: multi-tildes were shown to be able to replace the operators of sum and catenation; this work shows that they may replace the Kleene star too. The notion of precomposition is the keystone of this modus operandi; using functors and category theory, the next step is to link the notion of operad and the conversions between expressions and automata.

\smallskip

This paper is organized as follows. First we recall in
Section~\ref{sec:comb op lang theor} several notions concerning operad
theory and multi-tilde operations. In Section~\ref{sec3}, we observe that
many of the operads involved in~\cite{LMN12} and in this paper have some
common properties. More precisely, they can be described completely by
means of {\em shifting} operations. This leads to the definition of the
category of precompositions together with a functor to the category of
operads. We also define and investigate the notion of quotients of
precompositions. These structures serve as model for the operads defined
in the sequel. To illustrate how to use these tools, we revisit
in Section~\ref{sec4} the operads defined in~\cite{LMN12} and describe
them in terms of precompositions. In Section~\ref{sec5}, we define the
double multi-tilde operad $\mathcal{DT}$ as the graded tensor square of
the multi-tilde operad. We construct also an isomorphic operad $\ARef$
based on antireflexive relations and a quotient  based on quasiorders
$\QOSET$. In Section~\ref{sec6}, we describe the action of the
operads on the languages. In particular, we show that any regular language
can be written as $\pp_n(\alpha_1,\dots,\alpha_n)$ where the
$\alpha_i$ are letters or $\emptyset$ and $\pp_n$ is an $n$-ary
operation belonging to $\ARef$, $\mathcal{DT}$, or $\QOSET$.
Finally, we prove that the action of $\QOSET$ on regular
languages is faithful, that is two different operators act in two
different ways.
\smallskip

The operad studied in this paper fit into the following diagram
\begin{equation}
    \begin{split}
    \begin{tikzpicture}[xscale=1,yscale=.6]
        \node(ARef)at(0,0){
            \begin{math}\mathrm{ARef} \simeq
            \Had(\mathrm{ARAS}, \mathrm{ARAS}) \simeq
            \mathcal{DT}\end{math}};
        \node(QOSet)at(-2,-3){\begin{math}\QOSET\end{math}};
        \node(ARAS)at(2,-3){
            \begin{math}\mathrm{ARAS} \simeq
            \mathrm{RAS} \simeq
            \mathcal{T}\end{math}};
        \node(POSet)at(0,-6){\begin{math}\mathrm{POSet}\end{math}};
        \draw[Surjection](ARef)--(QOSet);
        \draw[Surjection](ARAS)--(POSet);
        \draw[Injection](ARAS)--(ARef);
        \draw[Injection](POSet)--(QOSet);
    \end{tikzpicture}
    \end{split}
\end{equation}
where arrows $\rightarrowtail$ (resp. $\twoheadrightarrow$) are
injective (resp. surjective) morphisms of operads.
\bigskip

{\em Acknowledgements}. The authors would like to thank the referee for 
his comments improving the quality of the paper.
\medskip

\section{Some combinatorial operators in language theory}%
\label{sec:comb op lang theor}

We recall here some basic notions about the theory of operads and set
our notations for the sequel of the paper. In particular, we recall
what are operads, free operads, and modules over an operad. We conclude
this section by presenting the operad of multi-tildes introduced
in~\cite{LMN12}.

\subsection{Nonsymmetric operads}
Since we shall consider in this paper only nonsymmetric operads, we
shall call these simply {\em operads}. Operads are algebraic graded
structures which mimic the composition of $n$-ary operators. Let us recall
the main definitions and properties. Let
$\goth P=\bigsqcup_{n \geq 1}\goth P_n$ be a graded set
($\bigsqcup$ means that the sets are disjoint); the elements of $\goth P_n$
are called {\em $n$-ary operators}. The set $\goth P$ is endowed with maps
\begin{equation}
    \circ_i:\goth P_n\times \goth P_m\rightarrow \goth P_{n+m-1},
\end{equation}
where $1 \leq i \leq n$, called {\em partial compositions} and satisfying
for all $\pp_1 \in \goth P_n$, $\pp_2 \in \goth P_m$, and $\pp_3 \in \goth P_p$
the two following rules.
\begin{enumerate}
    \item {\it Associativity 1}: if $1\leq i<j\leq n$, then
    \begin{equation}\label{Associativity1}
    (\pp_1\circ_i\pp_2)\circ_{j+m-1} \pp_3
    =(\pp_1\circ_j \pp_3)\circ_i\pp_2.
    \end{equation}
    \item {\it Associativity 2}: if $j\leq m$, then
\begin{equation}\label{Associativity2}
(\pp_1\circ_i\pp_2)\circ_{i+j-1}\pp_3
    =\pp_1\circ_i(\pp_2\circ_j\pp_3).
    \end{equation}
\end{enumerate}
Moreover, in an operad, there is a special element
$\mathbf1 \in \goth P_1$ called {\em unit} and satisfying for
all $\pp \in \goth P_n$ the following rule.
\begin{enumerate}[resume]
    \item {\it Unitality relation}: if $1 \leq i \leq n$, then
    \begin{equation}
        \pp \circ_i \mathbf1 = \pp = \mathbf1 \circ_1 \pp.
    \end{equation}
\end{enumerate}
The reader could refer to~\cite{LV10,Markl} for a complete description
of the structures.
\medskip

Consider two operads $(\goth P,\circ,\mathbf1)$ and
$(\goth P',\circ',\mathbf1')$. A {\em morphism of operads} is a graded
map $\phi:\goth P\rightarrow \goth P'$ such that
$\phi(\mathbf1) = \mathbf1'$ and
\begin{equation}
    \phi(\pp_1\circ_i\pp_2)=\phi(\pp_1)\circ'_i\phi(\pp_2)
\end{equation}
for all $\pp_1\in\goth P_n$, $\pp_2\in\goth P_m$ and $1\leq i\leq n$. If
$\goth Q\subset \goth P$, the {\em suboperad} of $\goth P$ generated by
$\goth Q$ is the smallest subset of $\goth P$ containing $\goth Q$ and
$\mathbf 1$ which is stable by composition.
\medskip

Let $\goth G = \bigsqcup_{n\geq 1}\goth G_n$ be a graded set. The set
$\Free(\goth G)_n$ is the set of planar rooted trees with $n$ leaves and
where any internal node with $n$ children is labeled on $\goth G_n$. The
{\em free operad} on $\goth G$ is obtained by endowing the set
$\Free(\goth G) = \bigsqcup_{n \geq 1} \Free(\goth G)_n$ with the partial
compositions $\circ_i$ where $\pp_1 \circ_i \pp_2$ is the tree obtained
by grafting the $i$th leaf of $\pp_1$ on the root of $\pp_2$. Observe
that $\Free(\goth G)$ contains a copy of $\goth G$ which is the set of
the trees with only one internal node (the root) labeled on $\goth G$;
for simplicity we will identify it with $\goth G$. Moreover,
$\Free(\goth G)$ is clearly generated by $\goth G$. The universality
means that for any map $\varphi: \goth G \rightarrow \goth P$ there
exists a unique operad morphism $\phi: \Free(\goth G) \rightarrow\goth P$
such that $\phi(\mathbf g) = \varphi(\mathbf g)$ for each
$\mathbf g\in \goth G$.
\medskip

A graded equivalence relation $\equiv$ on $\goth P$ is an
{\em operad congruence} if for all $\pp_1,\pp_2,\pp'_1,\pp'_2 \in \goth P$,
$\pp_1\equiv\pp'_1$, and $\pp_2\equiv\pp'_2$ imply
$\pp_1\circ_i\pp_2\equiv \pp'_1\circ_i\pp'_2$. The set $\goth P/_\equiv$
is then naturally endowed with a structure of operad, called
{\em quotient operad}. Note that if $\phi:\goth P\rightarrow \goth P'$
is a surjective morphism of operads then the equivalence defined by
$\pp_1\equiv\pp_1'$ if and only if $\phi(\pp_1)=\phi(\pp_1)$ is an operad
congruence.
\medskip

The {\em Hadamard product} $\Had(\goth P,\goth P')$ of two operads
$(\goth P,\circ)$ and $(\goth P',\circ')$ is the operad defined as
follows. The elements of arity $n$ of $\Had(\goth P,\goth P')$ are pairs
$(\pp, \pp')$ where $\pp \in \goth P_n$ and $\pp' \in \goth P'_n$ and
its partial compositions are defined by
\begin{equation}
    (\pp_1,\pp'_1)\circ_i (\pp_2, \pp'_2):=
    (\pp_1\circ_i\pp_2, \pp'_1\circ_i'\pp'_2),
\end{equation}
for all $(\pp_1,\pp'_1) \in \Had(\goth P,\goth P')_n$,
$(\pp_2, \pp'_2) \in \Had(\goth P,\goth P')_m$, and $1 \leq i \leq n$.
\medskip

Consider a set $\mathbf S$ together with a left action of an operad
$\goth P$. That is, each $\pp \in\goth P_n$ defines a map
\begin{equation}
    \pp: \mathbf S^n \rightarrow\mathbf S.
\end{equation}
We say that $\mathbf S$ is a $\goth P$-{\em module} if the action of
$\goth P$ is compatible with the composition in the following sense. For
any $\pp_1 \in \goth P_n$, $\pp_2 \in \goth P_m$, $1\leq i\leq n$, and
$s_1,\dots, s_{n+m-1} \in\mathbf S$, one has
\begin{equation}
    \pp_1(s_1,\dots,s_{i-1},
    \pp_2(s_i,\dots,s_{i+m-1}),
    s_{i+m},\dots, s_{n+m-1})
    = (\pp_1 \circ_i \pp_2)(s_1,\dots, s_{n+m-1}).
\end{equation}
Furthermore, if for each $n\geq1$ and $\pp \neq \pp'\in\goth P_n$ there
exist $s_1,\dots,s_n\in \mathbf S$ such that
$\pp(s_1,\dots,s_n)\neq \pp'(s_1,\dots,s_n)$, we say that the module
$\mathbf S$ is {\em faithful}.
\medskip

\subsection{Multi-tildes and related operads}
In~\cite{LMN12}, we have defined several operads. Let us recall briefly
the main constructions. First we defined the {\em operad of multi-tildes}
$\mathcal T=\bigsqcup_n \mathcal T_n$. A multi-tilde of $\mathcal T_n$
is a subset of $\{(x,y):1\leq x\leq y\leq n\}\subset \N^2$. Note that $\bigsqcup_n$
means that sets belonging in two different graded components
$\mathcal T_n$ and $\mathcal T_m$ are considered as different operators.
Let $i\geq 0$ and $n\geq 1$,  for any pair $(x,y)$ of positive integers, we define
\begin{equation}
    \Shift{i,n}{x,y} :=
    \begin{cases}
        (x,y)&\mbox{ if } y \leq i - 1,\\
        (x,y+n-1)&\mbox{ if }x\leq i\leq y,\\
        (x+n-1,x+n-1)&\mbox{ otherwise}.
    \end{cases}
\end{equation}
The actions of the  operators are extended to sets $E$ of pairs of
positive integers by
\begin{equation}
    \Shift{i,n}E:=\{\Shift{i,n}{x,y}:(x,y)\in E\}.
\end{equation}
\medskip

We have shown the following result:
\begin{theorem}[\!\cite{LMN12}]\label{MToperade}
    The set $\mathcal T$ endowed with the partial compositions
    \begin{equation}
        \circ_i:\left\{
        \begin{array}{rcl}
        \mathcal T_n\times\mathcal T_m&\rightarrow&\mathcal T_{n+m-1}\\
        T_1\circ_i T_2&\mapsto&\Shift{i,m}{T_1}\cup\Shift{0,i}{T_2},
        \end{array}
        \right.
    \end{equation}
    is an operad.
\end{theorem}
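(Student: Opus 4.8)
The plan is to check the two operad axioms, identity and associativity, directly from the formula $T_1\circ_i T_2=\Shift{n,i}{T_1}\cup\Dec{i-1}{T_2}$. The whole argument rests on the observation that $\Shift{n,k}{\cdot}$ and $\Dec{k}{\cdot}$ are first defined on a single pair and then extended to a multi-tilde by taking images; consequently both operators send a union to the union of the images, so it will be enough to establish a handful of identities between these elementary operators applied to one pair $(x,y)$.

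First I would treat the identity. A multi-tilde of $\mathcal{T}_1$ is a subset of $\{(1,1)\}$, and I claim $\mathbf 1:=\emptyset\in\mathcal{T}_1$ works. When $n=1$ one has $n-1=0$, so the three clauses defining $\Shift{1,i}{x,y}$ all return $(x,y)$, i.e. $\Shift{1,i}{\cdot}$ is the identity on pairs; likewise $\Dec{0}{\cdot}$ is the identity. Hence for $T\in\mathcal{T}_m$,
\[
T\circ_i\mathbf 1=\Shift{1,i}{T}\cup\Dec{i-1}{\emptyset}=T\cup\emptyset=T,
\]
and for $T\in\mathcal{T}_n$,
\[
\mathbf 1\circ_1 T=\Shift{n,1}{\emptyset}\cup\Dec{0}{T}=\emptyset\cup T=T,
\]
which is the required neutrality.

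For associativity, fix $T_1\in\mathcal{T}_m$, $T_2\in\mathcal{T}_n$, $T_3\in\mathcal{T}_q$. Expanding both sides of each rule and distributing the shift operators over the unions, each side splits into three blocks carrying respectively the pairs of $T_1$, $T_2$ and $T_3$; since the operators act pairwise it suffices to match the blocks, i.e. to prove, for every pair $(x,y)$ in the relevant range, the operator identities
\begin{gather*}
\Shift{q,j}{\Shift{n,i}{x,y}}=\Shift{n,i+q-1}{\Shift{q,j}{x,y}},\\
\Shift{q,j}{\Dec{i-1}{x,y}}=\Dec{i+q-2}{x,y},\qquad \Shift{n,i+q-1}{\Dec{j-1}{x,y}}=\Dec{j-1}{x,y}
\end{gather*}
for \emph{Associativity 1} (with $j<i$), and
\begin{gather*}
\Shift{q,i+j-1}{\Shift{n,i}{x,y}}=\Shift{n+q-1,i}{x,y},\\
\Dec{i-1}{\Shift{q,j}{x,y}}=\Shift{q,i+j-1}{\Dec{i-1}{x,y}},\qquad \Dec{i-1}{\Dec{j-1}{x,y}}=\Dec{i+j-2}{x,y}
\end{gather*}
for \emph{Associativity 2} (with $j\leq n$). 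Each identity reduces to a finite case analysis on the position of $x$ and $y$ relative to the thresholds appearing in the definition of $\Shift{n,k}{\cdot}$. The purely additive identity $\Dec{i-1}{\Dec{j-1}{x,y}}=\Dec{i+j-2}{x,y}$ is immediate, and the ones mixing $\Shift{n,k}{\cdot}$ with $\Dec{k}{\cdot}$ follow because the translation drives the pair entirely to one side of the insertion point, after which the shift either fixes it or translates it uniformly; for instance, when $j<i$ the pair $\Dec{i-1}{x,y}$ has both coordinates $>j$, so $\Shift{q,j}{\cdot}$ acts on it as $\Dec{q-1}{\cdot}$.

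The step I expect to be the main obstacle is the pair of shift--shift relations, namely $\Shift{q,j}{\Shift{n,i}{x,y}}=\Shift{n,i+q-1}{\Shift{q,j}{x,y}}$ for $j<i$ and $\Shift{q,i+j-1}{\Shift{n,i}{x,y}}=\Shift{n+q-1,i}{x,y}$. These become transparent once $\Shift{n,k}{\cdot}$ is read as the operation that inserts $n-1$ new positions at slot $k$: two insertions at distinct slots $j<i$ commute, but inserting $q-1$ positions at $j$ first pushes slot $i$ to $i+q-1$, which accounts for the index shift in \emph{Associativity 1}; and inserting $q-1$ positions at a slot $i+j-1$ lying inside the block previously created at slot $i$ (here $j\leq n$ guarantees $i\leq i+j-1\leq i+n-1$) amounts to a single insertion of $(n-1)+(q-1)=n+q-2$ positions at slot $i$, which is exactly $\Shift{n+q-1,i}{\cdot}$ and yields \emph{Associativity 2}. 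Only the matching of the three case-boundaries of $\Shift{n,k}{\cdot}$ through these composites needs care; once they are aligned the equalities hold on every pair, so both associativity rules follow. Together with the identity this proves that $(\mathcal{T},\circ)$ is an operad.
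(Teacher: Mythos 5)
Your proof is correct, but it takes a genuinely different route from the paper's. You verify the operad axioms for $\mathcal{T}$ directly: the identity $\mathbf 1=\emptyset\in\mathcal T_1$ is the right choice (note that $\{(1,1)\}$ would fail, since $\Shift{n,1}{\{(1,1)\}}=\{(1,n)\}$), your splitting of each associativity rule into three blocks carried respectively by $T_1$, $T_2$, $T_3$ is exactly the right reduction, and the six elementary identities you isolate are all true and do succumb to the case analysis you describe. The paper never argues this way: Theorem~\ref{MToperade} is quoted from \cite{LMN12}, and the derivation given in this paper (Sections~\ref{sec3} and~\ref{Trev}) goes through the precomposition formalism --- one proves once and for all that any morphism from the monoid $\bigbox$, presented by relations (\ref{precompeq1})--(\ref{precompeq4}), into $Hom(\mathcal S,\mathcal S)$ satisfying (\ref{precompeq5})--(\ref{precompeq6}) produces an operad $OP(\circ)$, and then one checks that the multi-tilde shifts $\Shiftnk{i,k}$ obey those four relations, so that $\mathcal T\cong OP(\circ)$. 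Your six identities are precisely unrolled instances of that machinery: your shift--shift commutation for $j<i$ is relation (\ref{precompeq3}), your absorption identity $\Shift{q,i+j-1}{\Shift{n,i}{x,y}}=\Shift{n+q-1,i}{x,y}$ is relation (\ref{precompeq4}), and the identities mixing shifts with translations follow from $\Decnk{k}=\Shiftnk{0,k+1}$ together with (\ref{precompeq1}), (\ref{precompeq3}), (\ref{precompeq4}) and (\ref{precompeq6}). Your approach buys a short, self-contained, elementary proof; the paper's buys reusability, since the same abstract theorem immediately equips $\RAS$, $\ARAS$, $\ARef$ and their quotients with operad structures without repeating any case analysis. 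One small inaccuracy you should repair: your blanket explanation for the mixed identities (``the translation drives the pair entirely to one side of the insertion point'') is valid for $\Shift{q,j}{\Dec{i-1}{x,y}}=\Dec{i+q-2}{x,y}$ and for $\Shift{n,i+q-1}{\Dec{j-1}{x,y}}=\Dec{j-1}{x,y}$, but not for the conjugation identity $\Dec{i-1}{\Shift{q,j}{x,y}}=\Shift{q,i+j-1}{\Dec{i-1}{x,y}}$, where the translated pair may well straddle the slot $i+j-1$; that identity holds instead because translating by $i-1$ transports the insertion slot from $j$ to $i+j-1$, and it is still settled by the three-case analysis you propose, so the gap is only in the wording, not in the method.
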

\medskip

We have also defined the operators
\begin{equation}
    \ShiftDiam{i,n}x:=
    \begin{cases}
        x&\mbox{if } x\leq i,\\
        x+n-1&\mbox{otherwise},
    \end{cases}
\end{equation}
and have extended these respectively to pairs and sets of pairs by
\begin{equation}
\ShiftDiam{i,n}{x,y}=(\ShiftDiam{i,n}x, \ShiftDiam{i,n}y)
\end{equation}
and
\begin{equation}
\ShiftDiam{i,n}E=\{\ShiftDiam{i,n}{x,y}:(x,y)\in E\},
\end{equation}
where $x$ and $y$ are positive integers and $E$ is a set of pairs
of positive integers.
The operad $(\mathcal T,\circ)$ is isomorphic to another operad
$(\RAS,\Diamond)$ whose underlying set is the set $\RAS=\bigsqcup_n \RAS_n$
where $\RAS_n$ denotes the set of Reflexive and Antisymmetric Subrelations
of the natural order $\leq$ on $\{1,\dots,n+1\}$. The partial compositions
of $\RAS$ are defined by
\begin{equation}
    R_1\Diamond_i R_2 :=\ShiftDiam{i,m}{R_1}\cup\ShiftDiam{0,i}{R_2},
\end{equation}
if $R_1\in\RAS_n$ and $R_2\in\RAS_m$. The isomorphism of operads
$\phi : \mathcal T \to \RAS$ satisfies, for all $T\in\mathcal T_n$,
\begin{equation}
    \phi(T) = \{(x,y+1):(x,y)\in T\}\cup\{(x,x):x\in\{1,\dots,n+1\}\}.
\end{equation}
See~\cite{LMN12} for more details.
\medskip

\section{Breaking operads}\label{sec3}
The objective of this section is to introduce new algebraic objects,
namely the {\em precompositions}. The precompositions are a kind of representation of a certain
monoid denoted by $\bigbox$ which can be described in terms of infinite matrices.
 We present here a functor from the
category of precompositions to the category of operads. We shall use
this functor in the sequel to reconstruct some already known operads
and to construct new ones.
\medskip

\subsection{Some monoids of infinite matrices}

We consider the set $\overline M_\infty$  of infinite matrices with a finite number of non-zero diagonals whose entries belong to the boolean semiring $\mathbb B$. A typical element $(a_{ij})_{i,j\in\Z}$ of $\overline M_\infty$ is a finite linear combination of elements $D_{k}^\lambda:=\sum_{i\in \Z}\lambda_iE_{i+k,i}$ where $\lambda=(\lambda_i)_{i\in\Z}$ and $E_{k,\ell}=\left(\delta_{i,k}\delta_{j,\ell}\right)_{i,j\in \Z}$ and $\delta_{i,j}=1$ if $i=j$ and $0$ otherwise is the Kronecker symbol \emph{i.e.}, the matrix $E_{k,\ell}$ has $1$ at the cell $(k,\ell)$ and $0$ elsewhere.
\medskip

Observing that
\begin{equation}\label{DD2D}
D_{k}^\lambda D^{\lambda'}_{k'}=\left(\sum_{i\in\Z}\lambda_iE_{i+k,i}\right)\left(\sum_{i\in\Z}\lambda'_iE_{i+k',i}\right)=
\sum_{i\in\Z}\lambda_{i+k'}\lambda'_iE_{i+k+k',i}=D_{k+k'}^{\lambda\bullet_{k'}\lambda'},
\end{equation}
with $\lambda\bullet_{k'}\lambda'=(\lambda_{i+k'}\lambda'_i)_{i\in\Z}$,
we deduce that $\overline M_\infty$ is stable by product. So
\begin{proposition}
The space $\overline M_\infty$ is an algebra.
\end{proposition}
\def\B{\mathbb B}
\def\C{\mathbb C}
\begin{remark}Notice that  when the entries belong to $\C$ instead of $\B$, the algebraic structure of $\overline M_\infty$ is very rich and has many connexions with the study of infinite Lie algebras (see \emph{e.g.} \cite{Kac}).
\end{remark}

Here, for our purpose, we consider only the structure of monoid; the unit of  $\overline M_\infty$ is  $Id:=D^{(\dots,1,1,1,\dots)}_1=\sum_{i\in\Z} E_{i,i}$. In particular, we define the submonoid $\overline P_\infty$ generated by the matrices \begin{equation}\mathfrak m_{i,n}=\sum_{j\leq i}E_{j,j}+\sum_{  i< j}E_{j+n-1,j}\end{equation} for each $i\in\Z$ and each $n>0$. With these notations we have
\begin{equation}\label{m2id}
\mathfrak m_{i,1}=Id,
\end{equation}
 for any $i$.
\medskip

Let $]-\infty,i]$ be the vector such that $]-\infty,i]_j=1$ if $j\leq i$ and $0$ otherwise, and $]i,\infty[=(1-]-\infty,i]_j)_{j\in\Z}$.  With these notations one has
\begin{equation}\label{m2D}
\mathfrak m_{i,n}=D_1^{]-\infty,i]}+D_{n}^{]i,\infty[}.
\end{equation}
\begin{lemma} The two following identities hold:
\begin{itemize}
\item if $i\leq j$ then
\begin{equation}\label{com_m}
\mathfrak m_{i,n}\mathfrak m_{j,n'}=\mathfrak m_{j+n-1,n'}\mathfrak m_{i,n},
\end{equation}
\item if $0\leq j< n'$ then
\begin{equation}\label{simp_m}
\mathfrak m_{i+j,n}\mathfrak m_{i,n'}=\mathfrak m_{i,n+n'-1}.
\end{equation}
\end{itemize}
\end{lemma}
\begin{proof}
From (\ref{m2D}), one has
\[\begin{array}{rcl}
\mathfrak m_{i,n}\mathfrak m_{j,n'}&=&
\left(D_0^{]-\infty,i[}+D_{-1}^{]i,\infty[}\right)\left(D_0^{]-\infty,j]}+D_{n'-1}^{]j,\infty[}\right)
\\&=&D_0^{]-\infty,i]}D_0^{]-\infty,j]}
+D_0^{]-\infty,i]}D_{n'-1}^{]j,\infty[}
+D_{n-1}^{[i,\infty]}D_0^{]-\infty,j]}+
D_{n-1}^{[i,\infty]}D_{n'-1}^{]j,\infty[}.\end{array}
\]

But  $i\leq j$ implies $]-\infty,i]\bullet_0]-\infty,j]=]-\infty,i]$, $]-\infty,i]\bullet_{n'-1}]j,\infty[=[\dots,0,0,0,\dots]$, $]i,\infty[\bullet_{0} ]-\infty,j]= ]-\infty,j+n-1]\bullet_{n-1} ]i,\infty[$ and $
]i,\infty[\bullet_{n'-1}]j,\infty[=]j,\infty[$.
Hence,  (\ref{DD2D}) implies
\[
\mathfrak m_{i,n}\mathfrak m_{j,n'}=D_0^{]-\infty,i]}+D_{n-1}^{]-\infty,j+n-1]\bullet_{n-1} ]i,\infty[}+
D_{n+n'-2}^{]j,\infty[}=\mathfrak m_{j+n-1,n'}\mathfrak m_{i,n}.\]
This proves formula (\ref{com_m}).

Now, from (\ref{m2D}), we obtain
\[\begin{array}{rcl}
\mathfrak m_{i+j,n}\mathfrak m_{i,n'}&=&(D_0^{]-\infty,i+j]}+D_{n-1}^{]i+j,\infty[})(D_0^{]-\infty,i]}+D_{n'-1}^{]i,\infty[})\\&=&D_0^{]-\infty,i+j]}D_0^{]-\infty,i]}+D_0^{]-\infty,i+j]}D_{n'-1}^{]i,\infty[}+
D_{n-1}^{]i+j,\infty[}D_0^{]-\infty,i]}+D_{n-1}^{]i+j,\infty[}D_{n'-1}^{]i,\infty[}
\end{array}\]
But, since $0\leq j< n'$, one has $]-\infty,i+j]\bullet_0]-\infty,i]=]-\infty,i]$, $]-\infty,i+j]\bullet_{n'-1}]i,\infty[=]i+j,\infty[\bullet_0]-\infty,i]=[\dots,0,0,0,\dots,]$  and $]i+j,\infty[\bullet_{n'-1}]i,\infty[=]i,\infty[$.
Hence,  (\ref{DD2D}) allows us to recover (\ref{simp_m}):
\[
\mathfrak m_{i+j,n}\mathfrak m_{i,n'}=D_0^{]-\infty,i]}+D_{n+n'-2}^{]i,\infty[}=\mathfrak m_{i,n+n'-1}.
\]
\end{proof}

\begin{proposition} (Presentation of $\overline P_\infty$)\\
The monoid $\overline P_\infty$ is isomorphic to the monoid $\overline\bigbox$ generated by the symbols $\{\ShiftBoxnk{i,n}:i\in\Z, n\geq 1\}$ and the relations
\begin{eqnarray}
\ShiftBoxnk{i,1}=\mathbf 1_{\overline\bigbox}\ \mbox{ for any }i\in\Z.\\
\ShiftBoxnk{i,n}\ShiftBoxnk{j,n'} = \ShiftBoxnk{j+n-1,n'}\ShiftBoxnk{i,n}\ \mbox{ if }i\leq j\label{cond2},\\
\ShiftBoxnk{i+j,n}\ShiftBoxnk{i,n'}=\ShiftBoxnk{i,n+n'-1}\ \mbox{ if }0\leq j<n'\label{cond3}.
\end{eqnarray}
\end{proposition}
\begin{proof}
First let us prove that the map $\varphi$ sending $\ShiftBoxnk{i,n}$ to $\mathfrak m_{i,n}$ can be extended as a morphism of monoids from $\overline\bigbox$ to $\overline P_\infty$. It suffices to show that
$\varphi(\ShiftBoxnk{i,1})=Id$ for any $i\in\Z$, $\varphi(\ShiftBoxnk{i,n})\varphi(\ShiftBoxnk{j,n'})= \varphi(\ShiftBoxnk{j+n-1,n'})\varphi(\ShiftBoxnk{i,n})$ if $i\leq j$ and $\varphi(\ShiftBoxnk{i+j,n})\varphi(\ShiftBoxnk{i,n'})=\varphi(\ShiftBoxnk{i,n+n'-1})$ when $0\leq j<n'$. These equalities are respectively the consequences of (\ref{m2id}), (\ref{com_m}), and (\ref{simp_m}). Hence, $\varphi$ is extended to an into morphism of monoids, called also $\varphi$. It remains to prove that $\varphi$ is into.

Using (\ref{cond2}) and (\ref{cond3}), any element of $\overline\bigbox$ can be written as
\[
\ShiftBoxnk{i_0,n_0}\cdots \ShiftBoxnk{i_\ell,n_\ell}
\]
for some $i_0<\cdots<n_\ell$ and $n_0,\dots,n_\ell>1$.
Therefore, since  $\varphi$ is a morphism, any element of $\overline P_\infty$ can be written into the form $ \mathfrak m_{i_0,n_0}\cdots \mathfrak m_{i_\ell,n_\ell}$.
Furthermore, we observe that
\[\begin{array}{rcl}
 \mathfrak m_{i_0,n_0}\cdots \mathfrak m_{i_\ell,n_\ell}&=&\displaystyle
\sum_{j\leq i_0}E_{j,j}+\sum_{i_0<j\leq i_1}E_{j+n_0-1,j}+\sum_{i_1< j\leq i_2}E_{j+n_0+n_1-1,j}
+\cdots\\&&\displaystyle+ \sum_{i_{\ell-1}< j\leq i_\ell}E_{j+n_0+\cdots+n_{\ell-1}-\ell,j}+\sum_{i_\ell< j}E_{j+n_0+\cdots+n_\ell-\ell-1,j}.\end{array}
\]
As a consequence, it is easy to check that the factorization is unique .
Hence, since the image of  $\ShiftBoxnk{i_0,n_0}\cdots \ShiftBoxnk{i_\ell,n_\ell}$ by $\varphi$ is $\mathfrak m_{i_0,n_0}\cdots \mathfrak m_{i_\ell,n_\ell}$, each element of $\overline P_\infty$ admits a unique preimage and so $\varphi$ is injective.
It follows that the $\varphi$ is an isomorphism.
\end{proof}

Now, let us consider the algebra $M_\infty$ of the matrices $(a_{ij})_{i,j\in\N\setminus\{0\}}$ with a finite number of non-zero diagonals. Let $\pi: \overline M_\infty\longrightarrow M_\infty$ be the projection sending
$(a_{ij})_{i,j\in\Z}$ to $(a_{ij})_{i,j\in\N\setminus\{0\}}$ and $\mathfrak m_{i,n}=\pi(\mathfrak n_{i,n})=\sum_{0<j\leq i}E_{j,j}+\sum_{j > i,j>0}E_{j+n-1,j}$. Remarking that $\pi(\mathfrak m_{i,n}+\mathfrak m_{j,n'})=\mathfrak n_{i,n}+\mathfrak n_{j,n'}$, the restriction of $\pi$ to $\overline P_\infty$ is a morphism of monoid from $\overline P_\infty$ to the submonoid $P_\infty$ of $M_\infty$ generated by the matrices $\mathfrak n_{i,n}$. We notice also that $\mathfrak n_{-i,n}=\sum_{j\geq 1}E_{j+n-1,j}=\mathfrak n_{0,n}$ for any $i>0$.
Furthermore, we have
\begin{proposition}
The monoid $P_\infty$ is isomorphic to the quotient $\bigbox$ of the monoid $\overline \bigbox$ by the relations $\ShiftBoxnk{-i,n}=\ShiftBoxnk{0,n}$ for any $i>0$. That is the monoid defined by generators
$\{\ShiftBoxnk{i,n}:i\in\Z, n \geq 1\}$ and relations:
\begin{eqnarray}
    \ShiftBoxnk{i,n}=\ShiftBoxnk{0,n}
        & \mbox{ for any }i<0,\label{precompeq1}\\
    \ShiftBoxnk{i,1}=\ShiftBoxnk{0,1}=\mathbf 1_{\bigbox}
        & \mbox{ for any }i, \label{precompeq2}\\
    \ShiftBoxnk{i,n}\ShiftBoxnk{j,m} = \ShiftBoxnk{j+n-1,m}\ShiftBoxnk{i,n}
        & \mbox{ if }i\leq j\mbox{ or }i,j\leq 0,\label{precompeq3}\\
    \ShiftBoxnk{i+j,n}\ShiftBoxnk{i,m}=\ShiftBoxnk{i,n+m-1}
        & \mbox{ if }0\leq j<m.\label{precompeq4}
\end{eqnarray}

\end{proposition}

\begin{remark}\label{PK}
We have explained the construction when the entries are taken in $\mathbb B$. One can make a similar construction for any semiring $\mathbb K$ and obtain monoids $\overline{P}_\infty(\mathbb K)$ and
$P_\infty(\mathbb K)$. In all the case, the monoid $\overline{P}_\infty(\mathbb K)$ is isomorphic to $\overline{P}_\infty$ and the monoid $P_\infty(\mathbb K)$ is isomorphic to $P_\infty$.
\end{remark}
\subsection{Precompositions}
Let $(\mathcal S,\oplus)$ be a commutative monoid endowed with a filtration $\mathcal S=\bigcup_{n\geq 1}\mathcal S_n$  with
\begin{equation}
    \mathcal S_1\subset \mathcal S_2 \subset \cdots \subset \mathcal S_n
    \subset \cdots
\end{equation}
and such that  each $\mathcal S_n$ is a submonoid is $\mathcal S$. We will denote by $\mathbf 0_{\mathcal S}$ the unit of $\mathcal S$.
\medskip

A {\em precomposition} is a monoid morphism
$\circ:\bigbox\rightarrow \Hom(\mathcal S,\mathcal S)$ satisfying
\begin{eqnarray}
    \circ(\ShiftBoxnk{i,n}):\mathcal S_m\mathop\to
        \mathcal S_{n+m-1}, \label{precompeq5}\\
    \circ(\ShiftBoxnk{i,n})|_{\mathcal S_m}=
        Id_{\mathcal S_m} \mbox{ if }i \geq m + 1, \label{precompeq6}
\end{eqnarray}
where $\ |_{\mathcal S_m}$ denotes the restriction to $\mathcal S_m$.
For simplicity, we denote by $\Shiftnk{i,n}$ the map
$\circ(\ShiftBoxnk{i,n})$. Observe that the maps
$\Shiftnk{i,k}$ have the following intuitive meaning. If $s$ is an
element of $\mathcal{S}_m$, $\Shiftnk{i,n}(s)$ is an element of
$\mathcal{S}_{n + m - 1}$ obtained by inserting in $s$ a gap of length
$n - 1$ at position $i$. Axioms~\eqref{precompeq2}, \eqref{precompeq3},
and~\eqref{precompeq4} can be understood in the light of this
interpretation.

\begin{example}\label{Ex1}\rm
We consider the set $\mathcal V$ of infinite vectors with only a finite numbers of non zero entries. This set, endowed with the sums is a monoid. If $\mathcal V_n$ denotes the set of the vectors $v$ such that $m> n$ implies $v_m=0$, each $\mathcal V_n$ is a submonoid of $\mathcal V$. We define a precomposition  by setting $\Shiftnk{i,n}v=\mathfrak n_{i,n}v$.
For instance,
\[
\mathfrak n_{2,3}={\color{WildStrawberry}E_{1,1}}+{\color{PineGreen}E_{2,2}}+{\color{NavyBlue}\sum_{j\geq 3}E_{j+2,j}}=\left[
\begin{array}{cccccccc}
\color{WildStrawberry}1&0&0&0&0&\cdots&0&\cdots\\
0&\color{PineGreen}1&0&0&0&\cdots&0&\cdots\\
0&0&0&0&0&\cdots&0&\cdots\\
0&0&0&0&0&\cdots&0&\cdots\\
0&0&\color{NavyBlue}1&0&0&\cdots&0&\cdots\\
0  &0  &0  & \color{NavyBlue} 1  &0  &  \cdots  &0  &  \cdots\\
   &\vdots& & &\color{NavyBlue}\ddots&&\vdots &
\end{array}
 \right].
\]
Hence,
\[
\mathfrak n_{2,3}\left[\begin{array}{c}v_1\\v_2\\v_3\\v_4\\v_5\\v_6\\\vdots\end{array}\right]
=\left[\begin{array}{c}v_1\\v_2\\0\\0\\v_3\\v_4\\\vdots\end{array}\right]
\]
Notice that $\mathfrak n_{i,n}v$ for $i\leq 0$ is obtained by shifting down the entries of $v$ of $n-1
$ cells and by replacing the first $n-1$ entries by zero. For instance:
\[
\mathfrak n_{0,3}\left[\begin{array}{c}v_1\\v_2\\v_3\\v_4\\v_5\\v_6\\\vdots\end{array}\right]
=\left[\begin{array}{c}0\\0\\v_1\\v_2\\v_3\\v_4\\\vdots\end{array}\right]
\]
\end{example}
\medskip

\medskip

Let $\circ:\bigbox\rightarrow \Hom(\mathcal S ,\mathcal S)$ and
$\triangleright:\bigbox\rightarrow \Hom(\mathcal S' ,\mathcal S')$ be
two  precompositions. A map $\phi:\mathcal S\rightarrow\mathcal S'$ is a
{\em precomposition morphism} from $\circ$ to $\triangleright$ if
it is a monoid morphism and satisfies
\begin{eqnarray}
    \phi:\mathcal S_n\rightarrow \mathcal S'_n, \label{morph1}\\
    \ShiftTnk{i,n}(\phi(x))=\phi(\Shiftnk{i,n}(x))\label{morph2}.
\end{eqnarray}
We denote by $\Hom(\circ,\triangleright)$ the set of precomposition
morphisms from $\circ$ to $\triangleright$.
\medskip

It is easy to check that the class $\mathrm{PreComp}$ of precompositions endowed with the
    arrows $\Hom(\circ,\triangleright)$ for each
    $\circ, \triangleright\in\mathrm{PreComp}$ is a category.

Let $\circ$ be a precomposition, We define on $\mathcal S$ the binary operators $\circ^{(n)}_k$ by
\[
s\circ^{(n)}_i s'=\Shiftnk{i,n}s\oplus\Shiftnk{0,i}s'.
\]
We recall that $\oplus$ denotes the binary operation of the monoid $\mathcal S$.
\begin{lemma}
We have
\begin{itemize}
\item  For each  $n,m\geq 1$, $1\leq i<j\leq n$, $s,s''\in \mathcal S$ and $s'\in\mathcal S_m$, we have
\begin{equation}\label{ass1prec}
(s\circ^{(m)}_i s')\circ_{j+m-1}^{(n)}s''=(s\circ_j^{(n)} s'')\circ_i^{(m)}s'.
\end{equation}
\item For each $0< j\leq m$, $0<i\leq n$, and $s,s',s''\in\mathcal S$, we have
\begin{equation}\label{ass2prec}
(s\circ^{(m)}_i s')\circ^{(n)}_{i+j-1}s''=s\circ_{i}^{(n+m-1)}(s'\circ^{(n)}_js'').
\end{equation}
\end{itemize}
\end{lemma}
\begin{proof}
We have
\[
(s\circ_i^{(m)} s')\circ_{j+m-1}^{(n)}s''=
\Shiftnk{j+m-1,n}\Shiftnk{i,m}s\oplus \Shiftnk{j+m-1,n}\Shiftnk{0,i}s'\oplus\Shiftnk{0,j+m-1}s''.
\]
But (\ref{precompeq3}) implies $\Shiftnk{j+m-1,n}\Shiftnk{i,m}=\Shiftnk{i,m}\Shiftnk{j,n}$ and $\Shiftnk{j+m-1,n}\Shiftnk{0,i}=\Shiftnk{0,i}\Shiftnk{j-i+m,n}$. Since $i<j$ and $s'\in\mathcal S_m$, (\ref{morph2}) implies $\Shiftnk{j-i+m,n}s'=s'$. Furthemore by (\ref{precompeq4}), one has
$\Shiftnk{0,j+m-1}s''=\Shiftnk{i,m}\Shiftnk{0,j}s''$. Hence, we deduce
\[
(s\circ_i^{(m)} s')\circ_{j+m-1}^{(n)}s''=\Shiftnk{i,m}\Shiftnk{j,n} s\oplus \Shiftnk{i,m}\Shiftnk{0,j}s''\oplus \Shiftnk{0,i}s'=(s\circ_j^{(n)} s'')\circ_i^{(m)}s'.
\]
This proves  (\ref{ass1prec}).

Now let us prove (\ref{ass2prec}). We have
\[
(s\circ^{(m)}_i s')\circ^{(n)}_{i+j-1}s''=\Shiftnk{i+j-1,n}\Shiftnk{i,m}s\oplus
\Shiftnk{i+j-1,n}\Shiftnk{0,i}s'\oplus \Shiftnk{0,i+j-1}s''.
\]
But from (\ref{precompeq4}), one has $\Shiftnk{i+j-1,n}\Shiftnk{i,m}=\Shiftnk{i,m+n-1}$,
(\ref{precompeq3}) implies  $\Shiftnk{i+j-1,n}\Shiftnk{0,i}=\Shiftnk{0,i}\Shiftnk{j,n}$ and $\Shiftnk{0,i+j-1}=\Shiftnk{0,i}\Shiftnk{0,j}$. Hence,
\[
(s\circ^{(m)}_i s')\circ^{(n)}_{i+j-1}s''=\Shiftnk{i,m+n-1}s\oplus\Shiftnk{0,i}\Shiftnk{j,n}s'\oplus
\Shiftnk{0,i}\Shiftnk{0,j}s''.
\]
This proves  (\ref{ass2prec}).
\end{proof}

Observe also that we have
\begin{equation}\label{ass3prec}
\mathbf 0_{\mathcal S}\circ_1^{(n)}s=s\circ_i^{(n)} \mathbf 0_{\mathcal S}=s
\end{equation}
for each $i\geq 0,\ n\geq 1$, and $s\in\mathcal S$.

\medskip

\subsection{From precompositions to operads}
Let us consider a precomposition
$\circ:\bigbox\rightarrow \Hom(\mathcal S,\mathcal S)$. 
From the commutative monoid
$\mathcal S$, we define
\begin{equation}
    \mathbb S_n:=\left\{a_s^{(n)}:s\in \mathcal S_n\right\},
\end{equation}
and $\mathbb S:=\bigsqcup_{n \geq 1}\mathbb S_n$.
\begin{claim}
Each $\mathbb S_n$ is naturally endowed with a structure of monoid. Furthermore, we have
\[
\mathbb S_1\rightarrowtail \mathbb S_{2}\rightarrowtail \cdots \rightarrowtail\mathbb S_n\rightarrowtail\cdots
\]
where the arrows $\rightarrowtail$ denote injective morphisms of monoids.\\
Hence, $\mathcal S$ is nothing else but the inductive limit of the $\mathbb S_n$
\[
\mathcal S=\lim_{\longrightarrow}\mathbb S_n.
\]
\end{claim}
Now for any $1\leq i\leq n$, we define the binary operator
$\circ_i:\mathbb S_n\times \mathbb S_m\rightarrow \mathbb S_{k+m-1}$
by
\begin{equation}
    a_s^{(n)}\circ_ia_{s'}^{(m)}:=a_{s\circ_{i}^{(m)}s'}^{(n+m-1)}.
\end{equation}

\begin{proposition} \label{prop:construction_precomposition}
    The set $\mathbb S$ endowed with the partial compositions $\circ_i$
    is an operad.
\end{proposition}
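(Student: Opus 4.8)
The plan is to exhibit the operad structure directly by checking the two partial-associativity rules and the identity axiom, after reducing the composition to a formula involving only the operators $\Shiftnk{i,k}$ on $\mathcal S$. Since $a_s^{(k)}\circ_i a_{s'}^{(k')}$ is defined only for $1\le i\le k$, the clause $i\le k$ in $\Shift{i,k'}{a_s^{(k)}}$ always holds, so that term equals $\Shiftnk{i,k'}(s)$; and $\Dec{i-1}{a_{s'}^{(k')}}=\Shift{0,i}{a_{s'}^{(k')}}=\Shiftnk{0,i}(s')$ since $0\le k'$. Hence the ``otherwise'' branch is never triggered and
\[
a_s^{(k)}\circ_i a_{s'}^{(k')}=a_{s''}^{(k+k'-1)},\qquad s''=\Shiftnk{i,k'}(s)\oplus\Shiftnk{0,i}(s').
\]
Throughout I use that the elements of $Hom(\mathcal S,\mathcal S)$ are monoid endomorphisms, so each $\Shiftnk{i,k}$ distributes over $\oplus$ and fixes $\mathbf 1_{\mathcal S}$, and that applying the morphism $\circ$ to the defining relations \eqref{precompeq2}, \eqref{precompeq3}, \eqref{precompeq4} of $\bigbox$, together with the stabilization \eqref{precompeq6}, yields identities between compositions of the $\Shiftnk{i,k}$.

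For the unit I take $\mathbf 1:=a_{\mathbf 1_{\mathcal S}}^{(1)}\in\mathbb S_1$. Then $a_s^{(k)}\circ_i\mathbf 1$ has underlying element $\Shiftnk{i,1}(s)\oplus\Shiftnk{0,i}(\mathbf 1_{\mathcal S})=s\oplus\mathbf 1_{\mathcal S}=s$, using $\Shiftnk{i,1}=Id$ from \eqref{precompeq2} and that $\Shiftnk{0,i}$ fixes the unit; and $\mathbf 1\circ_1 a_{s'}^{(k')}$ has underlying element $\Shiftnk{1,k'}(\mathbf 1_{\mathcal S})\oplus\Shiftnk{0,1}(s')=\mathbf 1_{\mathcal S}\oplus s'=s'$, using $\Shiftnk{0,1}=Id$ from \eqref{precompeq2}. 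So both identity relations hold.

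Write $\pp_1=a_s^{(m)}$, $\pp_2=a_t^{(n)}$, $\pp_3=a_u^{(q)}$. For each associativity rule I expand both sides with the reduced composition formula above, push the outer shift inside the $\oplus$ by the endomorphism property, and compare the three resulting summands (carrying $s$, $t$, $u$) using commutativity of $\oplus$. For \emph{Associativity 2} ($1\le i\le m$, $1\le j\le n$) the underlying elements of the two sides are
\[
\Shiftnk{i,n+q-1}(s)\oplus\Shiftnk{0,i}\Shiftnk{j,q}(t)\oplus\Shiftnk{0,i}\Shiftnk{0,j}(u)
\]
and
\[
\Shiftnk{i+j-1,q}\Shiftnk{i,n}(s)\oplus\Shiftnk{i+j-1,q}\Shiftnk{0,i}(t)\oplus\Shiftnk{0,i+j-1}(u),
\]
and the three required equalities $\Shiftnk{i+j-1,q}\Shiftnk{i,n}=\Shiftnk{i,n+q-1}$, $\Shiftnk{0,i}\Shiftnk{j,q}=\Shiftnk{i+j-1,q}\Shiftnk{0,i}$ and $\Shiftnk{0,i}\Shiftnk{0,j}=\Shiftnk{0,i+j-1}$ are exactly the images under $\circ$ of \eqref{precompeq4} (valid since $0\le j-1<n$), of \eqref{precompeq3} (valid since $0\le j$), and of \eqref{precompeq4} (valid since $0\le 0<j$).

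For \emph{Associativity 1} ($1\le j<i\le m$) the underlying elements of the two sides are
\[
\Shiftnk{j,q}\Shiftnk{i,n}(s)\oplus\Shiftnk{j,q}\Shiftnk{0,i}(t)\oplus\Shiftnk{0,j}(u)
\]
and
\[
\Shiftnk{i+q-1,n}\Shiftnk{j,q}(s)\oplus\Shiftnk{0,i+q-1}(t)\oplus\Shiftnk{i+q-1,n}\Shiftnk{0,j}(u).
\]
The $s$-terms coincide by \eqref{precompeq3} (condition $j\le i$) and the $t$-terms by \eqref{precompeq4} (condition $0\le j<i$). The $u$-terms require the only delicate step: one must see that $\Shiftnk{i+q-1,n}$ acts as the identity on $\Shiftnk{0,j}(u)$. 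This is where the stabilization axiom \eqref{precompeq6} enters, since $\Shiftnk{0,j}(u)\in\mathcal S_{q+j-1}$ and $q+j-1<i+q-1$ precisely because $j<i$, the hypothesis of the rule. I expect this interaction between the grading of the intermediate element and the index threshold of \eqref{precompeq6} to be the main point to get right; the rest is index bookkeeping, including the routine checks that at every composition the left shift index stays $\le$ the current arity (so the reduced formula applies) and that $\oplus$ respects the filtration (so the outputs lie in the claimed $\mathcal S_{\cdot}$).
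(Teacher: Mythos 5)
Your proof is correct and takes essentially the same route as the paper's: reduce each partial composition to $\Shiftnk{i,k'}(s)\oplus\Shiftnk{0,i}(s')$, expand both sides of the two associativity rules into three $\oplus$-summands, and match these term by term using the monoid relations together with the stabilization axiom \eqref{precompeq6}. Your two local shortcuts are valid but inessential variants: you dispatch the middle term of Associativity 1 with a single application of \eqref{precompeq4} where the paper chains \eqref{precompeq3}, \eqref{precompeq1} and \eqref{precompeq4}, and you apply \eqref{precompeq6} to the shifted element $\Shiftnk{0,j}(u)\in\mathcal S_{q+j-1}$ (legitimate by \eqref{precompeq5}) where the paper first commutes the operators with \eqref{precompeq3} and then applies \eqref{precompeq6} to $u$ itself.
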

\begin{proof}
From (\ref{ass3prec}), the unit of the structure is $\mathbf 1_{\mathbb S}:=a^{(1)}_{\mathbf 0_{\mathcal S}}$.  The associativity properties follows from (\ref{ass1prec}) and (\ref{ass2prec}).
\end{proof}
\medskip
\begin{example}
\rm Let us consider again the precomposition of Example \ref{Ex1}. The binary operators $\circ^{(m)}_i$ are described in terms of infinite matrices by
\[
v\circ_i^{(m)}v'=\mathfrak n_{i,n}v + \mathfrak n_{0,i}v'=
\left[\begin{array}{c}v_1\\\vdots\\ v_{i-1}\\ v_i+v'_1\\v'_2\\\vdots\\v'_{m}
\\v_{i+1}+v'_{m+1}\\v_{i+2}+v'_{m+2}\\\vdots\end{array}\right].
\]
Remark that each symbol $a_v^{(m)}$ can be identified with a vector $w$ of size $m$ such that $v_i=w_i$ for each $i\leq m$. Hence, the compositions are given by
\[
\left[
\begin{array}{c}v_1\\\vdots\\v_n \end{array}
\right]\circ_i\left[
\begin{array}{c}v'_1\\\vdots\\v'_m \end{array}
\right]=\left[
\begin{array}{c}v_1\\\vdots\\v_{i-1}\\v_i+v'_1\\v'_2\\\vdots\\v'_{m}\\v_{i+1}\\\vdots\\v_n \end{array}
\right].
\]
For instance, we can illustrate (\ref{Associativity1}) by remarking
\[
({\color{NavyBlue}v}\circ_i {\color{WildStrawberry}v'})\circ_{j+m-1}{\color{PineGreen} v''}=\left[
\begin{array}{c}\color{NavyBlue}v_1\\\vdots\\\color{NavyBlue}v_{i-1}\\{\color{NavyBlue}v_i}+\color{WildStrawberry}v'_1\\\color{WildStrawberry}v'_2\\\vdots\\\color{WildStrawberry}v'_{m}\\\color{NavyBlue}v_{i+1}\\\vdots\\\color{NavyBlue}v_{j-1}\\{\color{NavyBlue}v_{j}}+\color{PineGreen}v''_1\\
\color{PineGreen}v''_2\\\vdots\\\color{PineGreen}v''_p\\\color{NavyBlue}v_{j+1}\\\vdots\\ \color{NavyBlue}v_n\end{array}
\right]={\color{NavyBlue}v}\circ_i ({\color{WildStrawberry}v'}\circ_j {\color{PineGreen}v''}).
\]
for $i< j$, $v\in\B^n$, $v'\in\B^m$, and $v''\in\B^p$.
\end{example}
\medskip

We denote by $\OP(\circ)$ the operad $(\mathbb S,\circ_i)$ as defined in
the construction. For any $\phi\in \Hom(\circ,\triangleright)$, we define
\begin{equation}
    \phi^{\OP}:\OP(\circ)\rightarrow \OP(\triangleright)
\end{equation}
by
\begin{equation}
    \phi^{\OP}(a_s^{(n)})=a^{(n)}_{\phi(s)}.
\end{equation}
We check  that the following result holds:
\begin{claim}
    The arrow $\OP:\mathrm{PreComp}\rightarrow \mathrm{Operad}$ which
    associates with each precomposition $\circ$ the operad $\OP(\circ)$ and
    with each homomorphism $\phi\in \Hom(\circ,\triangleright)$ the operad
    morphism  $\phi^{\OP}$ is a functor.
\end{claim}
\medskip

\subsection{Quotients of precompositions\label{sec quot precomp}}
Let $\circ:\bigbox\rightarrow \Hom(\mathcal S,\mathcal S)$ be a
precomposition and $\gamma:\mathcal S\rightarrow\mathcal S$ be an
idempotent (that is, $\gamma^2=\gamma$) monoid morphism sending
$\mathcal S_n$ to $\mathcal S_n$ and satisfying:
$\Shiftnk{i,n}\gamma=\gamma\Shiftnk{i,n}$.
We define $\gamma:\mathbb S\rightarrow\mathbb S$ by
$\gamma a^{(n)}_s :=a^{(n)}_{\gamma s}$.
\medskip

\begin{lemma}\label{propgamma1}
The two following assertions hold:
\begin{enumerate}
\item for each $s\in\mathcal S_n$, $s'\in\mathcal S_m$, and $1\leq i\leq n$,
\begin{equation}
    \gamma\left(\gamma(a^{(n)}_s)\circ_i \gamma(a^{(m)}_{s'})\right)=
    \gamma\left(a^{(n)}_s\circ_i a^{(m)}_{s'}\right),
\end{equation}
\item $\gamma(s_1)=\gamma(s'_1)$ and $\gamma(s_2)=\gamma(s'_2)$ implies
\begin{equation}
    \gamma(a_{s_1}^{(n)}\circ_i a_{s_2}^{(m)})
    =\gamma(a_{s'_1}^{(n)}\circ_i a_{s'_2}^{(m)}).
\end{equation}
\end{enumerate}
\end{lemma}

\begin{proof}
\begin{enumerate}
\item We have $\gamma(a_s^{(n)})\circ_i \gamma(a_{s'}^{(m)})=
a_{\gamma s}^{(n)}\circ_i a_{\gamma s'}^{(m)}=a_{s''}^{(n+m-1)}$ with
$$s''=
\gamma(s)\circ_i^{(m)}\gamma(s')
= \Shiftnk{i,m}({\gamma s})\oplus \Shiftnk{0,i}({\gamma s'})=
\gamma\left(\Shiftnk{i,m}(s)\oplus \Shiftnk{0,i}(s')\right).
$$
Hence
$
\gamma\left(\gamma(a^{(n)}_s)\circ_i \gamma(a^{(m)}_{s'})\right)=a^{(n+m-1)}_{s'''}$ with
\[\begin{array}{rcl}s'''&=&
\gamma\left(\gamma\left(\Shiftnk{i,m}({s})\oplus \Shiftnk{0,i}({s'})\right)\right)
=\gamma\left(\Shiftnk{i,m}({s})\oplus \Shift{0,i}{s'}\right)\\
&=&\gamma\left(\Shift{i,m}{a_{s}^{(k)}}\oplus \Shift{0,i}{a_{s'}^{(m)}}\right)
=s''.
\end{array}
\]
\item Suppose $\gamma(s_1)=\gamma(s')$ and $\gamma(s_2)=\gamma(s'_2)$  then we have
 $$\gamma(a_{s_1}^{(n)}\circ_ia_{s_2}^{(m)})=\gamma\gamma(a_{s_1}^{(n)}\circ_ia_{s_2}^{(m)})= \gamma(a_{\gamma s_1}^{(n)}\circ_ia_{\gamma s_2}^{(m)})=\gamma(a_{\gamma s'_1}^{(n)}\circ_ia_{\gamma s'_2}^{(m)})=\gamma(a_{s'_1}^{(n)}\circ_ia_{s'_2}^{(m)}).
$$
\end{enumerate}
\end{proof}
\medskip

Consider now the equivalence relation $\sim_\gamma$ on $\mathcal{S}$
defined for any $s, s' \in \mathcal{S}$ by $s \sim_\gamma s'$ if and
only if $\gamma(s) = \gamma(s')$. By definition of $\gamma$, $\sim_\gamma$
is a monoid congruence of $\mathcal{S}$ and hence, $\mathcal{S}/_{{\sim_\gamma}}$
is a monoid. Consider also the equivalence relation $\equiv_\gamma$ on
$\OP(\circ)$ defined for any $a_s^{(n)}, a_{s'}^{(n)} \in \OP(\circ)$ by
$a_s^{(n)} \equiv_\gamma a_{s'}^{(n)}$ if and only if $s \sim_\gamma s'$.
Lemma \ref{propgamma1} shows that $\equiv_\gamma$ is actually an
operadic congruence and hence, that $\OP(\circ)/_{\equiv_\gamma}$ is an
operad.
\medskip

Let the precomposition
\begin{equation}
    \odot : \bigbox \rightarrow
    \Hom\left(\mathcal S/_{\sim_\gamma}, \mathcal S/_{\sim_\gamma}\right)
\end{equation}
defined for any $\sim_\gamma$-equivalence class $[s]_{\sim_\gamma}$
by $\Shiftdnk{i,n}\left([s]_{\sim_\gamma}\right):=[\Shiftnk{i,n}(s)]_{\sim_\gamma}$.
We then have
\begin{theorem}
The operads $\OP(\circ)/_{\equiv_\gamma}$ and $\OP(\odot)$ are isomorphic.
\end{theorem}
\begin{proof}
    Let us denote by $\circ^\gamma_i$ the composition map of
    $\OP(\circ)/_{\equiv_\gamma}$. Let the map
    \begin{equation}
        \phi : \OP(\circ)/_{\equiv_\gamma} \to \OP(\odot)
    \end{equation}
    defined for any $\equiv_\gamma$-equivalence class $[a_s^{(n)}]_{\equiv_\gamma}$
    by
    \begin{equation}
        \phi([a_s^{(n)}]_{\equiv_\gamma}) := a_{[s]_{\sim_\gamma}}^{(n)}.
    \end{equation}
    Let us show that $\phi$ is an operad morphism. For that, let
    $[a_s^{(n)}]_{\equiv_\gamma}$ and $[a_{s'}^{(m)}]_{\equiv_\gamma}$
    be two $\equiv_\gamma$ equivalence classes. One has
    \begin{equation} \label{equ:morphisme_gamma_1}
        \phi([a_s^{(n)}]_{\equiv_\gamma} \circ^\gamma_i [a_{s'}^{(m)}]_{\equiv_\gamma})
            \enspace = \enspace
                \phi([a_s^{(n)} \circ_i a_{s'}^{(m)}]_{\equiv_\gamma})
            \enspace = \enspace
                \phi([a_{s''}^{(n + m - 1)}]_{\equiv_\gamma})
            \enspace = \enspace a_{[s'']_{\sim_\gamma}}^{(n + m - 1)},    \end{equation}
    where $s'' := \Shiftnk{i,m}(s)\oplus \Shiftnk{0,i}(s')$.
    We moreover have
    \begin{equation} \label{equ:morphisme_gamma_2}
        \phi([a_s^{(n)}]_{\equiv_\gamma}) \odot_i
                \phi([a_{s'}^{(m)}]_{\equiv_\gamma})
        \enspace = \enspace
            a_{[s]_{\sim_\gamma}}^{(n)} \odot_i a_{[s']_{\sim_\gamma}}^{(m)}
        \enspace = \enspace a_{[s''']_{\sim_\gamma}}^{(n + m - 1)},
    \end{equation}
    where $[s''']_{\sim_\gamma} :=
        \Shiftdnk{i,m}([s]_{\sim_\gamma}) \oplus
        \Shiftdnk{0,i}([s']_{\sim_\gamma})$.
    Now, by using the fact that $\sim_\gamma$ is a monoid congruence,
    one has
    \begin{equation}
    \begin{split}
        [s''']_{\sim_\gamma}
        & = \Shiftdnk{i,m}([s]_{\sim_\gamma})
            \oplus \Shiftdnk{0,i}([s']_{\sim_\gamma}) \\
        & = [\Shiftnk{i,m}(s)]_{\sim_\gamma}
            \oplus [\Shiftnk{0,i}(s')]_{\sim_\gamma} \\
        & = [\Shiftnk{i,m}(s) \oplus \Shiftnk{0,i}(s')]_{\sim_\gamma} \\
        & = [s'']_{\sim_\gamma}.
    \end{split}
    \end{equation}
    This shows that \eqref{equ:morphisme_gamma_1} and
    \eqref{equ:morphisme_gamma_2} are equal and hence, that $\phi$ is
    an operad morphism.
    \smallskip

    Furthermore, the definitions of $\sim_\gamma$ and $\equiv_\gamma$
    imply that $\phi$ is a bijection. Therefore, $\phi$ is an operad
    isomorphism.
\end{proof}
\medskip

\section{Multi-tildes and precompositions}\label{sec4}
In \cite{LMN12}, we investigated several operads allowing to describe the behaviour of the multi-tilde operators. In this section, we show that some of them admit an alternative definition using the notion of precomposition.
\medskip

\subsection{The operad $\mathcal{T}$ revisited\label{Trev}}
For any $m \geq 1$, let $\mathcal{S}^{\mathcal{T}}_m$ be the set of
subsets of ${\{(x,y): 1\leq x\leq y\leq m\}}\subset \N^2$. Noting that
$\mathcal{S}^{\mathcal{T}}_m\subset \mathcal{S}^{\mathcal{T}}_{m+1}$ we
define
$\mathcal{S}^{\mathcal{T}}:=\bigcup_{m\geq 1} \mathcal{S}^{\mathcal{T}}_m$.
Considering the binary operation $\cup$ as a product, the pair
$(\mathcal{S}^{\mathcal{T}},\cup)$ defines a commutative monoid whose
unit is $\mathbf{1}_{\mathcal{S}^{\mathcal{T}}}=
\emptyset\in \mathcal{S}^{\mathcal{T}}_1$. This  monoid is
generated by the set $\{\{(x,y)\}:{1\leq x\leq y}\}$.
\medskip

Now define $\circ:\ \bigbox\ \rightarrow\
\Hom(\mathcal{S}^{\mathcal{T}},\mathcal{S}^{\mathcal{T}})$ by
$
    \circ(\ShiftBoxnk{i,n}):=\Shiftnk{i,n}
$
where each homomorphism $\Shiftnk{i,n}$ is defined by its values on the
generators
\begin{equation}
    \Shiftnk{i,n}(\{(x,y)\}):=
    \begin{cases}
        \{(x,y)\} & \text{ if } y<i,\\
        \{(x,y+n-1)\} & \text{ if } x\leq i\leq y,\\
        \{(x+n-1,y+n-1)\} & \text{ otherwise.}\\
      \end{cases}
\end{equation}

Remark that $\circ$ is a monoid morphism. Indeed,
\begin{enumerate}
\item The set of the homomorphisms $\Shiftnk{i,n}$ generates a submonoid of
$\Hom(\mathcal{S}^{\mathcal{T}},\mathcal{S}^{\mathcal{T}})$ (with  $\mathrm{Id}_{\mathcal{S}^{\mathcal{T}}}$ as unit)
\item By construction,
      $\Shiftnk{i,n}:\ \mathcal{S}^{\mathcal{T}}_m\ \rightarrow\ \mathcal{S}^{\mathcal{T}}_{m+n-1}$ and
      $\Shiftnk{i,n}|_{\mathcal{S}^{\mathcal{T}}_m} =\mathrm{Id}_{\mathcal{S}^{\mathcal{T}}_m}$ if $m<i$.
   \item
The operators $\Shiftnk{i,n}$ satisfy (see \cite{LMN12})    \begin{itemize}
      \item $\Shiftnk{i,n}=\Shiftnk{0,n}$ for each $i<0$,
      \item $\Shiftnk{i,1}=\Shiftnk{0,1}=\mathrm{Id}_{\mathcal{S}^{\mathcal{T}}}$ for each $i$,
      \item $\Shiftnk{i,n}\Shiftnk{j,n'}=\Shiftnk{j+n-1,n'}\Shiftnk{i,n}$ if $i\leq j$ or $i,j\leq 0$,
      \item $\Shiftnk{i+j,n}\Shiftnk{i,n'}=\Shiftnk{i,n+n'-1}$ if $0\leq j<n'$.
    \end{itemize}
    \end{enumerate}
Hence $\circ$ is a precomposition.
More precisely, the operad $\mathcal T$ can be seen as the operad constructed from the precomposition $\circ$:
\begin{proposition}
  The operads $\mathcal{T}$ and  $\mathrm{\OP}(\circ)$ are isomorphic.
\end{proposition}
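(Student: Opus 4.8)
The plan is to exhibit an explicit isomorphism between the operad $\mathcal{T}$ of Theorem~\ref{MToperade} and the operad $\mathrm{OP}(\circ)$ built from the precomposition $\circ$ just defined, and to verify it respects the partial compositions. The natural candidate is the map $\Psi:\mathcal{T}\rightarrow\mathrm{OP}(\circ)$ sending a multi-tilde $T\in\mathcal{T}_n$ to the element $a_T^{(n)}\in\mathbb{S}_n$, where we identify $\mathcal{S}^{\mathcal{T}}_n$ with $\mathcal{T}_n$ (both being subsets of $\{(x,y):1\leq x\leq y\leq n\}$). Since $\mathcal{S}^{\mathcal{T}}_n=2^{\{(x,y):1\leq x\leq y\leq n\}}$ is literally the set of multi-tildes $\mathcal{T}_n$, this map is a grading-preserving bijection by construction, so the only real content is checking that $\Psi$ intertwines the two families of partial compositions.

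First I would unwind both composition laws on a single element. In $\mathcal{T}$, Theorem~\ref{MToperade} gives $T_1\circ_i T_2=\Shift{n,i}{T_1}\cup\Dec{i-1}{T_2}$ for $T_1\in\mathcal{T}_m$, $T_2\in\mathcal{T}_n$. On the $\mathrm{OP}(\circ)$ side, the definition in Section~\ref{sec3} gives $a_{T_1}^{(m)}\circ_i a_{T_2}^{(n)}=a_{T''}^{(m+n-1)}$ with $T''=\Shift{i,n}{a_{T_1}^{(m)}}\oplus\Dec{i-1}{a_{T_2}^{(n)}}$, where $\oplus=\cup$. So the verification reduces to matching, term by term, the two shifting operators against the operators $\Shiftn$ and $\Decn$ of the precomposition. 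The key identity I would establish is that, for $1\leq i\leq m$ and any $E\in\mathcal{S}^{\mathcal{T}}_m$, the precomposition shift $\Shiftnk{i,n}(E)$ coincides with the multi-tilde shift $\Shift{m,i}{E}$ appearing in Theorem~\ref{MToperade}, and that $\Decnk{i-1}=\Shiftnk{0,i}$ on $\mathcal{S}^{\mathcal{T}}$ coincides with the operator $\Decn$ from Theorem~\ref{MToperade}. Both reduce to comparing the piecewise formulas on a single generator $\{(x,y)\}$.

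This is where the main obstacle lies, and it is purely a bookkeeping matter about index conventions: the multi-tilde operators in Section~\ref{sec:comb op lang theor} are written with the arity of the \emph{first} argument and the point $i$ in one order, namely $\Shift{n,i}{\cdot}$ with $y<k$, $x\leq k\leq y$, $\Dec{n-1}{}$ otherwise, whereas the precomposition shift $\Shiftnk{i,k}$ defined in Section~\ref{Trev} uses the cases $y<i$, $x\leq i\leq y$, shift by $k-1$ otherwise. I would carefully align which symbol plays the role of the split point and which the size of the inserted block, checking that the substitution $k\leftrightarrow n$ and the role of $i$ make the three cases (below, straddling, above) agree. Once the generator-level formulas match, extending to arbitrary $E$ is immediate since both operators act elementwise by union, and since $\circ$ and the multi-tilde operators are both determined on generators.

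Finally, I would assemble the pieces: because $\Psi$ is a bijection and both compositions are $\Psi$-images of the same expression $\Shiftnk{i,n}(T_1)\cup\Decnk{i-1}(T_2)$, we get $\Psi(T_1\circ_i T_2)=\Psi(T_1)\circ_i\Psi(T_2)$, so $\Psi$ is an operad morphism. Checking that the identities correspond, $\Psi(\emptyset\in\mathcal{T}_1)=a_{\mathbf{1}_{\mathcal{S}^{\mathcal{T}}}}^{(1)}=\mathbf{1}_{\mathbb{S}}$, completes the verification that $\Psi$ is an operad isomorphism, and hence $\mathcal{T}\cong\mathrm{OP}(\circ)$.
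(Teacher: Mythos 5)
Your proposal is correct and is essentially the paper's own proof: the paper also takes the map $T \mapsto a_T^{(k)}$ from $\mathcal{T}_k$ to $\mathbb{S}_k$ as the isomorphism, stating it in one line, while you additionally spell out the verification that the two composition formulas agree after reconciling the swapped index conventions (Section 2 writes the size of the inserted block first and the position second, Section 4 the reverse). The only blemish is a harmless slip where you write $\Shift{m,i}{E}$ for the multi-tilde shift acting on $E \in \mathcal{S}^{\mathcal{T}}_m$; it should be $\Shift{n,i}{E}$, the first index being the arity of the second operand, exactly as your own earlier display of Theorem~1 has it.
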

\begin{proof}
  The isomorphism is given by the map from $\mathcal{T}_m$ to $\mathbb{S}_m$ sending any element $T$ to $a^{(k)}_T$.
\end{proof}
\medskip

\subsection{The operad $\RAS$ revisited}
In \cite{LMN12}, we considered an operad $\RAS$ on reflexive and antisymmetric relations that are compatible with the natural order on integers (\emph{i.e.}, $(x,y)\in\RAS$ implies $x\leq y$). Since the elements $(x,x)$ do not play any role in the construction, we propose here an alternative construction based on antireflexive and antisymmetric relations.
\medskip

For any $m \geq 1$, let $\mathcal{S}^{\Diamond}_m$ be the set of
subsets of $\{(x,y): 1\leq x< y\leq m+1\}\subset\N^2$.
By construction we have  $\mathcal{S}^{\Diamond}_n\subset \mathcal{S}^{\Diamond}_{m+1}$.
Endowed with the binary operation $\cup$, the set $\mathcal{S}^{\Diamond}:=\bigcup_{m\geq 1} \mathcal{S}^{\Diamond}_m$ is a    commutative monoid generated by $\{\{(x,y)\}_{1\leq x< y}\}$.
\medskip

Let us define $\diamond:\ \bigbox\ \rightarrow\
\Hom(\mathcal{S}^{\Diamond},\mathcal{S}^{\Diamond})$ by
$\diamond(\ShiftBoxnk{i,n}):=\ShiftDiamnk{i,n}$
with
\begin{equation}
    \ShiftDiamnk{i,k}(\{(x,y)\}):=
    \begin{cases}
        \{(x,y)\} & \text{ if } y\leq i,\\
        \{(x,y+n-1)\} & \text{ if } x\leq i< y,\\
        \{(x+n-1,y+n-1)\} & \text{ otherwise.}\\
      \end{cases}
\end{equation}
Figure \ref{FigShiftSimple} illustrates the action of the operator $\ShiftDiamnk{i,n}$  as an operation cutting one triangle representing the set of all the  couples $(x,y)$ with $1\leq x<y\leq m+1$ into two triangles
and a rectangle and putting them back into a larger triangle.
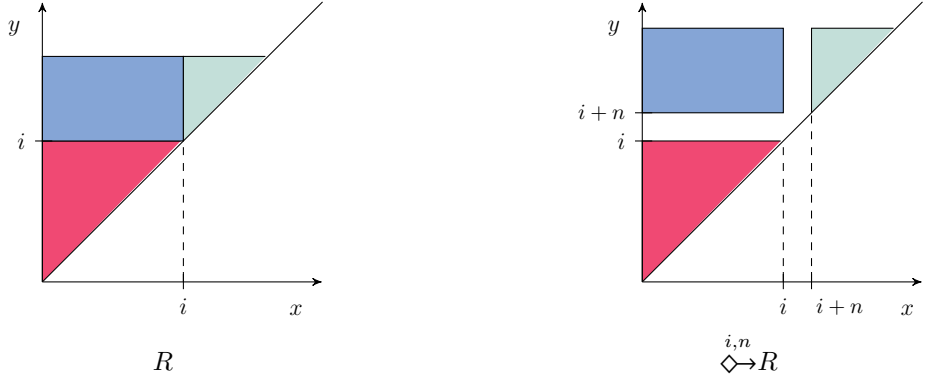
\begin{figure}[h]
\begin{center}
\begin{tabular}{c@{\ \ \ \ \ \ \ \ \ \ \ \ \ \ \ \ \ \ \ \ \ \ \ \ \ \ \ \ }c}

   \begin{tikzpicture}[scale=0.75,node distance=2.5cm,bend angle=30,transform shape,scale=1]
				\filldraw[fill=WildStrawberry!90] (0,0) --(0,2.5) -- (2.5,2.5);
				\filldraw[fill=NavyBlue!50] (0,2.5) rectangle (2.5,4);
				\filldraw[fill=PineGreen!20] (2.5,2.5) -- (2.5,4) -- (4,4);
				\draw[->] (0,0) -- (0,5);
				\draw[->] (0,0) -- (5,0);
				\draw (0,0) -- (5,5);
				\draw[dashed] (2.5,0) -- (2.5,2.5);
				\draw (2.5,5pt) -- (2.5,-5pt) node[below] {\Large $i$};
				\draw (5pt,2.5) -- (-5pt,2.5) node[left] {\Large $i$};
				\node at (-0.5,4.5) [name=y]{\Large $y$};
				\node at (4.5,-0.5) [name=x]{\Large $x$};
      \end{tikzpicture}
   &
   \begin{tikzpicture}[scale=0.75,node distance=2.5cm,bend angle=30,transform shape,scale=1]
				\filldraw[fill=WildStrawberry!90] (0,0) --(0,2.5) -- (2.5,2.5);
				\filldraw[fill=NavyBlue!50] (0,3) rectangle (2.5,4.5);
				\filldraw[fill=PineGreen!20] (3,3) -- (3,4.5) -- (4.5,4.5);
				\draw[->] (0,0) -- (0,5);
				\draw[->] (0,0) -- (5,0);
				\draw (0,0) -- (5,5);
				\draw[dashed] (2.5,0) -- (2.5,2.5);
				\draw[dashed] (3,0) -- (3,3);
				\draw (2.5,5pt) -- (2.5,-5pt) node[below] {\Large $i$};
				\draw (5pt,2.5) -- (-5pt,2.5) node[left] {\Large $i$};
				\draw (3,5pt) -- (3,-5pt) node[below,xshift=0.5cm] {\large $i+n$};
				\draw (5pt,3) -- (-5pt,3) node[left] {\large $i+n$};
				\node at (-0.5,4.5) [name=y]{\Large $y$};
				\node at (4.7,-0.5) [name=x]{\Large $x$};
      \end{tikzpicture}
   \\
$R$&$\ShiftDiamnk{i,n}R$
\end{tabular}
\caption{Action of $\RAS$ on pairs.\label{FigShiftSimple}}
\end{center}
\end{figure}

Similarly to Section \ref{Trev}, we consider the submonoid of
$\Hom(\mathcal{S}^{\Diamond},\mathcal{S}^{\Diamond})$ generated by the
elements $\ShiftDiamnk{i,n}$ . We have
$\ShiftDiamnk{i,n}:\ \mathcal{S}^{\Diamond}_m\ \rightarrow\
\mathcal{S}^{\Diamond}_{n+n-1}$ and
$\ShiftDiamnk{i,n}|_{\mathcal{S}^{\Diamond}_m}
=\mathrm{Id}_{\mathcal{S}^{\Diamond}_m}$ when $m<i$. Furthermore, the
elements $\ShiftDiamnk{i,n}$ satisfy the properties
\begin{itemize}
    \item $\ShiftDiamnk{i,n}=\ShiftDiamnk{0,n}$ for each $i<0$,
    \item
    $\ShiftDiamnk{i,1}=\ShiftDiamnk{0,1}=\mathrm{Id}_{\mathcal{S}^{\Diamond}}$
    for each $i$
    \item
    $\ShiftDiamnk{i,n}\ShiftDiamnk{j,n'}=\ShiftDiamnk{j+n-1,n'}\ShiftDiamnk{i,n}$
    if $i\leq j$ or $i,j\leq 0$
    \item $\ShiftDiamnk{i+j,n}\ShiftDiamnk{i,n'}=\ShiftDiamnk{i,n+n'-1}$
    if  $0\leq j<n'$.
\end{itemize}
The map $\diamond$ is a monoid morphism  and so a precomposition.
We set $\mathrm{ARAS}:=\mathrm{\OP}(\diamond)=(\mathbb{S}^\Diamond,\diamond)$. The operad $\mathrm{ARAS}$ is an alternative  construction for the operad $\RAS$ as shown by:

\begin{proposition}
    The operads $\RAS$ and  $\mathrm{ARAS}$ are isomorphic.
\end{proposition}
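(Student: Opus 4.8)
The plan is to produce an explicit grading-preserving bijection between the underlying sets and then check that it intertwines the partial compositions; the whole point, already signalled in the text by the remark that the pairs $(x,x)$ play no role, is that the diagonal is forced into every element of $\RAS$ by reflexivity and is \emph{inert} under composition, so it may be erased. Write $\Delta_n := \{(x,x) : 1\leq x\leq n+1\}$. Every $R\in\RAS_n$ is a reflexive antisymmetric subrelation of $\leq$, hence splits uniquely as $R = \Delta_n\sqcup(R\setminus\Delta_n)$ with $R\setminus\Delta_n\subseteq\{(x,y):1\leq x<y\leq n+1\}=\mathcal{S}^\Diamond_n$. I therefore define
\[
\Psi:\RAS_n\to\ARAS_n,\qquad \Psi(R):=a^{(n)}_{R\setminus\Delta_n},
\]
with inverse $a^{(n)}_s\mapsto s\cup\Delta_n$. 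This is a grading-preserving bijection, and it sends the unit $\Delta_1\in\RAS_1$ (the image of the empty multi-tilde under the $\mathcal T\to\RAS$ isomorphism) to $a^{(1)}_\emptyset=\mathbf 1_{\ARAS}$.

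Next I would unfold the composition of $\ARAS=OP(\diamond)$ using the general formula for the functor $OP$. For $s\in\mathcal{S}^\Diamond_m$, $s'\in\mathcal{S}^\Diamond_n$ and $1\leq i\leq m$, the definition of $\circ_i$ on $\mathbb S^\Diamond$ together with $\Shift{i,n}{a_s^{(m)}}=\ShiftDiamnk{i,n}(s)$ (valid since $i\leq m$) and $\Dec{i-1}{a_{s'}^{(n)}}=\Shift{0,i}{a_{s'}^{(n)}}=\ShiftDiamnk{0,i}(s')$ gives
\[
a^{(m)}_s\diamond_i a^{(n)}_{s'}=a^{(m+n-1)}_{s''},\qquad s''=\ShiftDiamnk{i,n}(s)\cup\ShiftDiamnk{0,i}(s').
\]

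The key comparison step is to match the shift operators of the two constructions. Reading off the piecewise definitions, the operator $\ShiftDiamnk{i,n}$ of the present section and the operator $\ShiftDiam{n,i}{\,\cdot\,}$ used to define the composition of $\RAS$ in Section~\ref{sec:comb op lang theor} agree on every pair $(x,y)$ with $x\leq y$ (the two conventions differ only by transposing the two indices), and likewise $\ShiftDiamnk{0,i}$ coincides with $\Dec{i-1}{\,\cdot\,}$ on such pairs because $x,y\geq 1>0$. The essential observation, which I would isolate as a small lemma, is that both operators send diagonal pairs to diagonal pairs and strict pairs to strict pairs; hence they commute with the splitting $R=\Delta\sqcup(R\setminus\Delta)$, giving $\ShiftDiam{n,i}{R}\setminus\Delta=\ShiftDiamnk{i,n}(R\setminus\Delta)$ and $\Dec{i-1}{R}\setminus\Delta=\ShiftDiamnk{0,i}(R\setminus\Delta)$. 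Combining this with $R_1\Diamond_i R_2=\ShiftDiam{n,i}{R_1}\cup\Dec{i-1}{R_2}$ and with the fact that $R_1\Diamond_i R_2\in\RAS_{m+n-1}$ is reflexive (so its diagonal part is exactly $\Delta_{m+n-1}$), I compute
\[
(R_1\Diamond_i R_2)\setminus\Delta=\ShiftDiamnk{i,n}(R_1\setminus\Delta)\cup\ShiftDiamnk{0,i}(R_2\setminus\Delta)=s'',
\]
which is precisely the $\mathcal{S}^\Diamond$-label of $\Psi(R_1)\diamond_i\Psi(R_2)$. Thus $\Psi(R_1\Diamond_i R_2)=\Psi(R_1)\diamond_i\Psi(R_2)$, and since $\Psi$ is a grading-preserving bijection carrying unit to unit, it is an operad isomorphism.

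I expect the only real obstacle to be the bookkeeping. One must keep straight the two distinct index conventions for $\Diamondright$ (arity-first in Section~\ref{sec:comb op lang theor}, position-first in the $\ARAS$ construction), and one must verify the diagonal/strict dichotomy survives at the boundary cases $y=i$ and $x=i$ of the piecewise definitions. Once the inertness-of-the-diagonal lemma is established, everything else is a direct substitution.
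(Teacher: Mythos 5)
Your proof is correct and uses exactly the paper's isomorphism: the map $R\mapsto a^{(k)}_{R\setminus\Delta}$ deleting the diagonal, with inverse adjoining it back. The paper merely states this map and omits all verification, so your careful checks (the grading-preserving bijection, unit-to-unit, the inertness of the diagonal under the shift operators, and the reconciliation of the two index conventions for $\Diamondright$) are precisely the details the paper leaves implicit.
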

\begin{proof}
  The isomorphism is given by the map from $\RAS_n$ to
  $\mathbb{S}^\Diamond_m$ sending any element $R$ to
  $a^{(m)}_{R\setminus\Delta}$ where $\Delta:=\{(x,x):x\in\mathbb{N}\}$.
\end{proof}
\medskip

\subsection{The operad $\mathrm{POSet}$ revisited}
The operad $\mathrm{POSet}$ is defined as a quotient of the operad $\RAS$.
In~\cite{LMN12}, we showed that $\mathrm{POSet}$ is optimal in the sense
that two of its operators have two different actions on languages.
\medskip

Denote by $\gamma: S^{\Diamond} \rightarrow S^{\Diamond}$ the transitive
closure. Remarking that $\gamma(R): S^{\Diamond}_n \rightarrow S^{\Diamond}_n$
and $\ShiftDiamnk{i,n} \gamma=\gamma\ShiftDiamnk{i,n} $, we apply the
result of Section \ref{sec quot precomp} and define the precomposition
$\Diamonddot: \bigbox \to
    \Hom(S^{\Diamond}_{/\equiv_\gamma},
    S^{\Diamond}_{/\equiv_\gamma})$
by setting
$\ShiftDiamDnk{i,n}([R]) := \left[\ShiftDiamnk{i,n}(R)\right]$
where $[]:S^{\Diamond}\rightarrow S^{\Diamond}_{/\equiv_\gamma}$ denotes
the natural morphism sending each element $R$ of $S^{\Diamond}$ to its
equivalence class $[R]$.
\medskip

The operad $\mathrm{\OP}(\Diamonddot)$ gives an alternative way to define
the operad $\mathrm{POSet}$ using precompositions.
\begin{proposition}
    The operads $\mathrm{POSet}$, $\mathrm{\OP}(\Diamonddot)$ and
    $\mathrm{ARAS}_{/\equiv_\gamma}$ are isomorphic.
\end{proposition}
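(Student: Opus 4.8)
The plan is to establish the two isomorphisms separately and then chain them, via
\[
\mathrm{POSet} \;\cong\; \ARAS_{/\equiv_\gamma} \;\cong\; \mathrm{OP}(\Diamonddot).
\]
The right-hand isomorphism is essentially free. Since $\ARAS=\mathrm{OP}(\diamond)$ by construction, and since $\Diamonddot$ is by definition the precomposition $\odot$ induced on $\mathcal{S}^{\Diamond}/_{\sim_\gamma}$ by $\diamond$ together with the transitive closure $\gamma$, the corollary of Section~\ref{sec quot precomp} applies verbatim and gives $\mathrm{OP}(\diamond)/_{\equiv_\gamma}=\ARAS_{/\equiv_\gamma}\cong\mathrm{OP}(\odot)=\mathrm{OP}(\Diamonddot)$. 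The only points to confirm are the hypotheses of that section for $\diamond$ and $\gamma$, namely that $\gamma$ is idempotent, preserves each $\mathcal{S}^{\Diamond}_k$, and commutes with every $\ShiftDiamnk{i,k}$; these are exactly the facts recorded just before the definition of $\Diamonddot$, supplemented by the standard identity $\gamma(R\cup R')=\gamma(\gamma(R)\cup\gamma(R'))$, which is what actually makes $\sim_\gamma$ a congruence (note that $\gamma$ is not a $\cup$-morphism, so this identity, rather than naive distributivity, is the relevant input).

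The left-hand isomorphism is where the genuine work lies. Recall that $\mathrm{POSet}$ is defined in \cite{LMN12} as the quotient $\RAS_{/\equiv_\gamma}$, two reflexive antisymmetric relations being identified when they have the same reflexive transitive closure. I would start from the operad isomorphism $\Psi:\RAS\to\ARAS$ of the preceding proposition, $R\mapsto a^{(k)}_{R\setminus\Delta}$, and show that it descends to the quotients. Because $\Psi$ is already an operad isomorphism and both sides are quotients by congruences, it suffices to check that $\Psi$ transports one congruence onto the other, i.e. that for $R,R'\in\RAS_k$ one has $R\equiv_\gamma R'$ if and only if $\Psi(R)\equiv_\gamma\Psi(R')$; a standard fact about quotient operads then yields that a congruence-respecting isomorphism induces an isomorphism of the quotient operads.

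The heart of the matter is therefore a purely combinatorial claim comparing the two transitive closures across the removal of the diagonal: for every $R\in\RAS_k$,
\[
\gamma(R)\setminus\Delta \;=\; \gamma(R\setminus\Delta),
\]
where on the left $\gamma$ is the reflexive transitive closure defining $\mathrm{POSet}$ and on the right $\gamma$ is the antireflexive closure on $\mathcal{S}^{\Diamond}$. This rests on two observations. First, $R$ is a subrelation of $\leq$, so its transitive closure is again a subrelation of $\leq$ and in particular acyclic; hence $\gamma(R\setminus\Delta)$ is antireflexive and carries no diagonal pair, while $\gamma(R)$ contains the full diagonal. Second, a diagonal pair $(x,x)$ occurring in a chain witnessing an off-diagonal pair $(u,v)$ may always be deleted without altering the endpoints, so inserting or removing $\Delta$ neither creates nor destroys any off-diagonal pair of the closure. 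Granting this identity, $\gamma(R)=\gamma(R')$ holds iff $\gamma(R)\setminus\Delta=\gamma(R')\setminus\Delta$ (both closures contain the full diagonal), which by the displayed equality is precisely $\gamma(R\setminus\Delta)=\gamma(R'\setminus\Delta)$, i.e. $\Psi(R)\equiv_\gamma\Psi(R')$. This establishes the congruence correspondence and hence the left-hand isomorphism.

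The main obstacle I anticipate is exactly this diagonal bookkeeping: one must keep in mind that the two instances of $\gamma$ live on different monoids (reflexive versus antireflexive relations) and that transitive closure fails to be a morphism for $\cup$, so compatibility must be argued through the chain-deletion property rather than by distributing over unions. Once the displayed identity is in hand, everything else is formal: both quotients are well defined (by Proposition~\ref{propgamma1} applied to $\diamond$ and by the analogous statement for $\RAS$), and $\Psi$, being an operad isomorphism that respects the two congruences, descends to the isomorphism $\mathrm{POSet}\cong\ARAS_{/\equiv_\gamma}$, completing the chain.
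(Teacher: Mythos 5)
Your proof is correct and takes essentially the same route as the paper: the paper's entire proof consists of exhibiting the map sending $P\in\mathrm{POSet}_k$ to $a^{(k)}_{[P\setminus\Delta]}\in\mathrm{OP}(\Diamonddot)$, which is precisely the composite you build by chaining $\mathrm{POSet}\cong\ARAS_{/\equiv_\gamma}$ (the isomorphism $R\mapsto a^{(k)}_{R\setminus\Delta}$ descending to the quotients) with the corollary of Section~\ref{sec quot precomp}. The verifications you supply---the identity $\gamma(R)\setminus\Delta=\gamma(R\setminus\Delta)$ and the remark that transitive closure satisfies $\gamma(R\cup R')=\gamma(\gamma(R)\cup\gamma(R'))$ rather than being a $\cup$-morphism---are exactly the details the paper leaves implicit, the latter quietly patching the monoid-morphism hypothesis of the quotient construction.
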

\begin{proof}
    The isomorphism is given by the map from $\mathrm{POSet}_n$ to
    $\mathbb{S}^\Diamonddot_n$ sending any element $P$ to
    $a^{(n)}_{\left[\left(P\setminus\Delta\right)\right]}$ where
    $\Delta=\{(x,x):x\in\mathbb{N}\}$.
\end{proof}
\medskip

\section{The operad of double multi-tildes}\label{sec5}
In \cite{LMN12}, we proved that the action of $\mathcal T$ on symbols
allows us to denote all finite languages. In this section, we propose an
extension of the operad $\mathcal T$ in order to represent  infinite
languages. New operators are required in order to describe the Kleene
star operation $^*$. In the last section of \cite{LMN12}, we  introduced
an operad $\mathcal T^*$ generated by $\mathcal T$ together with an
additional operator $\star$ (denoting the Kleene star $^*$).  Albeit this
operad allows the manipulation of regular languages, the equivalence of
the operators, w.r.t. the action over languages, is difficult to model.
In this section, we introduce a new operad $\mathcal {DT}$ which is
composed of two kinds of multi-tildes: right and left multi-tildes. The
$^*$ operation will be realized by a combination of right and left
multi-tildes operations.  Furthermore, we show that the expressiveness
of these operators is higher than operators of $\mathcal T^\star$ for a
given number of symbols. We start by considering that the two types of
operators are independently composed. More precisely,
\begin{equation} \label{DT}
    \mathcal{DT}:=\Had(\mathcal T,\mathcal T).
\end{equation}
We mimic the construction of \cite{LMN12} linking multi-tildes and
reflexive antisymmetric relations in order to construct a new operad
$\mathrm{ARef}$, which elements are antireflexive relations, isomorphic
to~$\mathcal{DT}$.
\medskip

\subsection{$\mathcal{DT}$ and antireflexive relations\label{SDT2ARef}}
We consider the graded set
\begin{equation}
\mathcal S^{\mathrm{ARef}}:=\bigcup_{n \geq 1}\mathcal S^{\mathrm{ARef}}_n,
\end{equation}
where $\mathcal{S}^{\ARef}_n$ is the set of subsets of
${\{(x,y): 1\leq x,y\leq n+1,\, x\neq y\}}\subset \N^2$.
\def\rev{\mathrm{rev}}
Endowed with the binary operation $\cup$, the set $\mathcal S^{\mathrm{ARef}}$ is a commutative monoid generated by $\{(x,y):x\neq y\}$.
We define the map $\Diamond: \bigbox\rightarrow  \Hom(\mathcal S^{\mathrm{ARef}},\mathcal S^{\mathrm{ARef}})$ by $\Diamond(\ShiftBoxnk{i,n})=\ShiftDiamnk{i,n}$
where
\begin{equation}
\ShiftDiamnk{i,n}(\{(x,y)\}) :=
    \begin{cases}
        \{(x,y)\} & \text{ if } x,y\leq i,\\
        \{(x,y+n-1)\} & \text{ if } x\leq i\mbox{ and }i< y,\\
        \{(x+n-1,y)\} & \text{ if } i<x\mbox{ and } y\leq i,\\
        \{(x+n-1,y+n-1\} & \text{ otherwise.}\\
      \end{cases}
\end{equation}
We easily check  that $\Diamond$ is a precomposition and we set $\ARef:=\mathrm{\OP}(\Diamond)$. Observe that a graphical representation of the  action of $\ARef$ can be obtained from those of $\ARAS$ by replacing triangles by squares (see Figure~\ref{FigShiftDouble}).
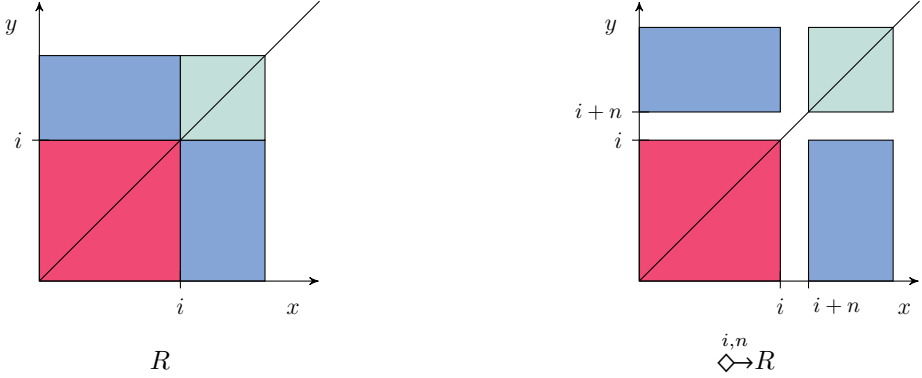
\begin{figure}[h]
\begin{center}
\begin{tabular}{c@{\ \ \ \ \ \ \ \ \ \ \ \ \ \ \ \ \ \ \ \ \ \ \ \ \ \ \ \ }c}
   \begin{tikzpicture}[scale=0.75,node distance=2.5cm,bend angle=30,transform shape,scale=1]
				\filldraw[fill=WildStrawberry!90] (0,0) rectangle (2.5,2.5);
				\filldraw[fill=NavyBlue!50] (0,2.5) rectangle (2.5,4);
				\filldraw[fill=NavyBlue!50] (2.5,0) rectangle (4,2.5);
				\filldraw[fill=PineGreen!20] (2.5,2.5) rectangle (4,4);
				\draw[->] (0,0) -- (0,5);
				\draw[->] (0,0) -- (5,0);
				\draw (0,0) -- (5,5);
				\draw (2.5,5pt) -- (2.5,-5pt) node[below] {\Large $i$};
				\draw (5pt,2.5) -- (-5pt,2.5) node[left] {\Large $i$};
				\node at (-0.5,4.5) [name=y]{\Large $y$};
				\node at (4.5,-0.5) [name=x]{\Large $x$};
      \end{tikzpicture}
   &   \begin{tikzpicture}[scale=0.75,node distance=2.5cm,bend angle=30,transform shape,scale=1]
				\filldraw[fill=WildStrawberry!90] (0,0) rectangle (2.5,2.5);
				\filldraw[fill=NavyBlue!50] (0,3) rectangle (2.5,4.5);
				\filldraw[fill=NavyBlue!50] (3,0) rectangle (4.5,2.5);
				\filldraw[fill=PineGreen!20] (3,3) rectangle (4.5,4.5);
				\draw[->] (0,0) -- (0,5);
				\draw[->] (0,0) -- (5,0);
				\draw (0,0) -- (5,5);
				\draw (2.5,5pt) -- (2.5,-5pt) node[below] {\Large $i$};
				\draw (5pt,2.5) -- (-5pt,2.5) node[left] {\Large $i$};
				\draw (3,5pt) -- (3,-5pt) node[below,xshift=0.5cm] {\large $i+n$};
				\draw (5pt,3) -- (-5pt,3) node[left] {\large $i+n$};
				\node at (-0.5,4.5) [name=y]{\Large $y$};
				\node at (4.7,-0.5) [name=x]{\Large $x$};
      \end{tikzpicture}
   \\
$R$&$\ShiftDiamnk{i,n}R$
\end{tabular}
\caption{Action of $\ARef$ on pairs.\label{FigShiftDouble}}
\end{center}
\end{figure}
\begin{proposition}\label{ARefARAS}
The operad $\ARef$ is isomorphic to  $\Had(\ARAS,\ARAS)$.
\end{proposition}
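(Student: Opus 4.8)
The plan is to realize each element of $\mathcal S^{\mathrm{ARef}}_n$, i.e. an antireflexive relation $R$ on $\{1,\dots,n+1\}$, by splitting it along the diagonal. Write $R^+=\{(x,y)\in R:x<y\}$ and $R^-=\{(x,y)\in R:x>y\}$, and let $\rev(x,y)=(y,x)$, extended to sets. Then $R^+\in\mathcal S^{\Diamond}_n$ and $\rev(R^-)\in\mathcal S^{\Diamond}_n$, and the map $\Psi(R)=(R^+,\rev(R^-))$ is a bijection $\mathcal S^{\mathrm{ARef}}_n\to\mathcal S^{\Diamond}_n\times\mathcal S^{\Diamond}_n$. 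Since splitting a relation along the diagonal and reversing both commute with $\cup$, $\Psi$ is an isomorphism of filtered commutative monoids onto $\mathcal S^{\Diamond}\times\mathcal S^{\Diamond}$ equipped with componentwise union. I would then take as candidate the map $\Phi:\ARef\to\TT(\ARAS,\ARAS)$ defined by $\Phi(a^{(k)}_R)=(a^{(k)}_{R^+},a^{(k)}_{\rev(R^-)})$, whose target is the correct underlying set by the definition of $\TT$.

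The heart of the argument is a compatibility lemma for the shift operators. First I would observe that the $\ARef$ shift preserves the diagonal dichotomy: inspecting its four cases shows each sends an above-diagonal pair to an above-diagonal pair and a below-diagonal pair to a below-diagonal pair, since the case $x\le i<y$ can only occur for $x<y$ and the case $i<x,\ y\le i$ only for $y<x$, while the remaining two cases preserve the sign of $x-y$. Hence the shift acts independently on $R^+$ and on $R^-$. Next I would check, case by case, that on above-diagonal pairs the $\ARef$ shift coincides exactly with the $\ARAS$ shift $\ShiftDiamnk{i,k}$ of $\mathcal S^{\Diamond}$, and that on below-diagonal pairs it satisfies $\rev\circ(\ARef\text{ shift})=(\ARAS\text{ shift})\circ\rev$. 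Concretely, for $x<y$ the $\ARef$ cases $\{x,y\le i\}$, $\{x\le i<y\}$, otherwise, collapse to the three $\ARAS$ cases $\{y\le i\}$, $\{x\le i<y\}$, otherwise; and for $x>y$, rewriting $(x,y)$ as $\rev(y,x)$ turns the three relevant $\ARef$ cases into the three $\ARAS$ cases applied to $(y,x)$. This bookkeeping — lining up the piecewise definitions including the degenerate endpoints — is the main, though routine, obstacle. The operator $\Decnk{i-1}$ needs no separate treatment, being the shift $\ShiftDiamnk{0,i}$.

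Granting the lemma, I would finish through the two structural descriptions. In $\ARef=\mathrm{OP}(\Diamond)$ the partial composition reads $a^{(k)}_R\circ_i a^{(k')}_{R'}=a^{(k+k'-1)}_{R''}$ with $R''=\ShiftDiamnk{i,k'}(R)\cup\Decnk{i-1}(R')$. Applying $\Psi$ and using that the shift commutes with the splitting, the above-diagonal component of $R''$ equals $\ShiftDiamnk{i,k'}(R^+)\cup\Decnk{i-1}\bigl((R')^+\bigr)$ computed in $\ARAS$, i.e. the first coordinate of $a^{(k)}_{R^+}\diamond_i a^{(k')}_{(R')^+}$; likewise $\rev\bigl((R'')^-\bigr)$ is the $\ARAS$ composition of $\rev(R^-)$ and $\rev\bigl((R')^-\bigr)$. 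By the definition of $\TT(\ARAS,\ARAS)$, whose partial composition is componentwise, these two identities say exactly that $\Phi\bigl(a^{(k)}_R\circ_i a^{(k')}_{R'}\bigr)=\Phi\bigl(a^{(k)}_R\bigr)\circ_i\Phi\bigl(a^{(k')}_{R'}\bigr)$. Thus $\Phi$ is an operad morphism, and since $\Psi$ is a bijection in each arity, $\Phi$ is bijective and hence the desired isomorphism. Combined with $\mathcal{DT}=\TT(\mathcal T,\mathcal T)\cong\TT(\ARAS,\ARAS)$, this simultaneously identifies $\ARef$ with $\mathcal{DT}$.
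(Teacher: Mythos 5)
Your proof is correct and is essentially the paper's own argument: your map $\Phi(a^{(k)}_R)=(a^{(k)}_{R^+},a^{(k)}_{\rev(R^-)})$ is exactly the inverse of the bijection $\Phi(a^{(k)}_{R_1},a^{(k)}_{R_2})=a^{(k)}_{R_1\cup\rev(R_2)}$ that the paper uses, and in both cases the heart of the matter is the same compatibility lemma, namely that the $\ARef$ shift operators preserve the splitting along the diagonal and commute with $\rev$, so that the partial composition can be computed componentwise in $\TT(\ARAS,\ARAS)$. Your write-up merely spells out the case analysis (including the $\rev$-intertwining on below-diagonal pairs) that the paper compresses into a single line.
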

\begin{proof}
Let us denote $\rev(x,y)=(y,x)$. If $R\in \ARef$ we will denote $\rev(R)=\{\rev(x,y):(x,y)\in R\}$, $R^<=\{(x,y)\in R:x< y\}$, and $R^>=\{(x,y)\in R:x> y\}=\rev((\rev(R))^<)$ (note that $R=R^<\cup R^>$).
Let $\Phi:\Had(\ARAS,\ARAS)\rightarrow \ARef$ be the map defined by $\Phi(a^{(n)}_{R_1},a^{(n)}_{R_2})=
a^{(n)}_{R_1\cup \rev(R_2)}$.
This map is a bijection which inverse is $\Phi^{-1}(a_R^{(n)})=(a^{(n)}_{R^<},a^{(n)}_{\rev(R^>)})$.\\
Let us prove that $a_R^{(n)}\Diamond_i a_{R'}^{(m)}=\Phi(\Phi^{-1}(a_R^{(n)})\Diamond_i\Phi^{-1}(a_{R'}^{(m)}))$.
We have
\[
\Phi(\Phi^{-1}(a_R^{(n)})\Diamond_i\Phi^{-1}(a_{R'}^{(m)}))=
\Phi(\Phi^{-1}(a_{R''}^{(n+m-1)}))=
\Phi(a_{R''^<}^{(n+m-1)}, a_{\rev(R''^>)}^{(n+m-1)}),
\]
where $R''= \ShiftDiam{i,m}{R}\cup \ShiftDiam{0,i}{R'}$.
Let $\ast\in\{<,>\}$. Since, $\ShiftDiam{i,k'}{R^{\ast}}=(\ShiftDiam{i,m}R)^{\ast}$ and $\ShiftDiam{0,i}{R'^{\ast}}=(\ShiftDiam{0,i}{R'})^{\ast}$, we have  $(\ShiftDiam{i,m}{R^{\ast}}\cup \ShiftDiam{0,i}{R'^{\ast}})=(R'')^{\ast}$. In other words, $R''^{\ast}=\ShiftDiam{i,m}{R^{\ast}}\cup \ShiftDiam{0,i}{R'^{\ast}}$. Hence,
\[
\Phi(\Phi^{-1}(R)\Diamond_i\Phi^{-1}(R'))=\Phi(a^{(n+m-1)}_{R''^{<}},a^{(n+m-1}_{\rev(R''^{>})})
=a^{(n+m-1}_{R''^<\cup R''^>}=a^{(n+m-1)}_{R''}=a_R^{(n)}\Diamond_i a_{R'}^{(m)}.
\]
This proves that $\ARef$ is an operad isomorphic to $\Had(\ARAS,\ARAS)$.
\end{proof}

\begin{corollary}
The operads $\mathcal{DT}$, $\ARef$, $\Had(\RAS,\RAS)$, and $\Had(\ARAS,\ARAS)$  are isomorphic.
\end{corollary}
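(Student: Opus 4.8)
The plan is to assemble the four claimed isomorphisms into a single chain, invoking only results already available in the paper together with one elementary structural observation: that the graded tensor product $\TT$ preserves operad isomorphisms. The two extreme links are essentially free. On one end, $\mathcal{DT}=\TT(\mathcal T,\mathcal T)$ is the definition given in \eqref{DT}; on the other end, Proposition~\ref{ARefARAS} already establishes $\ARef\cong\TT(\ARAS,\ARAS)$. Thus the only work is to connect $\TT(\mathcal T,\mathcal T)$ to $\TT(\ARAS,\ARAS)$ by substituting isomorphic operads inside the $\TT$ construction.

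The key intermediate step I would isolate as a small lemma: $\TT$ is functorial in both arguments, and in particular carries pairs of isomorphisms to isomorphisms. Concretely, given operad isomorphisms $\phi:\goth P\to\goth P'$ and $\psi:\goth Q\to\goth Q'$, the graded map $\TT(\phi,\psi):\TT(\goth P,\goth Q)\to\TT(\goth P',\goth Q')$ defined componentwise by $(p,q)\mapsto(\phi(p),\psi(q))$ is an operad isomorphism. The verification is immediate from the definition of the composition on $\TT$, namely $(p_1,q_1)\circ_i(p_2,q_2)=(p_1\circ_i p_2,\ q_1\circ_i' q_2)$: since this composition acts independently in each component, $\TT(\phi,\psi)$ respects every $\circ_i$ precisely because $\phi$ and $\psi$ each do, it is a bijection because $\phi$ and $\psi$ are, it is graded because $\phi$ and $\psi$ send $\goth P_n$ to $\goth P'_n$ and $\goth Q_n$ to $\goth Q'_n$, and it sends the identity $(\mathbf 1,\mathbf 1)$ to $(\mathbf 1,\mathbf 1)$.

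With this lemma in hand, I would recall the two operad isomorphisms established earlier: $\mathcal T\cong\RAS$, recalled in Section~\ref{sec:comb op lang theor} (the isomorphism sending $T\in\mathcal T_n$ to $\{(x,y+1):(x,y)\in T\}\cup\{(x,x):x\in\{1,\dots,n+1\}\}$), and $\RAS\cong\ARAS$, established in Section~\ref{sec4}. Applying the lemma twice then yields
\[
\mathcal{DT}=\TT(\mathcal T,\mathcal T)\cong\TT(\RAS,\RAS)\cong\TT(\ARAS,\ARAS),
\]
and Proposition~\ref{ARefARAS} supplies the final link $\TT(\ARAS,\ARAS)\cong\ARef$. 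Composing these isomorphisms gives the asserted four-way isomorphism.

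I expect no genuine obstacle: the entire content of the corollary is bookkeeping of isomorphisms already proved, the sole new ingredient being the functoriality of $\TT$, whose proof is a one-line componentwise check. The only point deserving a moment of care is confirming that $\TT(\phi,\psi)$ is graded and preserves the identity, both of which follow at once from the corresponding properties of $\phi$ and $\psi$.
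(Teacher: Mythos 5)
Your proposal is correct and follows essentially the same route as the paper, which chains $\mathcal{DT}=\TT(\mathcal T,\mathcal T)$, the isomorphisms $\mathcal T\cong\RAS\cong\ARAS$, and Proposition~\ref{ARefARAS} ($\ARef\cong\TT(\ARAS,\ARAS)$), implicitly using that $\TT$ carries pairs of isomorphisms to isomorphisms. Your only addition is to state and check that functoriality of $\TT$ explicitly (the paper leaves it tacit and instead writes out the composite map $\xi(R_1,R_2)=a^{(k)}_{\zeta^A(R_1)\cup \rev(\zeta^A(R_2))}$, which is exactly what your chain of isomorphisms composes to).
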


In the aim to illustrate the isomorphism between $\ARef$ and $\mathcal{DT}$, we recall that the graded map $\zeta:\mathcal T_n\rightarrow \RAS_n$ defined by $\zeta(R)=\{(x,y+1):(x,y)\in R\}\cup \{(1,1),\dots,(n+1,n+1)\}$ is an isomorphism of operads. According to the definition of $\ARAS$, we obtain explicitly an isomorphism from $\mathcal T$ to $\ARAS$ by a slight modification of $\zeta$:
$
\zeta^A(R)=a^{(n)}_{\zeta(R)\setminus\Delta}.
$
Since $\ARAS$ and $\mathcal T$ are isomorphic, this is also the case for $\mathcal{DT}$ and $\ARef$ (because $\ARef$ is isomorphic to $\Had(\ARAS,\ARAS)$). From the construction described in Proposition \ref{ARefARAS}, the map $\xi: \mathcal{DT}\rightarrow \ARef$ defined by $\xi(R_1,R_2)=a^{(n)}_{\zeta^A(R_1)\cup \rev(\zeta^A(R_2))}$, when $(R_1,R_2)\in\mathcal{DT}_n$, explicits the isomorphism.
\medskip

\begin{example}\rm\label{ExARAS}
Consider $P_1=(\{(1,3),(2,2),(3,4)\},\{(2,3)\})\in \mathcal{DT}_5$ and $P_2=(\{(2,3),(3,4)\},\{(1,2),(3,4)\})\in\mathcal{DT}_4$. We have
\[
\xi(P_1)=a^{(5)}_{\{(1,4),(2,3),(3,5),(4,2)\}}\mbox{ and }\xi(P_2)=a^{(4)}_{\{(2,4),(3,5),(3,1),(5,3)\}}.
\]
Remark that
$$\begin{array}{rcl}P_1\circ_2 P_2&=&(\{(1,3),(2,2),(3,4)\}\circ_2\{(2,3),(3,4)\},\{(2,3)\}\circ_2\{(1,2),(3,4)\})\\
&=& (\{(1,6),(2,5),(6,7),(3,4),(4,5)\},\{(2,6),(2,3),(4,5)\}),
\end{array}$$
and then
\[
\xi(P_1\circ_2 P_2)=a^{(8)}_{\{(1,7),(2,6),(6,8),(3,5),(4,6),(7,2),(4,2),(6,4)\}}.
\]
Let us now compute $\xi(P_1)\Diamond_2\xi(P_2)$:
\[
\xi(P_1)\Diamond_2\xi(P_2)=a^{(5)}_{\{(1,4),(2,3),(3,5),(4,2)\}}\Diamond_2a^{(4)}_{\{(2,4),(3,5),(3,1),(5,3)\}}=a^{(8)}_{R}
\]
with
\[\begin{array}{rcl}
R&=&\ShiftDiam{2,4}{\{(1,4),(2,3),(3,5),(4,2)\}}\cup\ShiftDiam{0,2}{\{(2,4),(3,5),(3,1),(5,3)\}}\\
&=& \{(1,7),(2,6),(6,8),(7,2),(3,5),(4,6),(4,2),(6,4)\}.
\end{array}
\]
We observe that $\xi(P_1\circ_2 P_2)=\xi(P_1)\Diamond_2\xi(P_2)$.\\
Graphically, the composition $\Diamond_i$ can be illustrated in two steps corresponding to the operators $\ShiftDiamnk{ik'}$ and $\ShiftDiamnk{0,i}$ by drawing the graph of the relations. For instance, we start with the two graphs of the relations $\{(1,4),(2,3),(3,5),(4,2)\}$ and $\{(2,4),(3,5),(3,1),(5,3)\}$:
\begin{figure}[H]
  \hfill
  \begin{minipage}{0.33\linewidth}
    \centerline{
      \begin{tikzpicture}[node distance=1.25cm,bend angle=30]
            \node (1) {$1$};
            \node[above of=1] (2) {$2$};
            \node[above right of=2] (3) {$3$};
            \node[right of=3] (4) {$4$};
            \node[below right of=4] (5) {$5$};
            \node[below of=5] (6) {$6$};
            \path[->]
              (1) edge (4)
              (2) edge (3)
              (3) edge (5)
              (4) edge (2)
            ;
      \end{tikzpicture}
    }
  \end{minipage}
  \begin{minipage}{0.33\linewidth}
    \centerline{
      \begin{tikzpicture}[node distance=1.25cm,bend angle=30,color=WildStrawberry]
            \node (1) {$1$};
            \node[above of=1] (2) {$2$};
            \node[above right of=2] (3) {$3$};
            \node[right of=3] (4) {$4$};
            \node[below right of=4] (5) {$5$};
            \path[->]
              (2) edge (4)
              (3) edge (1)
              (3) edge (5)
              (5) edge (3)
            ;
      \end{tikzpicture}
    }
  \end{minipage}
  \hfill
  \hfill
\end{figure}

We rename the vertices $3\rightarrow 6$, $4\rightarrow 7$, $\dots$, $6\rightarrow 9$ in the graphs of
$\{(1,4),(2,3),(3,5),(4,2)\}$ and the vertices $1\rightarrow 2, \dots,5\rightarrow 6$ in the graph of $\{(2,4),(3,5),(3,1),(5,3)\}$.
\begin{figure}[H]
  \hfill
  \begin{minipage}{0.33\linewidth}
    \centerline{
      \begin{tikzpicture}[node distance=1.25cm,bend angle=30]
            \node (1) {$1$};
            \node[above of=1] (2) {$2$};
            \node[above right of=2] (3) {$6$};
            \node[right of=3] (4) {$7$};
            \node[below right of=4] (5) {$8$};
            \node[below of=5] (6) {$9$};
            \path[->]
              (1) edge (4)
              (2) edge (3)
              (3) edge (5)
              (4) edge (2)
            ;
      \end{tikzpicture}
    }
  \end{minipage}
  \begin{minipage}{0.33\linewidth}
    \centerline{
      \begin{tikzpicture}[node distance=1.25cm,bend angle=30,color=WildStrawberry]
            \node (1) {$2$};
            \node[above of=1] (2) {$3$};
            \node[above right of=2] (3) {$4$};
            \node[right of=3] (4) {$5$};
            \node[below right of=4] (5) {$6$};
            \path[->]
              (2) edge (4)
              (3) edge (1)
              (3) edge (5)
              (5) edge (3)
            ;
      \end{tikzpicture}
    }
  \end{minipage}
  \hfill
  \hfill
\end{figure}

Then we identify the  vertices which have the same label in the two graphs:
\begin{figure}[H]
    \centerline{
      \begin{tikzpicture}[node distance=1.25cm,bend angle=30]
            \node (1) {$1$};
            \node[above of=1] (2) {$2$} ;
            \node[above of=2, color=WildStrawberry] (3) {$3$};
            \node[above right of=3, color=WildStrawberry] (4) {$4$};
            \node[right of=4, color=WildStrawberry] (5) {$5$};
            \node[below right of=5] (6) {$6$};
            \node[right of=6] (7) {$7$};
            \node[below right of=7] (8) {$8$};
            \node[below of=8] (9) {$9$};
            \path[->]
              (1) edge (7)
              (2) edge (6)
              (7) edge (2)
              (6) edge (8)
            ;
            \path[->, color=WildStrawberry]
              (3) edge (5)
              (4) edge (2)
              (4) edge (6)
              (6) edge (4)
            ;
      \end{tikzpicture}
    }
\end{figure}
\end{example}
\medskip

\subsection{An operad on quasiorders}
A {\em quasiorder} is a reflexive and transitive relation. If $R$ is a
relation we denote by $\gamma(R)$ its transitive closure. We  also set
$\gamma^{\mathrm{A}}(R)=\gamma(R)\setminus\{(n,n):n\geq 1\}$ and
$\gamma^{\mathrm{R}}(R)=\gamma(R)\cup \{(n,n):n\geq 1\}$. Note that
$\gamma^{\mathrm{R}}(R)$ is the smallest quasiorder which contains $R$.
Since $\gamma^{\mathrm A}:\mathcal S^{\ARef}\rightarrow \mathcal S^{\ARef}$
is an idempotent monoid morphism sending $S^{\ARef}_n$ to $S^{\ARef}_n$
and satisfies $\ShiftDiamnk{i,n}\gamma^A=\gamma^A\ShiftDiamnk{i,n}$,
following Section~\ref{sec quot precomp}, we construct the precomposition
\begin{equation}
    \Diamonddot:\bigbox\to
    \Hom\left(\mathcal S^{\ARef}/_{\equiv_{\gamma^A}},
        \mathcal S^{\ARef}/_{\equiv_{\gamma^A}}\right)
\end{equation}
defined by $\ShiftDiamD{i,n}{[R]}:=\left[\ShiftDiam{i,n}R\right]$ where
$[]$ denotes the natural morphism
$\mathcal S^{\ARef}\rightarrow \mathcal S^{\ARef}/_{\equiv_{\gamma^A}}$
sending each relation to its class. Hence, we consider the operad
$\mathrm{\OP}(\Diamonddot)$.
\medskip

Alternatively, consider the set $\QOSET_n$ of quasiorder of $\{1,\dots,n+1\}$ and $\QOSET:=\bigcup_n\QOSET_n$. Consider also the partial composition defined by $Q\Diamonddot_i Q'=\gamma(\ShiftDiam{i,m}Q\cup
\ShiftDiam{0,i}{Q'})$ if $Q\in \QOSET_{n}$, $Q'\in \QOSET_{m}$ and $i\leq n$.
\medskip

\begin{theorem}
 The pair $(\QOSET,\Diamonddot)$ is an operad isomorphic to $\mathrm{\OP}(\Diamonddot)$.
\end{theorem}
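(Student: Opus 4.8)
The plan is to produce an explicit bijection between $\QOSET$ and the underlying set of $\mathrm{OP}(\Diamonddot)$ that intertwines the two partial compositions, and then to invoke the transport principle around~\eqref{bijection}: once we have a bijection $\Psi$ from $\QOSET$ onto the operad $\mathrm{OP}(\Diamonddot)$ satisfying $\Psi(Q\Diamonddot_i Q')=\Psi(Q)\Diamonddot_i\Psi(Q')$, the set $\QOSET$ is automatically an operad and $\Psi$ is an operad isomorphism. Writing $\Delta=\{(x,x):x\in\Z\}$ and denoting by $[\,\cdot\,]$ the $\equiv_{\gamma^A}$-class, I would set
\[
\Psi:\QOSET_k\rightarrow \mathrm{OP}(\Diamonddot)_k,\qquad \Psi(Q)=a^{(k)}_{[Q\setminus\Delta]},
\]
with candidate inverse $a^{(k)}_{[R]}\mapsto \gamma(R)\cup\Delta$. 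The whole argument rests on one elementary fact about transitive closure, which I would isolate first: self-loops never help to connect two distinct vertices, so for any relation $R$ the off-diagonal part $\gamma(R)\setminus\Delta$ depends only on $R\setminus\Delta$. Equivalently, $\gamma^A=\gamma(\,\cdot\,)\setminus\Delta$ is idempotent and $\gamma(R)\cup\Delta=(\gamma(R)\setminus\Delta)\cup\Delta$ depends only on $[R]$.

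Using this fact I would check that $\Psi$ is a well-defined bijection. The map $a^{(k)}_{[R]}\mapsto\gamma(R)\cup\Delta$ is well defined, its image is reflexive and transitive (hence a genuine quasiorder), and the two maps undo each other: $(\gamma(R)\cup\Delta)\setminus\Delta=\gamma^A(R)\sim_{\gamma^A}R$, while for a quasiorder $Q$ one has $\gamma(Q\setminus\Delta)\cup\Delta=Q$ because $Q$ is already reflexive and transitive. Concretely, $\Psi$ identifies each class $[R]$ with the unique quasiorder $\gamma^{\mathrm R}(R)$ extending the transitively closed antireflexive relation $\gamma^A(R)$.

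The core of the proof is compatibility with composition. Unwinding the definition of $\mathrm{OP}(\Diamonddot)$ gives
\[
a^{(k)}_{[R]}\Diamonddot_i a^{(k')}_{[R']}=a^{(k+k'-1)}_{[\,\ShiftDiam{i,k'}{R}\cup\ShiftDiam{0,i}{R'}\,]},
\]
while on the $\QOSET$ side $Q\Diamonddot_i Q'=\gamma\bigl(\ShiftDiam{i,k'}{Q}\cup\ShiftDiam{0,i}{Q'}\bigr)$. Setting $U:=\ShiftDiam{i,k'}{Q}\cup\ShiftDiam{0,i}{Q'}$ and $U_0:=\ShiftDiam{i,k'}{(Q\setminus\Delta)}\cup\ShiftDiam{0,i}{(Q'\setminus\Delta)}$, and using that each $\ShiftDiamnk{i,k}$ sends diagonal pairs to diagonal pairs and commutes with $\cup$, the relations $U$ and $U_0$ differ only by elements of $\Delta$; the key fact then yields $\gamma(U)\setminus\Delta=\gamma(U_0)\setminus\Delta$, i.e. $\gamma^A(U)=\gamma^A(U_0)$. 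Applying $\Psi$ to $Q\Diamonddot_i Q'=\gamma(U)$ and comparing gives precisely $\Psi(Q\Diamonddot_i Q')=a^{(k+k'-1)}_{[U_0]}=\Psi(Q)\Diamonddot_i\Psi(Q')$. Along the way I would record that $\Diamonddot_i$ really maps into $\QOSET_{k+k'-1}$: the shifted diagonals $\ShiftDiam{i,k'}{\Delta}$ and $\ShiftDiam{0,i}{\Delta}$ together cover all of $\{1,\dots,k+k'\}$, so $U$ is reflexive and $\gamma(U)$ is a reflexive transitive relation, hence a quasiorder.

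Finally, since $\mathrm{OP}(\Diamonddot)$ is an operad (it is the value of the functor $\mathrm{OP}$ on the precomposition $\Diamonddot$, and by the corollary of Section~\ref{sec quot precomp} it is isomorphic to $\ARef/_{\equiv_{\gamma^A}}$), the transport principle around~\eqref{bijection} upgrades the composition-preserving bijection $\Psi$ to an operad isomorphism; in particular $(\QOSET,\Diamonddot)$ is an operad, with identity $\Psi^{-1}(\mathbf 1)$ the discrete quasiorder $\{(1,1),(2,2)\}\in\QOSET_1$. The only real obstacle I anticipate is the bookkeeping of the diagonal under transitive closure and the shift operators: making precise that passing to $\equiv_{\gamma^A}$-classes of antireflexive relations coincides with passing to quasiorders, and that $\gamma$ may be pushed past $\setminus\Delta$ without altering the off-diagonal part. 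Everything else is a routine unfolding of the $\mathrm{OP}$ construction.
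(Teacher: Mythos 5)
Your proof is correct and takes essentially the same route as the paper: the identical bijection $Q\mapsto a^{(k)}_{[Q\setminus\Delta]}$ with inverse the reflexive--transitive closure $a^{(k)}_{[R]}\mapsto\gamma(R)\cup\Delta$, the same verification that this bijection intertwines the two partial compositions, and the same transport-of-structure principle to conclude that $(\mathrm{QOSET},\Diamonddot)$ is an operad isomorphic to $\mathrm{OP}(\Diamonddot)$. Your isolated lemma that $\gamma(R)\setminus\Delta$ depends only on $R\setminus\Delta$ simply makes explicit the diagonal bookkeeping that the paper's proof leaves implicit.
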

\begin{proof}
Consider the map $\eta: \QOSET\rightarrow  \mathrm{\OP}(\Diamonddot)$ given by $\eta(Q)=a^{(n)}_{\left[Q\setminus\Delta\right]}$. The map $\eta$ is a graded bijection and its inverse is given by $\eta^{-1}(a^{(n)}_{[R]})=\gamma^{\mathrm R}(R)$. Remarking that
\begin{equation}\begin{split}
\eta^{-1}(a^{(n)}_{\left[R'\right]} \Diamonddot_i a^{(n)}_{\left[R''\right]})
    & =\gamma^{\mathrm R}(R'\Diamonddot_iR'') \\
    & =\gamma(\ShiftDiam{i,m}{\gamma^{\mathrm R}(R')})\cup
\ShiftDiam{0,i}{\gamma^{\mathrm R}(R'')} \\
    & =\gamma^{\mathrm R}(R')\Diamond_i\gamma^{\mathrm R}(R'') \\
    & =
\eta^{-1}(a^{(n)}_{\left[R'\right]})\Diamond_i\eta^{-1}(a^{(n)}_{\left[R''\right]}),
\end{split}\end{equation}
we prove that the set $\QOSET$ inherits from $\mathrm{\OP}(\Diamonddot)$ of a structure of operad.
\end{proof}
\medskip

\begin{example}\rm
Let us give an example. Consider, as in Example \ref{ExARAS}, the antireflexive relations $R_1=\{(1, 4), (2, 3), (3, 5), (4, 2)\}$ and $R_2=\{(2,4),(3,5),(3,1),(5,3)\}$. We have
\[
\gamma^{\mathrm R}(R_1)=\{(1,4),(2,3),(3,5),(4,2),(1,2),(2,5),(4,3),(1,3),(4,5),(1,5),(1,1),(2,2),(3,3),(4,4),(5,5),(6,6)\}\] and
\[
\gamma^{\mathrm R}(R_2)=\{(2,4),(3,5), (3,1),(5,3),(5,1),(1,1),(2,2),(3,3),(4,4),(5,5)\}.
\]
We have
\[\begin{array}{rcl}
\gamma^{\mathrm R}(R)&=&\gamma^{\mathrm R}(\ShiftDiam{2,4}{R_1}\cup\ShiftDiam{0,2}{R_2})\\
&=&\gamma^{\mathrm R}(\{(1, 7), (2, 6), (6, 8), (7, 2), (3, 5), (4, 6), (4, 2), (6, 4)\})\\
&=&\{(1, 7),  (2, 6), (6, 8), (7, 2), (3, 5), (4, 6), (4, 2), (6, 4),
\\&&(1,2), (2,8),  (2,4), (4,8), (7,8), (7,4),  (6,2),(1,8),(1,4),(7,6),(1,6)
\\&&(1,1),(2,2),(3,3),(4,4),(5,5),(6,6),(7,7),(8,8),(9,9)\}.
\end{array}
\]
Also,
\[
\begin{array}{rcl}
\gamma(\ShiftDiam{2,4}{\gamma^R(R_1)}\cup\ShiftDiam{0,2}{\gamma^R(R_2)})&=&
\gamma(\{(1,7),(2,6),(6,8),(7,2),(1,2),(2,8),(7,6),(1,6),(7,8),(1,8),\\
&&(1,1),(2,2),(6,6),(7,7),(8,8),(9,9) \}\\
&&\cup \{(3,5),(4,6), (4,2),(6,4),(6,3),(2,2),(3,3),(4,4),(5,5),(6,6) \})\\
&=&\{(1,7),(2,6),(6,8),(7,2),(1,2),(2,8),(7,6),(1,6),(7,8),(1,8),(3,5),(4,6),\\
&& (4,2),(6,4),(6,2),
(2,4),(4,8),(7,4),(1,4),\\
&&(1,1),(2,2),(3,3),(4,4),(5,5),(6,6),(7,7),(8,8),(9,9)\}\\
&=&\gamma^{\mathrm R}(R)
\end{array}
\]
\end{example}
\def\B{\mathbb B}
\subsection{Infinite matrices again}
We consider the set $\mathcal R$ of infinite matrices indexed by $\N\setminus\{0\}$, whose entries are in the boolean semiring $\B$ and which have only a finite number of non zero entries.  Consider the map which sends a set $s\in \mathcal S^{ARef}$ to the matrix $M_s$ such that $M_s[i,j]=1$ is $(i,j)\in s$ and $0$ otherwise. Through this map, $\mathcal S^{ARef}_n$ is in a one to one correspondence with a finite subset $\widetilde{\mathcal S_n}\subset\mathcal R$. More precisely, $$\widetilde{\mathcal S_n}=\{(m_{i,j})_{i,j>0}: m_{i,j}=0\mbox{ if } i>n+1 \mbox{ or }j>n+1\mbox{ or }i=j\}.$$ Furthermore each $\widetilde{\mathcal S_n}$ is stable for the sum and is isomorphic to the monoid $\mathcal S^{ARef}_n$ and we
observe that  the precomposition $\tilde\Diamond$ defined by $\tilde\Diamond(\ShiftBoxnk{i,n})(M)=\mathfrak n_{i,k}M\ ^t\mathfrak n_{i,n}$ is isomorphic to $\Diamond$.
\begin{example}\rm
We have
$$M_{\{(1,4),(2,3),(3,5),(4,2)\}}=\left[\begin{array}{cccccc}0&0&0&1&0&\cdots\\
0&0&1&0&0&\cdots\\
0&0&0&0&1&\cdots\\
0&1&0&0&0&\cdots\\
\vdots&\vdots&\vdots&\vdots&\vdots&
\end{array}\right].$$
And we check that
\[\begin{array}{rcl}
\mathfrak n_{24}M_{\{(1,4),(2,3),(3,5),(4,2)\}}\ ^t\mathfrak n_{24}&=&%
\left[\begin{array}{ccccccccc}0&0&\color{NavyBlue}0&\color{NavyBlue}0&\color{NavyBlue}0&0&1&0&\cdots\\
0&0&\color{NavyBlue}0&\color{NavyBlue}0&\color{NavyBlue}0&1&0&0&\cdots\\
\color{NavyBlue}0&\color{NavyBlue}0&\color{NavyBlue}0&\color{NavyBlue}0&\color{NavyBlue}0&\color{NavyBlue}0&\color{NavyBlue}0&\color{NavyBlue}0&\cdots\\
\color{NavyBlue}0&\color{NavyBlue}0&\color{NavyBlue}0&\color{NavyBlue}0&\color{NavyBlue}0&\color{NavyBlue}0&\color{NavyBlue}0&\color{NavyBlue}0&\cdots\\
\color{NavyBlue}0&\color{NavyBlue}0&\color{NavyBlue}0&\color{NavyBlue}0&\color{NavyBlue}0&\color{NavyBlue}0&\color{NavyBlue}0&\color{NavyBlue}0&\cdots\\
0&0&\color{NavyBlue}0&\color{NavyBlue}0&\color{NavyBlue}0&0&0&1&\cdots\\
0&1&\color{NavyBlue}0&\color{NavyBlue}0&\color{NavyBlue}0&0&0&0&\cdots\\
\vdots&\vdots&\vdots&\vdots&\vdots&\vdots&\vdots&\vdots&
\end{array}\right]=M_{\{(1,7),(2,6),(6,8),(7,2)\}}\\
&=& M_{\ShiftDiamnk{2,4}\{(1,4),(2,3),(3,5),(4,2)\}}.
\end{array}
\]
\end{example}
Each symbol $a_{M}^{(m)}$ can be assimilated with a $(m+1)\times (m+1)$ matrix $M^{(m)}$ which is obtained by considering only the entries $M_{i,j}$ of $M$ such that $i,j\leq m+1$. Hence, the composition $a_{M}^{(n)}\circ_i a_{M'}^{(m)}$ is obtained by summing  two $(m+n)\times (m+n)$ matrices: $\left(n_{i,m}M\ ^t\mathfrak n_{i,m}\right)^{(n+m)}$ and $\left(n_{0,i}M'\ ^t\mathfrak n_{0,i}\right)^{(n+m)}$.
\begin{example}
\rm One has
\[\begin{array}{rcl}
M_{\{(1, 4), (2, 3), (3, 5), (4, 2)\}}^{(5)}\circ_2 M_{\{(2,4),(3,5),(3,1),(5,3)\}}^{(4)}&=&
{\color{WildStrawberry}\left[\begin{array}{cccccc}0&0&0&1&0&0\\0&0&1&0&0&0\\0&0&0&0&1&0\\0&1&0&0&0&0\\0&0&0&0&0&0\\0&0&0&0&0&0 \end{array}\right]}
\circ_2 {\color{NavyBlue}
\left[\begin{array}{ccccc}0&0&0&0&0\\0&0&0&1&0\\1&0&0&0&1\\0&0&0&0&0\\0&0&1&0&0 \end{array}\right]}
\\\\
&=&
\left[\begin{array}{ccccccccc}\color{WildStrawberry}0&\color{WildStrawberry}0&0&0&0&\color{WildStrawberry}0&\color{WildStrawberry}1&\color{WildStrawberry}0&\color{WildStrawberry}0\\
\color{WildStrawberry}0&{\color{WildStrawberry}0}+{\color{NavyBlue}0}&\color{NavyBlue}0&\color{NavyBlue}0&\color{NavyBlue}0&{\color{WildStrawberry}1}+\color{NavyBlue} 0&\color{WildStrawberry}0&\color{WildStrawberry}0&\color{WildStrawberry}0\\
0&\color{NavyBlue}0&\color{NavyBlue}0&\color{NavyBlue}0&\color{NavyBlue}1&\color{NavyBlue}0&0&0&0\\
0&\color{NavyBlue}1&\color{NavyBlue}0&\color{NavyBlue}0&\color{NavyBlue}0&\color{NavyBlue}1&0&0&0\\
0&\color{NavyBlue}0&\color{NavyBlue}0&\color{NavyBlue}0&\color{NavyBlue}0&\color{NavyBlue}0&0&0&0\\
\color{WildStrawberry}0&{\color{WildStrawberry}0}+{\color{NavyBlue}0}&\color{NavyBlue}0&\color{NavyBlue}1&\color{NavyBlue}0&{\color{WildStrawberry}0}+{\color{NavyBlue}0}&\color{WildStrawberry}0&\color{WildStrawberry}1&\color{WildStrawberry}0\\
\color{WildStrawberry}0&\color{WildStrawberry}1&0&0&0&\color{WildStrawberry}0&\color{WildStrawberry}0&\color{WildStrawberry}0&\color{WildStrawberry}0\\
\color{WildStrawberry}0&\color{WildStrawberry}0&0&0&0&\color{WildStrawberry}0&\color{WildStrawberry}0&\color{WildStrawberry}0&\color{WildStrawberry}0\\
\color{WildStrawberry}0&\color{WildStrawberry}0&0&0&0&\color{WildStrawberry}0&\color{WildStrawberry}0&\color{WildStrawberry}0&\color{WildStrawberry}0
\end{array}\right]\\\\
&=&M_{\{(1,7),(2,6),(3,5),(4,2),(4,6),(6,4),(6,8),(7,2)\}}^{(9)}.
\end{array}
\]
This is coherent with Example \ref{ExARAS}.
\end{example}
Set $M^+=M+M^2+\cdots$. Remarking that $M^+_s=M_{\gamma(s)}$, we deduce that
 if $M\in\widetilde{\mathcal S_n}$ then $M^+=M+M^2+\cdots$ still belongs to $\widetilde{\mathcal S_n}$. Let $\widetilde{\mathcal S_n}^+=\{M^+:M\in\widetilde{\mathcal S_n}\}$ and  $\widetilde{\mathcal S}^+=\bigcup_n\widetilde{\mathcal S_n}^+$. The set $\widetilde{\mathcal S}^+$ endowed with the operation $M\oplus M'=(M+M')^+$ is a monoid isomorphic to $\mathrm {QOSet}$. Hence, one easily checks that
\begin{proposition}
 The homomorphism $\widetilde\Diamonddot\in\Hom(\bigbox,\Hom(\widetilde{\mathcal S}^+,\widetilde{\mathcal S}^+))$ defined by $\widetilde\Diamonddot(\ShiftBoxnk{i,k})(M)=\left(\mathfrak n_{i,k}M\ ^t\mathfrak n_{i,k}\right)^+$ is  a precomposition isomorphic to $\Diamonddot$.
\end{proposition}
\section{Action on languages}\label{sec6}
The aim of this section is to describe regular languages by using the operads defined above.
More precisely, we show that the set of regular languages is a module on each of these operads.
Furthermore, we prove that each regular language can be expressed  by an operator acting on symbols or $\emptyset$.
Finally, we show that the operad $\QOSET$ is optimal in the sense that its action is faithful.

\subsection{Action of $\ARef$}
We recall that a grammar $G$ is a $4$-tuple $(\Sigma,\Gamma,S,P)$ where  $\Sigma$ is a terminal alphabet, $\Gamma$ a nonterminal alphabet, $S\in \Gamma$ an axiom and $P$ a set of productions. The set  of productions is a relation that contains couples of the form $X \rightarrow \alpha$, with $X$ in $\Gamma$ and $\alpha$ in $(\Sigma\cup\Gamma)^*$.
We denote by $\Rightarrow$ the catenation stability closure of $\rightarrow$ \emph{i.e.}, the transitive and reflexive closure of the relation $\rightarrow$.
The language denoted by $G$ is the set $\mathbb L(G)=\{w \in \Sigma^*\mid S_1 \Rightarrow w\}$.
\medskip

An $\varepsilon$-automaton is a $5$-tuple $(Q,\Sigma,\delta,i,F)$
where $Q$ is a set of states, $\Sigma$ is an alphabet, $\delta$ is a transition function from $Q \times (\Sigma\cup\{\varepsilon\})$ to $2^{Q}$,  $i\in Q$ is an initial state and $F\subset Q$ is a set of final states.
The transition function $\delta$ can be extended for any integer $n$ as the function $\delta_{\varepsilon,n}$ from $Q$ to $2^Q$ as follows:
for any $q\in Q$, $\delta_{\varepsilon,0}(q)=\delta(q,\varepsilon)$, $\delta_{\varepsilon,n+1}(q)=\bigcup_{q'\in\delta(q,\varepsilon)} \delta_{\varepsilon,n}(q')$.
The transition function $\delta$ can also be extended as the function $\delta'$ from $2^Q\times \Sigma^*$ as follows:
for any $Q\subset \Gamma_n$, for any word $w$ in $\Sigma^*$,
$$\delta'(q,\varepsilon)=\{q\}\cup \left\{q'\mid  q'\in\delta_{\varepsilon,n}(q)\mbox{ for some }n\right\},$$
$$\delta'(Q,w)=\bigcup_{q\in Q,q'\in \delta'(q,w)} \delta'(q',\varepsilon),\
\delta'(q,aw)=\delta'(\delta(q,a),w).$$
The language recognized by a $\varepsilon$-automaton is the set $\{w\mid \delta'(i,w)\cap F\neq\emptyset\}$.\medskip

We associate to each element $a_R^{(n)}\in \ARef_n$  a list of productions $\mathtt P(a_R^{(n)})$ defined by
\begin{enumerate}
\item $\mathtt S_i\rightarrow \mathtt a_{i}\mathtt S_{i+1}$ for each $1\leq i\leq k$,
\item $\mathtt S_{i}\rightarrow\mathtt  S_{i'}$ if $(i,i')\in R$,
\item $\mathtt S_{k+1}\rightarrow \varepsilon$,
\end{enumerate}
and we construct the grammar $\mathbf G_R^{(n)}:=(\mathtt A_n,\Gamma_n,\mathtt S_1,\mathtt P(a_R^{(n)}))$ with $\mathtt A_n:=\{\mathtt a_i:1\leq i\leq n\}$ and $\Gamma_n:=\{\mathtt S_i:1\leq i\leq n+1\}$.
\begin{example}\rm
Let $a^{(5)}_{\{(1,4),(2,3),(3,5),(4,2)\}}$, we have
\[\mathtt P(a^{(5)}_{\{(1,4),(2,3),(3,5),(4,2)\}})
=\left\{
\begin{array}{l}
\mathtt S_1\rightarrow \mathtt a_1\mathtt S_2,\\
\mathtt S_1\rightarrow \mathtt S_4,\\
\mathtt S_2\rightarrow \mathtt a_2\mathtt S_3,\\
\mathtt S_2\rightarrow \mathtt S_3,\\
\mathtt S_3\rightarrow \mathtt a_3\mathtt S_4,\\
\mathtt S_3\rightarrow \mathtt S_5,\\
\mathtt S_4\rightarrow \mathtt a_4\mathtt S_5,\\
\mathtt S_4\rightarrow \mathtt S_2,\\
\mathtt S_5\rightarrow \mathtt a_5\mathtt S_6,\\
\mathtt S_6\rightarrow \varepsilon.
\end{array}
\right.
\]
\end{example}
\medskip
\begin{lemma}\label{LangRat}
The language $\mathbb L(\mathbf G_R^{(n)})$ denoted  by the grammar $\mathbf G_R^{(n)}$ is regular.
\end{lemma}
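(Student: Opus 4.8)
The plan is to exhibit the grammar $\mathbf G_R^{(k)}$ as a regular (Type-3) grammar, since regularity of the generated language then follows immediately from the Chomsky--Sch\"utzenberger hierarchy. The only subtlety is that the productions of type~2, namely $\mathtt S_i \rightarrow \mathtt S_{i'}$ whenever $(i,i')\in R$, are \emph{unit productions} (a single nonterminal on the right-hand side), and such productions are not literally in right-linear normal form. So the main work is to argue that these unit productions can be eliminated without changing the generated language, leaving a grammar in which every production has the form $\mathtt S_i \rightarrow \mathtt a_i \mathtt S_{i+1}$, or $\mathtt S_i \rightarrow \mathtt a_i \mathtt S_{i+1}'$ with the tail absorbed, or $\mathtt S_{k+1}\rightarrow\varepsilon$ --- that is, a genuinely right-linear grammar.

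Concretely, first I would observe that $\mathbf G_R^{(k)}$ is a context-free grammar whose nonterminals are exactly $\Gamma_k=\{\mathtt S_1,\dots,\mathtt S_{k+1}\}$, all of whose productions are either right-linear ($\mathtt S_i\rightarrow \mathtt a_i\mathtt S_{i+1}$, $\mathtt S_{k+1}\rightarrow\varepsilon$) or unit productions ($\mathtt S_i\rightarrow \mathtt S_{i'}$). Since the set of nonterminals is finite, the unit-production relation is just the finite relation $R$ on $\Gamma_k$, and I would pass to its reflexive-transitive closure. For the elimination step, for each pair $(i,i')$ with $i'$ reachable from $i$ by a chain of unit productions, I add to $\mathtt S_i$ every non-unit production already available at $\mathtt S_{i'}$, and then delete all unit productions. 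This is the standard unit-production removal procedure; it yields an equivalent grammar $\mathbf G'$ (generating the same language) in which every production is of the form $\mathtt S_i \rightarrow \mathtt a_j \mathtt S_{j+1}$ or $\mathtt S_{k+1}\rightarrow\varepsilon$, hence right-linear.

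Having produced an equivalent right-linear grammar, I conclude that $\mathbb L(\mathbf G_R^{(k)})=\mathbb L(\mathbf G')$ is generated by a Type-3 grammar and is therefore regular, as claimed. Alternatively, and perhaps more cleanly, I would phrase the same argument by directly building a finite automaton: take states $\mathtt S_1,\dots,\mathtt S_{k+1}$, make $\mathtt S_1$ initial and $\mathtt S_{k+1}$ final, put an $\mathtt a_i$-labelled transition from $\mathtt S_i$ to $\mathtt S_{i+1}$ for each $i$, and put an $\varepsilon$-transition from $\mathtt S_i$ to $\mathtt S_{i'}$ whenever $(i,i')\in R$. The accepted language of this $\varepsilon$-NFA is exactly $\mathbb L(\mathbf G_R^{(k)})$, and every language accepted by a finite automaton is regular.

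The step I expect to be the only real point requiring care is verifying that the unit-production elimination (equivalently, the $\varepsilon$-transition removal) preserves the generated language: one must check that a derivation in $\mathbf G_R^{(k)}$ interleaving terminal-producing steps with unit steps corresponds exactly to an accepting run of the automaton, and conversely. This is a routine induction on derivation length and is entirely standard, so I would only sketch it; the bulk of the statement is really just the recognition that $\mathbf G_R^{(k)}$ is right-linear up to unit productions.
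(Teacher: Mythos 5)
Your proposal is correct, and its ``alternative'' formulation --- the $\varepsilon$-NFA with states $\mathtt S_1,\dots,\mathtt S_{k+1}$, initial state $\mathtt S_1$, final state $\mathtt S_{k+1}$, transitions $\mathtt S_i\rightarrow\mathtt S_{i+1}$ labelled $\mathtt a_i$, and $\varepsilon$-transitions for each $(i,i')\in R$ --- is exactly the paper's proof of Lemma~\ref{LangRat}. Your primary route via unit-production elimination is merely the grammar-side rendering of the same observation, so the two arguments are essentially identical.
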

\begin{proof}
 It is sufficient to remark that $\mathbb L(\mathbf G_R^{(n)})$ is recognized by the $\varepsilon$-automaton $\mathcal A(a_R^{(n)})=(\Gamma_n, \mathtt A_n, \delta_{R}^{(n)}, \mathtt S_1, \{\mathtt S_{n+1}\})$ where the transitions $\delta_{R}^{(n)}$ are
\[
\begin{array}{l}
\displaystyle \mathtt S_i\mathop\rightarrow^{\mathtt a_i}\mathtt S_{i+1}\ \mbox{ for each }1\leq i\leq k,\\
\displaystyle\mathtt S_i\mathop\rightarrow^{\varepsilon}\mathtt S_{j}\ \mbox{ for each }(i,j)\in R.
\end{array}
\]
\end{proof}
\medskip

Note that the automaton $\mathcal A(a_R^{(n)})$ is just an interpretation of the relation $R$ by adding transitions.
\begin{example}\label{Op2Aut}\rm
We obtain the automaton  $\mathcal A(a^{(5)}_{\{(1,4),(2,3),(3,5),(4,2)\}})$ from the graph of the relation $\{(1,4),(2,3),(3,5),(4,2)\}$
%
%
%
%
%
%
%
%
%

\begin{figure}[H]
    \centerline{
      \begin{tikzpicture}[node distance=1.25cm,bend angle=30]
            \node (1) {$1$};
            \node[above of=1] (2) {$2$};
            \node[above right of=2] (3) {$3$};
            \node[right of=3] (4) {$4$};
            \node[below right of=4] (5) {$5$};
            \node[below of=5] (6) {$6$};
            \path[->]
              (1) edge (4)
              (2) edge (3)
              (3) edge (5)
              (4) edge (2)
            ;
      \end{tikzpicture}
    }
\end{figure}

by adding transitions:

\begin{figure}[H]
    \centerline{
      \begin{tikzpicture}[node distance=1.25cm,bend angle=40]
            \node[state,initial] (1) {$\mathtt{S}_1$};
            \node[state,above of=1] (2) {$\mathtt{S}_2$};
            \node[state,above right of=2] (3) {$\mathtt{S}_3$};
            \node[state,right of=3] (4) {$\mathtt{S}_4$};
            \node[state,below right of=4] (5) {$\mathtt{S}_5$};
            \node[state,below of=5,accepting] (6) {$\mathtt{S}_6$};
            \path[->]
              (1) edge node[above,sloped,pos=0.5] {$\varepsilon$} (4)
              (2) edge node[above,sloped,pos=0.5] {$\varepsilon$} (3)
              (3) edge node[below,sloped,pos=0.6] {$\varepsilon$} (5)
              (4) edge node[below,sloped,pos=0.5] {$\varepsilon$} (2)
            ;
            \path[->,color=PineGreen]
              (1) edge[bend left] node[above,sloped,pos=0.5] {$\mathtt{a_1}$} (2)
              (2) edge[bend left] node[above,sloped,pos=0.5] {$\mathtt{a_2}$} (3)
              (3) edge[bend left] node[above,sloped,pos=0.5] {$\mathtt{a_3}$} (4)
              (4) edge[bend left] node[above,sloped,pos=0.5] {$\mathtt{a_4}$} (5)
              (5) edge[bend left] node[above,sloped,pos=0.5] {$\mathtt{a_5}$} (6)
            ;
      \end{tikzpicture}
    }
\end{figure}

\end{example}
If $L_1,\dots, L_n$ are  languages,  we define $\mathbf G_R^{(n)}(L_1,\dots,L_n)=\mathbb L(\mathbf G_R^{(n)})|_{\mathtt a_i=L_i}$, that is the language $\mathbb L(G_R^{(n)})$ denoted  by the grammar $\mathbf G_R^{(n)}$ where each letter $\mathtt a_i$ is replaced by the language $L_i$.
\begin{example}
\rm
Using the same relation than in Example \ref{Op2Aut} we find
\[
\mathbb L(G_R^{(5)})=(\mathtt a_1+\varepsilon)(\mathtt a_3+\mathtt a_2\mathtt a_3)^*(\mathtt a_5+\mathtt a_2\mathtt a_5+
(\mathtt a_3+\mathtt a_2\mathtt a_3)\mathtt a_4\mathtt a_5)+\mathtt a_4\mathtt a_5.
\]
Therefore, if $L_1,\dots, L_5$ are five languages,
\[
G_R^{(5)}(L_1,\dots,L_5)=(L_1+\varepsilon)(L_3+L_2L_3)^*(L_5+L_2L_5+
(L_3+L_2L_3)L_4L_5)+L_4L_5.
\]

\end{example}
It is easy to see that this construction is compatible with the partial compositions in $\ARef$. Indeed,
\begin{equation}\label{AR2G}
\mathbf G_{R\circ_iR'}^{(n+m-1)}(L_1,\dots,L_{n+m-1})=\mathbf G_{R}^{(n)}(L_1,\dots,L_{i-1},\mathbf G_{R'}^{(m)}(L_i,\dots,L_{i+m-1}),L_{i+m},\dots,L_{n+m-1}),\end{equation}
 for each $a_R^{(n)}\in \ARef_n$,\ $a_{R'}^{(m)}\in \ARef_{m}$, and $i\leq n$.
Indeed,
\[
\mathtt P(a_{R\circ_i R'}^{(n+m-1)})=\left\{
\begin{array}{l}
\mathtt S_j\rightarrow \mathtt a_{j}\mathtt S_{j+1}\mbox{ for each }0\leq j\leq n+m-1,\\
\mathtt S_{\ell}\rightarrow\mathtt  S_{\ell'}\mbox{ if }(\ell,\ell')=\ShiftDiam{i,m}{(j,j')}\mbox{ for }(j,j')\in R,\\
\mathtt S_{\ell}\rightarrow\mathtt  S_{\ell'}\mbox{ if }(\ell,\ell')=\ShiftDiam{0,i}{(j,j')}\mbox{ for }(j,j')\in R',\\
\mathtt S_{n+m}\rightarrow\varepsilon.
\end{array}
\right.
\]
Hence, we have
\[\begin{array}{rcl}
\mathtt P\left(a_{R\circ_i R'}^{(n+m-1)}\right)&=&\{\mathtt S_j\rightarrow \mathtt a_{j}\mathtt S_{j+1}:0\leq j\leq n+m-1\}\\
&& \cup\{\mathtt S_{\ell}\rightarrow\mathtt  S_{\ell'}: \mathtt S_{j}\rightarrow\mathtt  S_{j'}\in \mathtt P(a_R^{(n)})\mbox{ and } (\ell,\ell')=\ShiftDiam{i,m}{(j,j)}\}\\
&& \cup\{\mathtt S_{\ell}\rightarrow\mathtt  S_{\ell'}: \mathtt S_{j}\rightarrow\mathtt  S_{j'}\in \mathtt P(a_{R'}^{(m)})\mbox{ and } (\ell,\ell')=\ShiftDiam{0,i}{(j,j)}\}\\
&&\cup \{ \mathtt S_{n+m-1}\rightarrow\varepsilon\}
\end{array}
\]
We deduce that
\[
\mathbb L(\mathbf G_{R\circ_iR'}^{(n+m-1)})=
\mathbf G_{R}^{(n)}(\mathtt a_1,\dots,\mathtt a_{i-1},\mathbf G_{R'}^{(m)}(\mathtt a_i,\dots,\mathtt a_{i+m-1}),\mathtt a_{i+m},\dots,\mathtt a_{n+m-1}).
\]
This implies (\ref{AR2G}).
\begin{remark}\rm Alternatively, the construction on grammars can be described in terms of automata. The automaton $\mathcal A(a^{(n+m-1)}_{R\circ_i R'})$ is obtained by replacing the transition
$\displaystyle\mathtt S_i\mathop\rightarrow^{\mathtt a_i}\mathtt S_{i+1}$ in $\mathcal A(a^{(n)}_{R})$ by a copy of the automaton $\mathcal A(a^{(m)}_{R'})$ and hence relabeling the vertices and edges.
\end{remark}
\noindent
Setting $a_R^{(n)}.(L_1,\dots,L_n)=\mathbf G_R^{(n)}(L_1,\dots,L_n)$ we define an action of the operad $\ARef$ on languages.
\begin{theorem}\label{AREFmod}
The sets $2^{\Sigma^*}$ and $Reg(\Sigma)$ are $\ARef$-modules.
\end{theorem}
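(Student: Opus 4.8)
The plan is to verify directly the single axiom appearing in the paper's definition of a $\goth P$-module, namely that the assignment $\pp\mapsto(\pp:\mathbf S^n\to\mathbf S)$ is compatible with the partial compositions $\circ_i$; the substantive ingredient has in fact already been assembled in the discussion preceding the statement. First I would fix the action to be $a_R^{(k)}.(L_1,\dots,L_k):=\mathbf G_R^{(k)}(L_1,\dots,L_k)$, which is manifestly a well-defined map $(2^{\Sigma^*})^k\rightarrow 2^{\Sigma^*}$. For the case $\mathbf S=2^{\Sigma^*}$, the condition to check is
\[
a_R^{(k)}.\bigl(L_1,\dots,L_{i-1},\,a_{R'}^{(k')}.(L_i,\dots,L_{i+k'-1}),\,L_{i+k'},\dots,L_{k+k'-1}\bigr)=\bigl(a_R^{(k)}\circ_i a_{R'}^{(k')}\bigr).(L_1,\dots,L_{k+k'-1}),
\]
and since $a_R^{(k)}\circ_i a_{R'}^{(k')}=a_{R\circ_iR'}^{(k+k'-1)}$ by the very definition of the operad $\ARef=\mathrm{OP}(\Diamond)$, this is exactly the identity \eqref{AR2G} already established via the analysis of the production sets $\mathtt P$. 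Thus the module axiom holds for $2^{\Sigma^*}$ with no further work; for completeness one also notes that the unit $a_\emptyset^{(1)}=\mathbf 1$ acts as the identity map, since the grammar $\mathbf G_\emptyset^{(1)}$ generates the language $\mathtt a_1$ and the substitution $\mathtt a_1=L$ returns $L$.

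For $\mathbf S=Reg(\Sigma)$ the only additional point is that the action restricts to regular languages: I must show that if $L_1,\dots,L_k$ are regular then $\mathbf G_R^{(k)}(L_1,\dots,L_k)$ is again regular. This follows from the automaton picture underlying Lemma~\ref{LangRat}: starting from the $\varepsilon$-automaton $\mathcal A(a_R^{(k)})$, I replace each lettered transition $\mathtt S_i \xrightarrow{\mathtt a_i} \mathtt S_{i+1}$ by a finite automaton recognizing $L_i$, glued in through fresh $\varepsilon$-transitions to its initial and final states; the resulting automaton recognizes $\mathbf G_R^{(k)}(L_1,\dots,L_k)$, which is therefore regular. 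Equivalently, this is the standard closure of the regular languages under regular substitution. Once this closure is known, the equality \eqref{AR2G} restricts verbatim to regular inputs, so $Reg(\Sigma)$ inherits the $\ARef$-module structure from $2^{\Sigma^*}$.

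The proof is consequently more a matter of bookkeeping than of computation, because the key equality \eqref{AR2G} is in hand and the compatibility condition is literally that identity read through the action. The one place demanding a genuine, though routine, argument is the closure of $Reg(\Sigma)$ under the action, i.e. making the automaton-gluing precise: one must verify that substituting automata for the lettered edges of $\mathcal A(a_R^{(k)})$ yields an automaton whose language is exactly $\mathbf G_R^{(k)}(L_1,\dots,L_k)$. I expect this to be the main (but standard) obstacle, settled by an induction on the accepting runs that tracks how each lettered edge is traversed and how the substituted sub-automata contribute the words of the $L_i$.
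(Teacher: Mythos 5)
Your proposal is correct and follows essentially the same route as the paper: the module axiom for $2^{\Sigma^*}$ is read off directly from identity \eqref{AR2G}, and the statement for $Reg(\Sigma)$ reduces to the regularity of $\mathbb L(\mathbf G_R^{(k)})$ from Lemma~\ref{LangRat} combined with closure of regular languages under regular substitution. The paper leaves the substitution-closure step implicit where you spell out the automaton-gluing, but this is elaboration of the same argument, not a different one.
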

\begin{proof}
The fact that  $2^{\Sigma^*}$ is a $\ARef$-module is a direct consequence of (\ref{AR2G}).\\
Remarking that $\mathbb L(G_R^{(n)})\in Reg(\mathtt A_n)$ (Lemma \ref{LangRat}), we deduce that $\mathbf G_R^{(n)}(L_1,\dots,L_n)\in Reg(\Sigma)$ when $L_1,\dots,L_n\in Reg(\Sigma)$. Equivalently, $Reg(\Sigma)$ is $\ARef$-module.
\end{proof}
\medskip

Note that the action of $\ARef$ can be defined directly from $\mathcal{DT}$.
For any $(a_{T_1}^{(n)},a_{T_2}^{(n)})\in\mathcal{DT}_n$, we construct the grammar
$\mathbf G_{T_1,T_2}^{(n)}:=(\mathtt A_n,\Gamma_n,\mathtt S_1,\mathtt P_{DT}(a_R^{(n)}))$ where the production rules $\mathtt P_{DT}\left(a_R^{(n)}\right)$ are
\begin{enumerate}
\item $\mathtt S_i\rightarrow \mathtt a_{i}\mathtt S_{i+1}$ for each $1\leq i\leq n$,
\item $\mathtt S_i\rightarrow \mathtt S_{i'}$ if $(i',i-1)\in T_2$ or $(i,i'-1)\in T_1$,
\item $\mathtt S_{n+1}\rightarrow \varepsilon$.
\end{enumerate}
\begin{example}\rm
Let $((13)(24)(34),(23))\in\mathcal{DT}_5$. The grammar $\mathbf G_{(13)(22)(34),(23)}^{(5)}$ is
\begin{equation}\label{Gram1}\left\{\begin{array}{rcl}
\mathtt S_1&\rightarrow& \mathtt a_1\mathtt S_2,\\
\mathtt S_1&\rightarrow& \mathtt S_4,\\
\mathtt S_2&\rightarrow &\mathtt a_2\mathtt S_3,\\
\mathtt S_2&\rightarrow &\mathtt S_3,\\
\mathtt S_3&\rightarrow &\mathtt a_3\mathtt S_4,\\
\mathtt S_3&\rightarrow &\mathtt S_5,\\
\mathtt S_4&\rightarrow &\mathtt a_4\mathtt S_5,\\
\mathtt S_4&\rightarrow &\mathtt S_2,\\
\mathtt S_5&\rightarrow &\mathtt a_5\mathtt S_6,\\
\mathtt S_6&\rightarrow&\varepsilon.\end{array}\right.\end{equation}
Note that we recover the grammar $\mathbf G_{\{(1,4),(2,3),(3,5),(4,2)\}}^{(5)}$.
\end{example}
In general we have
\begin{proposition} For each  $(a_{T_1}^{(n)},a_{T_2}^{(n)})\in\mathcal{DT}_n$,
$\mathbf G_{T_1,T_2}^{(n)}=\mathbf G_{\xi(T_1,T_2)}^{(n)}.$\\
Here $\xi$ denotes the morphism from $\mathcal{DT}$ to $\ARef$ as defined in Section \ref{SDT2ARef}.
\end{proposition}

\subsection{Operadic expressions for regular languages}
The following proposition shows that any regular language admits an expression involving an operator of $\ARef$ 	and symbols of the alphabet or $\emptyset$.
\begin{proposition}\label{Reg2Op}
Each regular language $L\in Reg(\Sigma)$ can be written as
\[
L=a_R^{(n)}(\alpha_1,\dots,\alpha_n)
\]
for some $n\geq 1$, $a_R^{(n)}\in\ARef_n$, and $\alpha_1,\dots,\alpha_n\in\{\{\mathtt a\}:\mathtt a\in\Sigma\}\cup\{\emptyset\}$.
\end{proposition}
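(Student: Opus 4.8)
The plan is to show that every regular language can be realized as $\mathbf G_R^{(k)}(\alpha_1,\dots,\alpha_k)$ by starting from a finite automaton recognizing $L$ and encoding its transition structure directly into a relation $R\in\ARef_k$. The essential observation is that the grammar $\mathbf G_R^{(k)}$ and its associated $\varepsilon$-automaton $\mathcal A(a_R^{(k)})$ from Lemma~\ref{LangRat} are nothing but a relation $R$ read as a set of $\varepsilon$-transitions, overlaid on a fixed ``spine'' of lettered transitions $\mathtt S_i \xrightarrow{\mathtt a_i}\mathtt S_{i+1}$. So the task reduces to reshaping an arbitrary automaton for $L$ into this very specific normal form.

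First I would fix a (possibly nondeterministic, $\varepsilon$-free) automaton $\mathcal M$ recognizing $L$, with state set $\{q_1,\dots,q_m\}$, and without loss of generality a single initial state and a single final state (adding $\varepsilon$-moves or a fresh source/sink if needed). The key structural point of the target form is that each index $i\in\{1,\dots,k\}$ carries exactly one \emph{lettered} arc, namely $\mathtt S_i\xrightarrow{\mathtt a_i}\mathtt S_{i+1}$, where the label $\alpha_i$ will be instantiated as a single letter $\{\mathtt a\}$ or as $\emptyset$; all other connectivity must be carried by the $\varepsilon$-arcs, which are exactly the pairs of $R$. So I would enumerate the lettered transitions of $\mathcal M$ as $t_1,\dots,t_k$ and build a chain of $k+1$ states $\mathtt S_1,\dots,\mathtt S_{k+1}$ in which consuming the letter on the spine arc at position $i$ simulates firing transition $t_i$; whenever $\mathcal M$ can pass from the head of one transition to the tail of another without reading input, I insert the corresponding pair into $R$ as an $\varepsilon$-edge. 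Transitions that should \emph{not} be available on the spine are neutralized by setting the corresponding $\alpha_i=\emptyset$, which deletes that lettered arc from the language.

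\textbf{The main obstacle} will be bookkeeping the correspondence between the automaton's states and the \emph{linear} arrangement of indices $1,\dots,k+1$ forced by the spine: the target form rigidly lays out its ``main path'' arcs in a line, whereas an arbitrary automaton has no such linear order and may visit a state several times. The way to dissolve this is to make the $\varepsilon$-relation $R$ do all the routing: because $R$ may be any antireflexive relation on $\{1,\dots,k+1\}^2$ (not required to respect $\le$), I can encode arbitrary backward and forward jumps, and in particular reproduce any loop structure via the $\varepsilon$-edges — exactly the freedom that distinguishes $\ARef$ from the order-compatible $\RAS$ and that makes infinite (star) languages reachable. Concretely, I would argue that the $\varepsilon$-automaton $\mathcal A(a_R^{(k)})$ so constructed is, after instantiating the $\alpha_i$, language-equivalent to $\mathcal M$ by exhibiting a simulation in both directions between accepting runs.

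Finally, I would invoke Lemma~\ref{LangRat} to confirm the constructed object genuinely lies in the intended class and read off the decomposition $L=a_R^{(k)}(\alpha_1,\dots,\alpha_k)$ with each $\alpha_i\in\{\{\mathtt a\}:\mathtt a\in\Sigma\}\cup\{\emptyset\}$, as required. An even cleaner route, which I would prefer to present, is to proceed by structural induction on a regular expression for $L$: the base cases $\emptyset$, $\varepsilon$, and a single letter $\mathtt a$ are immediate small operators, and the inductive steps for union, concatenation, and Kleene star are handled by composing smaller $\ARef$-operators using the compatibility identity~\eqref{AR2G}, since $Reg(\Sigma)$ is an $\ARef$-module by Theorem~\ref{AREFmod}. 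This inductive presentation sidesteps the global linearization difficulty by building the relation compositionally, at the cost of verifying that star in particular is expressible — which is precisely where a single well-chosen $\varepsilon$-edge creating a cycle in $R$ does the work.
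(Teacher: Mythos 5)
Your closing route --- structural induction over a regular expression --- is the paper's own proof: the paper starts from $\{\mathtt a\}=a^{(1)}_{\emptyset}(\{\mathtt a\})$, $\{\varepsilon\}=a^{(1)}_{\{(1,2)\}}(\emptyset)$, $\emptyset=a^{(1)}_{\emptyset}(\emptyset)$, then gives explicit relations realizing $L+L'$, $LL'$ and $L^*$, and concludes by induction; your star step (one cycle-creating pair of $\varepsilon$-edges) is literally the paper's formula $L^*=a^{(k)}_{R\cup\{(1,k+1),(k+1,1)\}}(\alpha_1,\dots,\alpha_k)$. Two warnings about how you phrase the inductive step, though. First, union and concatenation are \emph{not} plain binary compositions of the two given operators: gluing two blocks at a shared state lets their $\varepsilon$-edges interfere, which is exactly what the paper's Remark after the proposition records ($a^{(2)}_{\{(2,1),(3,2)\}}(\mathtt a,\mathtt b)=(\mathtt a^+\mathtt b^+)^+\neq \mathtt a^+\mathtt b^+$); one must compose with \emph{ternary} templates whose middle input is instantiated by $\emptyset$, namely $a^{(3)}_{\{(2,3)\}}(L,\emptyset,L')=LL'$ and $a^{(3)}_{\{(1,3),(2,4)\}}(L,\emptyset,L')=L+L'$. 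So the delicate steps are union and concatenation, not star. Second, invoking \eqref{AR2G} and Theorem~\ref{AREFmod} wholesale does not finish the job: when an operand's relation carries $\varepsilon$-edges into its initial state (precisely what the star step creates), the glued relation accepts extra words. For instance, combining $\mathtt a^*=a^{(1)}_{\{(1,2),(2,1)\}}(\mathtt a)$ with $\mathtt b$ by the union recipe gives a relation admitting the path $\mathtt S_1\xrightarrow{\mathtt a}\mathtt S_2\xrightarrow{\varepsilon}\mathtt S_1\xrightarrow{\varepsilon}\mathtt S_3\xrightarrow{\mathtt b}\mathtt S_4$, which reads $\mathtt a\mathtt b\notin \mathtt a^*+\mathtt b$. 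A complete argument must maintain an invariant (no $\varepsilon$-edges into the initial state nor out of the final one, restorable by padding with extra $\emptyset$ positions); the paper's own write-up glosses over this point as well, but a self-contained proof should close it.

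The automaton encoding that occupies most of your text has a more basic gap. With one spine position per lettered transition, the state $\mathtt S_{i+1}$ simultaneously plays the roles ``just fired $t_i$'' and ``about to fire $t_{i+1}$'', so a run can fall through the spine from $t_i$ to $t_{i+1}$ even when the head of $t_i$ is not the tail of $t_{i+1}$; likewise $\mathtt S_1$ and $\mathtt S_{k+1}$ wrongly identify the initial and final states of $\mathcal M$ with the tail of $t_1$ and the head of $t_k$. Your proposed repair --- setting $\alpha_i=\emptyset$ for offending transitions --- does not work: it deletes that transition from the simulation altogether rather than cutting the unwanted adjacency. The correct repair is to interleave fresh $\emptyset$-labelled positions between consecutive transitions and at both ends, so that \emph{all} connectivity is carried by the $\varepsilon$-edges of $R$. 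With that modification your first route becomes a valid, genuinely different (automaton-theoretic rather than inductive) proof, and it automatically satisfies the invariant discussed above.
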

\begin{proof}
First note that $\{\mathtt a\}=a^{(1)}_{\emptyset}(\{\mathtt a\})$, $\{\varepsilon\}=a^{(1)}_{\{(1,2)\}}(\emptyset)$, and $\emptyset=a^{(1)}_{\emptyset}(\emptyset)$.\\
 Suppose now that $L, L'\in Reg(\Sigma)$ are two regular languages satisfying
\[
L=a_R^{(n)}(\alpha_1,\dots,\alpha_n)\mbox{ and }L'=a_{R'}^{(m)}(\alpha'_1,\dots,\alpha'_{m})
\]
for some $m\geq 1$, $a_R^{(n)}\in\ARef_n$, $a_{R'}^{(m)}\in\ARef_{m}$; and $\alpha_1,\dots,\alpha_n,\alpha'_1,\dots,\alpha'_{m}\in\{\{\mathtt a\}:\mathtt a\in\Sigma\}\cup\{\emptyset\}$. We have
\begin{equation}\label{L+L'}L+L'=a_{R''}^{(n+m+1)}(\alpha_1,\dots,\alpha_n,\emptyset,\alpha'_1,\dots,\alpha'_{m})\ \mbox{ with }\ R''=R\cup \ShiftDiamnk{0,n+1}{R'}\cup \{(1,n+2),(n+1,n+m+2)\},\end{equation}
\begin{equation}\label{LL'}LL'=a_{R''}^{(n+m+1)}(\alpha_1,\dots,\alpha_n,\emptyset,\alpha'_1,\dots,\alpha'_{m})\ \mbox{ with }\ R''=R\cup \ShiftDiamnk{0,n+1}{R'}\cup \{(n+1,n+2)\},\end{equation}
\begin{equation}\label{L^*}L^*=a_{R\cup \{(n+1,1),(1,n+1)\}}^{(n)}(\alpha_1,\dots,\alpha_n).\end{equation}

The property is obtained by a straightforward induction.
\end{proof}
\begin{remark}
\rm Note that in Formula~(\ref{LL'}), the symbol $\emptyset$ is important  for the computation of the catenation. For instance, we have $$\mathtt a^+\mathtt b^+=a_{\{(2,1)\}}^{(1)}(\mathtt a)\cdot a_{\{(2,1)\}}^{(1)}(\mathtt b)=a_{\{(2,1),(2,3),(4,3)\}}^{(3)}(\mathtt a,\emptyset,\mathtt b)\neq
a_{\{(2,1),(3,2)\}}^{(2)}(\mathtt a,\mathtt b)=(\mathtt a^+\mathtt b^+)^+.$$ But in some cases it may be omitted. For instance, $$a_{\{(1,2)\}}^{(1)}(\mathtt a)\cdot a_{\{(1,2)\}}^{(1)}(\mathtt b)=a_{\{(1,2),(2,3)\}}^{(2)}(\mathtt a,\mathtt b)=\varepsilon+\mathtt a+\mathtt b+\mathtt a\mathtt b.$$
\end{remark}
\noindent Let us give some examples. First we illustrate the construction described in the proof of Proposition
\ref{Reg2Op}.
\begin{example}
\rm Consider the languages $L=\mathtt b(\mathtt a\mathtt  b^*)+\mathtt a^*$. We have
$
\{\mathtt a\}=a_{\emptyset}^{(1)}(\mathtt a),\ \{\mathtt b\}=a_{\emptyset}^{(1)}(\mathtt b).
$
So
$$
\mathtt b^*=a_{(2,1),(1,2)}^{(1)}(\mathtt b),\, \mathtt a\mathtt  b^*=a_{(4,3),(3,4)}^{(3)}(\mathtt a,\emptyset,\mathtt b)\mbox{ and }\mathtt b(\mathtt a\mathtt  b^*)=a_{(6,5),(5,6)}^{(5)}(\mathtt b,\emptyset,\mathtt a,\emptyset,\mathtt b).
$$
On the other hand $\mathtt a^*=a_{(2,1),(1,2)}^{(1)}(\mathtt a)$, hence
$$L=a_{(6,5),(5,6),(8,7),(7,8),(6,8),(1,7)}^{(7)}(\mathtt b,\emptyset,\mathtt a,\emptyset,\mathtt b,\emptyset,\mathtt a).
$$
\end{example}
\medskip
Manipulating the relations allows to obtain some languages from others. We give here few constructions.
\begin{example}
\
\rm
\begin{itemize}
\item Consider a language $L=a^{(n)}_R(\alpha_1,\dots,\alpha_n)$ with $R\in\ARef_n$ and $\alpha_i\in\{\{\mathtt a\}:\mathtt a\in\Sigma\}$. We define $R_P:=R\cup \{(i,n+1):1\leq i\leq n\}$. The language $a^{(n)}_{R_P}(\alpha_1,\dots,\alpha_n)$ is the set of the prefixes of $L$.\\
For instance, consider $L=a^{(3)}_{(4,1),(1,4)}(\mathtt a,\mathtt b,\mathtt c)=(\mathtt a\mathtt b\mathtt c)^*$ we have
$L=a^{(3)}_{(4,1),(1,4),(2,4),(3,4)}(\mathtt a,\mathtt b,\mathtt c)=(\mathtt a\mathtt b\mathtt c)^*\{\varepsilon,\mathtt a,\mathtt a\mathtt b\}$.
\item For a more general regular language $L$, Proposition \ref{Reg2Op} implies that there exists $n>0$, $R\in\ARef_n$, and $\alpha_i\in\{\{\mathtt a\}:\mathtt a\in\Sigma\}\cup\{\emptyset\}$ satisfying  $L=a^{(n)}_R(\alpha_1,\dots,\alpha_n)$. An \emph{admissible position} is an integer $1\leq i\leq n+1$ such that there exists a path
$\displaystyle i_1=1\mathop\rightarrow^{\beta_1}i_2\mathop\rightarrow^{\beta_2}i_3\cdots i_{p-1}\mathop\rightarrow^{\beta_p}i_p=i_{n+1}$
in  $\mathcal A(a_R^{(n)})$  with either $\beta_i=\varepsilon$ either $\beta_i=\mathtt a_i$ with $\alpha_i\neq\emptyset$ such that $i_\ell=i$ for some $1\leq\ell\leq p-1$. The set of admissible positions  is denoted by $\mathrm{Adm}(R;\alpha_1,\dots,\alpha_n)$. We define $R_P:=R\cup\{(i,n+1):i\in\mathrm{Adm}(R;\alpha_1,\dots,\alpha_n),i\neq n+1\}$. The language $a^{(n)}_{R_P}(\alpha_1,\dots,\alpha_n)$ is the set of the prefixes of $L$.\\
For instance consider $L=a^{(8)}_{(1,4),(3,6),(6,1),(6,9)}(\mathtt a,\mathtt b,\emptyset,
\mathtt c,\mathtt d,\mathtt a,\emptyset,\mathtt b)$. We have $L=(\mathtt{ab}+\mathtt{cd})^+$,
\begin{equation*}
\begin{split}\mathcal A(a_{(1,4),(3,6),(6,1),(6,9)}^{(8)}) =\end{split}\begin{split}
%
      \begin{tikzpicture}[node distance=1.25cm,bend angle=15]
            \node[state,initial,color=NavyBlue] (1) {$\mathtt{S}_1$};
            \node[state,above of=1,color=NavyBlue] (2) {$\mathtt{S}_2$};
            \node[state,above of=2,color=NavyBlue] (3) {$\mathtt{S}_3$};
            \node[state,above right of=3,color=NavyBlue] (4) {$\mathtt{S}_4$};
            \node[state,above right of=4,color=NavyBlue] (5) {$\mathtt{S}_5$};
            \node[state,right of=5,color=NavyBlue] (6) {$\mathtt{S}_6$};
            \node[state,below right of=6] (7) {$\mathtt{S}_7$};
            \node[state,below right of=7] (8) {$\mathtt{S}_8$};
            \node[state,below of=8,accepting] (9) {$\mathtt{S}_9$};
            \path[->]
              (1) edge[bend right] node[above,sloped,pos=0.5] {$\varepsilon$} (4)
              (3) edge[bend right] node[above,sloped,pos=0.5] {$\varepsilon$} (6)
              (6) edge node[below,sloped,pos=0.6] {$\varepsilon$} (1)
              (6) edge[bend right] node[below,sloped,pos=0.5] {$\varepsilon$} (9)
            ;
            \path[->,color=PineGreen]
              (1) edge node[above,sloped,pos=0.5] {$\mathtt{a}$} (2)
              (2) edge node[above,sloped,pos=0.5] {$\mathtt{b}$} (3)
              (4) edge node[above,sloped,pos=0.5] {$\mathtt{c}$} (5)
              (5) edge node[above,sloped,pos=0.5] {$\mathtt{d}$} (6)
              (6) edge node[above,sloped,pos=0.5] {$\mathtt{a}$} (7)
              (8) edge node[above,sloped,pos=0.5] {$\mathtt{b}$} (9)
            ;
            \path[->,color=WildStrawberry]
              (3) edge node[above,sloped,pos=0.5] {$\emptyset$} (4)
              (7) edge node[above,sloped,pos=0.5] {$\emptyset$} (8)
            ;
      \end{tikzpicture}
\end{split} \end{equation*} and $$\mathrm{Adm}(\{(1,4),(3,6),(6,1),(6,9)\};\mathtt a,\mathtt b,\emptyset,
\mathtt c,\mathtt d,\mathtt a,\emptyset,\mathtt b)=\{1,2,3,4,5,6\}.$$
 So $R_P=\{(1,4),(3,6),(6,1),(6,9),(1,9),(2,9),(3,9),(4,9),(5,9)\}$. We check that $a^{(8)}_{R_P}(\mathtt a,\mathtt b,\emptyset,
\mathtt c)=(\mathtt{ab}+\mathtt{cd})^*(\varepsilon+\mathtt a+\mathtt c)=\mathrm{Pref}(L).$
Graphically,
\begin{equation*}
\begin{split}\mathcal A(a_{R_p}^{(8)}) =\end{split}
\begin{split}
%
      \begin{tikzpicture}[node distance=1.25cm,bend angle=15]
            \node[state,initial,color=NavyBlue] (1) {$\mathtt{S}_1$};
            \node[state,above of=1,color=NavyBlue] (2) {$\mathtt{S}_2$};
            \node[state,above of=2,color=NavyBlue] (3) {$\mathtt{S}_3$};
            \node[state,above right of=3,color=NavyBlue] (4) {$\mathtt{S}_4$};
            \node[state,above right of=4,color=NavyBlue] (5) {$\mathtt{S}_5$};
            \node[state,right of=5,color=NavyBlue] (6) {$\mathtt{S}_6$};
            \node[state,below right of=6] (7) {$\mathtt{S}_7$};
            \node[state,below right of=7] (8) {$\mathtt{S}_8$};
            \node[state,below of=8,accepting] (9) {$\mathtt{S}_9$};
            \path[->]
              (1) edge[bend right] node[above,sloped,pos=0.5] {$\varepsilon$} (4)
              (3) edge[bend right] node[above,sloped,pos=0.5] {$\varepsilon$} (6)
              (6) edge node[below,sloped,pos=0.6] {$\varepsilon$} (1)
              (6) edge[bend right] node[below,sloped,pos=0.5] {$\varepsilon$} (9)
            ;
            \path[->,color=PineGreen]
              (1) edge node[above,sloped,pos=0.5] {$\mathtt{a}$} (2)
              (2) edge node[above,sloped,pos=0.5] {$\mathtt{b}$} (3)
              (4) edge node[above,sloped,pos=0.5] {$\mathtt{c}$} (5)
              (5) edge node[above,sloped,pos=0.5] {$\mathtt{d}$} (6)
              (6) edge node[above,sloped,pos=0.5] {$\mathtt{a}$} (7)
              (8) edge node[above,sloped,pos=0.5] {$\mathtt{b}$} (9)
            ;
            \path[->,color=WildStrawberry]
              (3) edge node[above,sloped,pos=0.5] {$\emptyset$} (4)
              (7) edge node[above,sloped,pos=0.5] {$\emptyset$} (8)
            ;
            \path[->,color=NavyBlue]
              (1) edge[bend right] node[above,sloped,pos=0.5] {$\varepsilon$} (9)
              (2) edge[bend right] node[above,sloped,pos=0.5] {$\varepsilon$} (9)
              (3) edge[bend right] node[below,sloped,pos=0.6] {$\varepsilon$} (9)
              (4) edge[bend right] node[below,sloped,pos=0.5] {$\varepsilon$} (9)
              (5) edge[bend right] node[below,sloped,pos=0.5] {$\varepsilon$} (9)
            ;
      \end{tikzpicture}
\end{split} \end{equation*}
Indeed the language recognized by this automaton is $(\mathtt a_1\mathtt a_2\mathtt a_3\mathtt a_4\mathtt a_5+\mathtt a_1\mathtt a_2+\mathtt a_4\mathtt a_5)^*(\varepsilon+\mathtt a_1+\mathtt a_1\mathtt a_2
+\mathtt a_1\mathtt a_2\mathtt a_3+\mathtt a_1\mathtt a_2\mathtt a_3\mathtt a_4+\mathtt a_4 +(\mathtt a_1\mathtt a_2\mathtt a_3\mathtt a_4\mathtt a_5+\mathtt a_1\mathtt a_2+\mathtt a_4\mathtt a_5)(\varepsilon+\mathtt a_6\mathtt a_7\mathtt a_8))$. Setting $\mathtt a_i=\alpha_i$ in this expression, we find $(\mathtt a\mathtt b+\mathtt c\mathtt d)^*(\varepsilon+\mathtt a+\mathtt a\mathtt b
+\mathtt c +(\mathtt a\mathtt b+\mathtt c\mathtt d))=(\mathtt a\mathtt b+\mathtt c\mathtt d)^*(\varepsilon+\mathtt a
+\mathtt c )$ as expected.
\item Symmetrically, the language
 of the suffixes of $L$ is obtained by considering the relation $R_S:=R\cup\{(1,i):i\in\mathrm{Adm}(R;\alpha_1,\dots,\alpha_n),\, i\neq 1\}$. From the example above we obtain $R_S=\{(1,4),(3,6),(6,1),(6,9),(1,2),(1,3),(1,4),(1,5),(1,6)\}$. Graphically,
\begin{equation*}
\begin{split}\mathcal A(a_{R_S}^{(8)}) =\end{split}
\begin{split}
%
      \begin{tikzpicture}[node distance=1.25cm,bend angle=15]
            \node[state,initial,color=NavyBlue] (1) {$\mathtt{S}_1$};
            \node[state,above of=1,color=NavyBlue] (2) {$\mathtt{S}_2$};
            \node[state,above of=2,color=NavyBlue] (3) {$\mathtt{S}_3$};
            \node[state,above right of=3,color=NavyBlue] (4) {$\mathtt{S}_4$};
            \node[state,above right of=4,color=NavyBlue] (5) {$\mathtt{S}_5$};
            \node[state,right of=5,color=NavyBlue] (6) {$\mathtt{S}_6$};
            \node[state,below right of=6] (7) {$\mathtt{S}_7$};
            \node[state,below right of=7] (8) {$\mathtt{S}_8$};
            \node[state,below of=8,accepting] (9) {$\mathtt{S}_9$};
            \path[->]
              (1) edge[bend right] node[above,sloped,pos=0.5] {$\varepsilon$} (4)
              (3) edge[bend right] node[above,sloped,pos=0.5] {$\varepsilon$} (6)
              (6) edge node[below,sloped,pos=0.6] {$\varepsilon$} (1)
              (6) edge[bend right] node[below,sloped,pos=0.5] {$\varepsilon$} (9)
            ;
            \path[->,color=PineGreen]
              (1) edge node[above,sloped,pos=0.5] {$\mathtt{a}$} (2)
              (2) edge node[above,sloped,pos=0.5] {$\mathtt{b}$} (3)
              (4) edge node[above,sloped,pos=0.5] {$\mathtt{c}$} (5)
              (5) edge node[above,sloped,pos=0.5] {$\mathtt{d}$} (6)
              (6) edge node[above,sloped,pos=0.5] {$\mathtt{a}$} (7)
              (8) edge node[above,sloped,pos=0.5] {$\mathtt{b}$} (9)
            ;
            \path[->,color=WildStrawberry]
              (3) edge node[above,sloped,pos=0.5] {$\emptyset$} (4)
              (7) edge node[above,sloped,pos=0.5] {$\emptyset$} (8)
            ;

            \path[->,color=NavyBlue]
              (1) edge[bend left,in=135,out=45] node[above,sloped,pos=0.5] {$\varepsilon$} (2)
              (1) edge[bend left,in=135,out=45] node[above,sloped,pos=0.5] {$\varepsilon$} (3)
              (1) edge[bend right] node[above,sloped,pos=0.5] {$\varepsilon$} (5)
              (1) edge[bend right] node[above,sloped,pos=0.5] {$\varepsilon$} (6)
            ;
      \end{tikzpicture}
\end{split} \end{equation*}
\item The language of the factors of $L$ is obtained by first computing the prefixes and then the suffixes. Applying this construction to $L=a^{(8)}_{(1,4),(3,6),(6,1),(6,9)}(\mathtt a,\mathtt b,\emptyset,
\mathtt c,\mathtt d,\mathtt a,\emptyset,\mathtt b)$, we find that the set of the factors of $L$ is denoted by $a^{(8)}_{R_F}(\mathtt a,\mathtt b,\emptyset,
\mathtt c,\mathtt d,\mathtt a,\emptyset,\mathtt b)$ with $$R_F=\{(1,4),(3,6),(6,1),(6,9),(1,9),(2,9),(3,9),(4,9),(5,9),(1,2),(1,3),(1,4),(1,5),(1,6)\}.$$
\item The subwords of $L$ are denoted by the expressions $a_{S}^{(n)}(\alpha_1,\dots,\alpha_n)$ where
$S=R\cup\{(i,i+1):\alpha_i\neq\emptyset\}$. Applying the construction to  $L=a^{(8)}_{(1,4),(3,6),(6,1),(6,9)}(\mathtt a,\mathtt b,\emptyset,
\mathtt c,\mathtt d,\mathtt a,\emptyset,\mathtt b)$, the language of the subwords of $L$ is $a^{(8)}_{(1,4),(3,6),(6,1),(6,9),(1,2),(2,3),(4,5),(5,6),(6,7),(8,9)}(\mathtt a,\mathtt b,\emptyset,
\mathtt c,\mathtt d,\mathtt a,\emptyset,\mathtt b)$. The associated automaton is
\begin{equation*}
\begin{split}\mathcal A(a_{S}^{(8)}) =\end{split}
\begin{split}
      \begin{tikzpicture}[node distance=1.25cm,bend angle=15]
            \node[state,initial] (1) {$\mathtt{S}_1$};
            \node[state,above of=1] (2) {$\mathtt{S}_2$};
            \node[state,above of=2] (3) {$\mathtt{S}_3$};
            \node[state,above right of=3] (4) {$\mathtt{S}_4$};
            \node[state,above right of=4] (5) {$\mathtt{S}_5$};
            \node[state,right of=5] (6) {$\mathtt{S}_6$};
            \node[state,below right of=6] (7) {$\mathtt{S}_7$};
            \node[state,below right of=7] (8) {$\mathtt{S}_8$};
            \node[state,below of=8,accepting] (9) {$\mathtt{S}_9$};
            \path[->]
              (1) edge[bend right] node[above,sloped,pos=0.5] {$\varepsilon$} (4)
              (3) edge[bend right] node[above,sloped,pos=0.5] {$\varepsilon$} (6)
              (6) edge node[below,sloped,pos=0.6] {$\varepsilon$} (1)
              (6) edge[bend right] node[below,sloped,pos=0.5] {$\varepsilon$} (9)
            ;
            \path[->]
              (1) edge node[above,sloped,pos=0.5] {$\mathtt{a}$} (2)
              (2) edge node[above,sloped,pos=0.5] {$\mathtt{b}$} (3)
              (4) edge node[above,sloped,pos=0.5] {$\mathtt{c}$} (5)
              (5) edge node[above,sloped,pos=0.5] {$\mathtt{d}$} (6)
              (6) edge node[above,sloped,pos=0.5] {$\mathtt{a}$} (7)
              (8) edge node[above,sloped,pos=0.5] {$\mathtt{b}$} (9)
            ;
            \path[->,color=WildStrawberry]
              (3) edge node[above,sloped,pos=0.5] {$\emptyset$} (4)
              (7) edge node[above,sloped,pos=0.5] {$\emptyset$} (8)
            ;
            \path[->,color=NavyBlue]
              (1) edge[bend left,in=135,out=45] node[above,sloped,pos=0.5] {$\varepsilon$} (2)
              (2) edge[bend left,in=135,out=45] node[above,sloped,pos=0.5] {$\varepsilon$} (3)
              (4) edge[bend left,in=135,out=45] node[above,sloped,pos=0.5] {$\varepsilon$} (5)
              (5) edge[bend left,in=135,out=45] node[above,sloped,pos=0.5] {$\varepsilon$} (6)
              (6) edge[bend left,in=135,out=45] node[above,sloped,pos=0.5] {$\varepsilon$} (7)
              (8) edge[bend left,in=135,out=45] node[above,sloped,pos=0.5] {$\varepsilon$} (9)
            ;
      \end{tikzpicture}
\end{split} \end{equation*}
\item The mirror image of $L$ is obtained by computing $a_M^{(n)}(\alpha_n,\dots,\alpha_1)$ where $M=\{(n+2-j,n+2-i):(i,j)\in R\}$. Let us again illustrate the construction on $L=a^{(8)}_{(1,4),(3,6),(6,1),(6,9)}(\mathtt a,\mathtt b,\emptyset,
\mathtt c,\mathtt d,\mathtt a,\emptyset,\mathtt b)$. The mirror image of $L$ is
$a^{(8)}_{(1,4),(3,6),(6,1),(6,9)}(\mathtt b,\emptyset,\mathtt a,
\mathtt d,\mathtt c,\emptyset,\mathtt b,\mathtt a)$. Graphically,
\begin{equation*}
\begin{split}\mathcal A(a_{M}^{(8)}) =\end{split}
\begin{split}
      \begin{tikzpicture}[node distance=1.25cm,bend angle=15]
            \node[state,color=NavyBlue,accepting] (1) {$\mathtt{S}_9$};
            \node[state,above of=1,color=NavyBlue] (2) {$\mathtt{S}_8$};
            \node[state,above of=2,color=NavyBlue] (3) {$\mathtt{S}_7$};
            \node[state,above right of=3,color=NavyBlue] (4) {$\mathtt{S}_6$};
            \node[state,above right of=4,color=NavyBlue] (5) {$\mathtt{S}_5$};
            \node[state,right of=5,color=NavyBlue] (6) {$\mathtt{S}_4$};
            \node[state,below right of=6] (7) {$\mathtt{S}_3$};
            \node[state,below right of=7] (8) {$\mathtt{S}_2$};
            \node[state,below of=8,initial right] (9) {$\mathtt{S}_1$};
            \path[<-]
              (1) edge[bend right] node[above,sloped,pos=0.5] {$\varepsilon$} (4)
              (3) edge[bend right] node[above,sloped,pos=0.5] {$\varepsilon$} (6)
              (6) edge node[below,sloped,pos=0.6] {$\varepsilon$} (1)
              (6) edge[bend right] node[below,sloped,pos=0.5] {$\varepsilon$} (9)
            ;
            \path[<-,color=PineGreen]
              (1) edge node[above,sloped,pos=0.5] {$\mathtt{a}$} (2)
              (2) edge node[above,sloped,pos=0.5] {$\mathtt{b}$} (3)
              (4) edge node[above,sloped,pos=0.5] {$\mathtt{c}$} (5)
              (5) edge node[above,sloped,pos=0.5] {$\mathtt{d}$} (6)
              (6) edge node[above,sloped,pos=0.5] {$\mathtt{a}$} (7)
              (8) edge node[above,sloped,pos=0.5] {$\mathtt{b}$} (9)
            ;
            \path[<-,color=WildStrawberry]
              (3) edge node[above,sloped,pos=0.5] {$\emptyset$} (4)
              (7) edge node[above,sloped,pos=0.5] {$\emptyset$} (8)
            ;
      \end{tikzpicture}
\end{split} \end{equation*}
 The language recognized by $\mathcal A(a_{M}^{(8)})$ is $(\varepsilon+\mathtt a_1\mathtt a_2\mathtt a_3)(\mathtt a_4\mathtt a_5(\varepsilon+\mathtt a_6\mathtt a_7\mathtt a_8)+\mathtt a_7\mathtt a_8)^+$. Specializing to $\mathtt a_1=\mathtt b$, $\mathtt a_2=\emptyset$, $\mathtt a_3=\mathtt a$, $\mathtt a_4=\mathtt d$, $\mathtt a_5=\mathtt c$, $\mathtt a_6=\emptyset$, $\mathtt a_7=\mathtt b$, and $\mathtt a_8=\mathtt a$, we recover the language $(\mathtt d\mathtt c+\mathtt b\mathtt a)^+$ that is the mirror image of $L$.
\end{itemize}
\end{example}
Some more examples:
\begin{example}\rm  Let $\mathtt a_1,\dots,\mathtt a_n$ be $n$ letters. We have
\begin{itemize}
\item $a^{(n)}_{\{(n+1,1),(1,n+1)\}}(\mathtt a_1,\dots,\mathtt a_n)=(\mathtt a_1\cdots\mathtt a_n)^*$.
\item $a^{(n)}_{\{(i,j):i\neq j\}}(\mathtt a_1,\dots,\mathtt a_n)=(\mathtt a_1+\cdots+\mathtt a_n)^*$.
\item $a^{(n)}_{\{(n+1,1)\}\cup  \{(i,n+1):1\leq i\leq n\}}(\mathtt a_1,\dots,\mathtt a_n)=(\mathtt a_1+\mathtt a_1\mathtt a_2+\cdots+\mathtt a_1\cdots\mathtt a_n)^*$.
\item $a^{(n)}_{\{(n+1,1)\}\cup  \{(1,i+1):1\leq i\leq n\}}(\mathtt a_1,\dots,\mathtt a_n)=(\mathtt a_n+\mathtt a_{n-1}\mathtt a_n+\cdots+\mathtt a_1\cdots\mathtt a_n)^*$.
\item
$
a^{(n)}_{\{(i+1,i):1\leq i\leq n\}}(\mathtt a_1,\dots,\mathtt a_n)=\{w\in\{\mathtt a_1,\dots,\mathtt a_n\}^*:w=\mathtt a_1w'\mathtt a_n\mbox{ and }w=u\mathtt a_i\mathtt a_jv\mbox{ implies }j\leq i+1\}.
$
\end{itemize}
\end{example}
\medskip
\subsection{Action of $\QOSET$} \label{sub:action_qoset}
Let $a_{R}^{(n)}\in\ARef_n$. If we compare the grammars $G_R^{(n)}$ and $G_{\gamma R}^{(n)}$ ($\gamma R$ being the transitive cloture of $R$), we observe that $S_i\rightarrow S_\ell \in P(a_{\tilde\gamma R}^{(n)})$ implies there exists $i_1=i,i_2,\dots,i_p=\ell$ such that $S_{i_h}\rightarrow S_{h+1} \in P(a_{ R}^{(n)})$ for each $1\leq h<\ell$. Hence, the languages $\mathbb L(G_R^{(n)})$ and $\mathbb L(G_{\gamma R}^{(n)})$ are equal.
\begin{example}\rm
Consider $R=\{(1,2),(2,3)\}$, we have $\tilde\gamma(R)=\{(1,2),(2,3),(1,3)\}$. We have
\[
P(a_{ R}^{(2)})=\left\{\begin{array}{l}\mathtt S_1\rightarrow \mathtt a_1\mathtt S_2,\\
\mathtt S_1\rightarrow \mathtt S_2,\\
\mathtt S_2\rightarrow \mathtt a_2\mathtt S_3,\\
\mathtt S_2\rightarrow \mathtt S_3,\\
\mathtt S_3\rightarrow\varepsilon,
\end{array}\right.
\mbox{ and }
P(a_{ \tilde\gamma R}^{(2)})=\left\{\begin{array}{l}\mathtt S_1\rightarrow \mathtt a_1\mathtt S_2,\\
\mathtt S_1\rightarrow \mathtt S_2,\\
\mathtt S_1\rightarrow \mathtt S_3,\\
\mathtt S_2\rightarrow \mathtt a_2\mathtt S_3,\\
\mathtt S_2\rightarrow \mathtt S_3,\\
\mathtt S_3\rightarrow\varepsilon.
\end{array}\right.
\]
Hence, $\mathbb L(G_R^{(n)})=\{\varepsilon,\mathtt a_1,\mathtt a_1\mathtt a_2,\mathtt a_2\}=\mathbb L(G_{\gamma R}^{(n)})$.
\end{example}
This allows to consider the action of $\mathrm{\OP}(\Diamonddot)$ defined by $a^{(n)}_{\left[R\right]}(L_1,\dots,L_n):=a^{(n)}_{R}(L_1,\dots,L_n)$.\\
Alternatively, the action of $\QOSET$ is defined by $Q(L_1,\dots,L_n)=a^{(n)}_{Q\setminus\Delta}(L_1,\dots,L_n)$. Observing that the operads $\QOSET$ and $\mathrm{\OP}(\Diamonddot)$ are isomorphic and that the isomorphism $\eta$ satisfies $\eta(Q)(L_1,\dots,L_n)=a^{(n)}_{\left[Q\setminus\Delta\right]}(L_1,\dots,L_n)=a^{(n)}_{Q\setminus\Delta}(L_1,\dots,L_n)=Q(L_1,\dots,L_n)$, the action of $\QOSET$ is compatible with the partial compositions. Hence, Theorem \ref{AREFmod} implies
\begin{corollary}
The sets $2^{\Sigma^*}$ and $Reg(\Sigma)$ are $\QOSET$-modules.
\end{corollary}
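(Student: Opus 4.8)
The plan is to obtain the corollary by transferring the $\ARef$-module structure of Theorem~\ref{AREFmod} first to the quotient operad $\mathrm{OP}(\Diamonddot)$ and then, via the isomorphism $\eta$, to $\mathrm{QOSet}$. Recall that the $\mathrm{QOSet}$-action was set to be $Q(L_1,\dots,L_k):=a^{(k)}_{Q\setminus\Delta}(L_1,\dots,L_k)$, so that each operator $Q\in\mathrm{QOSet}_k$ acts exactly as the $\ARef$-operator $a^{(k)}_{Q\setminus\Delta}$. Theorem~\ref{AREFmod} already guarantees that every such operator sends $(2^{\Sigma^*})^k$ into $2^{\Sigma^*}$ and $Reg(\Sigma)^k$ into $Reg(\Sigma)$, so the only thing that remains is the module-compatibility axiom, and it suffices to establish it once, both statements following from a single verification.

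First I would record the well-definedness underlying the quotient action: by the observation preceding the statement, $\mathbb L(G_R^{(k)})=\mathbb L(G_{\tilde\gamma R}^{(k)})$, hence two antireflexive relations sharing a transitive closure denote the same language on any tuple $(L_1,\dots,L_k)$. This makes $a^{(k)}_{[R]}(L_1,\dots,L_k):=a^{(k)}_{R}(L_1,\dots,L_k)$ independent of the chosen representative, so $\mathrm{OP}(\Diamonddot)$ genuinely acts on languages. I would then check the compatibility for $\mathrm{OP}(\Diamonddot)$ directly. Fix $a^{(m)}_{[R_1]}$, $a^{(n)}_{[R_2]}$ and $1\le i\le m$, and put $R'':=\ShiftDiam{i,n}{R_1}\cup\ShiftDiam{0,i}{R_2}$, so that $a^{(m)}_{R_1}\Diamond_i a^{(n)}_{R_2}=a^{(m+n-1)}_{R''}$ in $\ARef$. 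Unfolding the action and applying the $\ARef$-module axiom of Theorem~\ref{AREFmod} gives
\[
a^{(m)}_{[R_1]}\bigl(L_1,\dots,a^{(n)}_{[R_2]}(L_i,\dots,L_{i+n-1}),\dots,L_{m+n-1}\bigr)
= a^{(m+n-1)}_{R''}(L_1,\dots,L_{m+n-1}).
\]
On the other hand, since $\ShiftDiamD{i,k}{[R]}=[\ShiftDiam{i,k}{R}]$ and $\equiv_{\gamma^A}$ is a monoid congruence (Section~\ref{sec quot precomp}), the composite in $\mathrm{OP}(\Diamonddot)$ is $a^{(m)}_{[R_1]}\Diamonddot_i a^{(n)}_{[R_2]}=a^{(m+n-1)}_{[R'']}$, whose action is by definition that of $a^{(m+n-1)}_{R''}$; the two sides therefore agree, which is precisely the module axiom for $\mathrm{OP}(\Diamonddot)$.

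Finally I would transport the axiom to $\mathrm{QOSet}$. As noted before the corollary, $\eta:\mathrm{QOSet}\to\mathrm{OP}(\Diamonddot)$ is an operad isomorphism with $\eta(Q)(L_1,\dots,L_k)=Q(L_1,\dots,L_k)$, so the compatibility just proved for $\mathrm{OP}(\Diamonddot)$ pulls back verbatim to $\mathrm{QOSet}$; running the same argument inside the submodule $Reg(\Sigma)\subseteq 2^{\Sigma^*}$ (using the regularity half of Theorem~\ref{AREFmod}) disposes of both modules at once. The only genuinely non-formal ingredient is the well-definedness step, namely the invariance of the denoted language under transitive closure; everything else is bookkeeping with the shift operators $\ShiftDiam{i,k}{\cdot}$ and a transfer along $\eta$, so I do not expect a real obstacle beyond this point.
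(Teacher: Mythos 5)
Your proposal is correct and follows the paper's own route: well-definedness of the quotient action via the invariance of $\mathbb L(G_R^{(k)})$ under transitive closure, compatibility of the quotient action with the compositions inherited from the $\ARef$-module structure of Theorem~\ref{AREFmod}, and transfer to $\mathrm{QOSet}$ along the isomorphism $\eta$. You merely spell out the compatibility computation for $\mathrm{OP}(\Diamonddot)$ in more detail than the paper, which states it tersely before invoking Theorem~\ref{AREFmod}.
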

Now, we prove that the operad $\QOSET$ is optimal in the sense that two different operators act in two different ways on regular languages. That is:
\begin{theorem}\label{QOSetmodule}
If $\Sigma$ is an alphabet with at least two letters then
$Reg(\Sigma)$ is a faithful $\QOSET$-module.
\end{theorem}
\begin{proof}
Let $Q_1\neq Q_2\in \QOSET_n$ be two quasiorders. Without loss of generality, we suppose that there exists $(i,j)\in Q_1$ such that $(i,j)\not\in Q_2$. Let $\Sigma_n=\{\mathtt a_1,\dots,\mathtt a_n\}$ be an alphabet. The constructions above shows that the word $\mathtt a_1\mathtt a_2\dots\mathtt  a_{i-1}\mathtt a_{j}\mathtt a_{j+1}\dots\mathtt  a_{n}$ belongs to $Q_1(\{\mathtt a_1\},\{\mathtt a_2\},\dots,\{\mathtt a_{n}\})$ but not to $Q_2(\{\mathtt a_1\},\{\mathtt a_2\},\dots,\{\mathtt a_{n}\})$. Setting $\mathtt a_\ell=\mathtt a^{\ell-1}\mathtt b$ for each $\ell>0$, this shows the result for an alphabet of size at least $2$.
\end{proof}
\medskip

Note that the number of elements of $\QOSET_n$ is known up to $n=17$ (see \cite{Sloane} {\tt sequence A000798}):
\[
 4, 29, 355, 6942, 209527, 9535241, 642779354, 63260289423, \dots
\]
\begin{example}\
\rm
\begin{itemize}
\item Let us examine the four operators of $\QOSET_1$:
\[
Q_1=\{(1,1),(2,2)\}, Q_2=\{(1,1),(1,2),(2,2)\}, Q_3=\{(1,1),(2,1),(2,2)\}, Q_4=\{(1,1),(1,2),(2,1),(2,2)\},
\]
The four languages are $Q_1(\mathtt  a_1)=\mathtt a_1$, $Q_2( \mathtt a_1)=\varepsilon+\mathtt a_1$, $Q_3=\mathtt a_1^+(=\mathtt  a_1\mathtt  a_1^*)$, and $Q_4=\mathtt a_1^*$.
\item Let us examine the 29 operators of $\QOSET_2$:
{\footnotesize
\[
\begin{array}{|c|c|c|c|c|c|}
\hline
Q\setminus\Delta&Q(\mathtt a_1, \mathtt a_2)&Q\setminus\Delta&Q(\mathtt a_1, \mathtt a_2)&Q\setminus\Delta&Q(\mathtt a_1, \mathtt a_2)\\\hline
\emptyset& \mathtt a_1\mathtt a_2&\{(1,2)\}&\mathtt a_2+\mathtt a_1\mathtt a_2
&\{(1,3)\}&\varepsilon+\mathtt a_1\mathtt a_2\\
\{(2,3)\}&\mathtt a_1+\mathtt a_1\mathtt a_2&
\{(2,1)\}&\mathtt a_1^+\mathtt a_2&\{(3,1)\}&(\mathtt a_1\mathtt a_2)^+\\\{(3,2)\}&\mathtt a_1\mathtt a_2^+&\{(1,2),(2,1)\}&\mathtt a_1^*\mathtt a_2&
\{(1,3),(3,1)\}& (\mathtt a_1\mathtt a_2)^*\\
\{(2,3),(3,2)\}&\mathtt a_1\mathtt a_2^*&
\{(1,2),(3,2)\}&(\varepsilon+\mathtt a_1)\mathtt a_2^+&
\{(2,1),(2,3)\}&a_1^+(\varepsilon+\mathtt a_2)\\
\{(1,3),(2,3)\}&(\varepsilon+\mathtt a_1+\mathtt a_2)&
\{(3,1),(3,2)\}&(\mathtt a_1\mathtt a_2^+)^+&
\{(3,1),(2,1)\}&(\mathtt a_1^+\mathtt a_2)^+\\
\{(1,3),(1,2)\}&\varepsilon+ \mathtt a_2+\mathtt a_1\mathtt a_2&
\{(1,2),(2,3),(1,3)\}&\varepsilon+ \mathtt a_1+\mathtt a_2+\mathtt a_1\mathtt a_2
& \{(2,1),(3,2),(3,1)\}&(\mathtt a_1^+\mathtt a_2^+)^+\\
\{(1,3),(3,2),(1,2)\}&(\varepsilon+\mathtt a_1)\mathtt a_2^++\varepsilon&\{(3,1),(2,3),(2,1)\}&(\mathtt a_1^+(\varepsilon+\mathtt a_2))^+&
\{(2,1),(1,3),(2,3)\}&\varepsilon+ \mathtt a_1^+(\varepsilon+\mathtt a_2)\\
\{(1,2),(3,1),(3,2)\}&((\mathtt a_1+\varepsilon)\mathtt a_2^+)^+&&&&\\\hline
\end{array}
\]}
\[
\begin{array}{|c|c|c|c|}
\hline
Q\setminus\Delta&Q(\{\mathtt a_1\}, \{\mathtt a_2\})&Q\setminus\Delta&Q(\{\mathtt a_1\}, \{\mathtt a_2\})\\\hline
\{(1,2),(2,1),(2,3),(1,3)\}&\varepsilon+\mathtt a_1^*\mathtt a_2&
\{(1,2),(2,1),(3,2),(3,1)\}&(\mathtt a_1^*\mathtt a_2^+)^+\\
\{(1,3),(3,1),(1,2),(3,2)\}&((\varepsilon+\mathtt a_1)\mathtt a_2^+)^*&
\{(1,3),(3,1),(2,1),(2,3)\}&(\mathtt a_1^2(\varepsilon+\mathtt a_2))^*\\
\{(2,3),(3,2),(2,1),(3,1)\}&(\mathtt a_1^+\mathtt a_2^*)^+&
\{(2,3),(3,2),(1,2),(1,3)\}&(\varepsilon+\mathtt a_1)\mathtt a_2^*\\
\{(1,2),(1,3),(2,3),(2,1),(2,3),(3,1)\}&(\mathtt a_1+\mathtt a_2)^*&&\\\hline
\end{array}\]
We illustrate the proof of Theorem \ref{QOSetmodule}. Remarking that $(3,2)\in\{(2,3),(3,2),(2,1),(3,1)\}$, $(3,2)\not\in\{(2,1),(1,3),(2,3)\}$, we have $\mathtt a_1\mathtt a_2\mathtt a_2\in (\mathtt a_1^+\mathtt a_2^*)^+=
\{(2,3),(3,2),(2,1),(3,1)\}(\mathtt a_1,\mathtt a_2)$ and
$\mathtt a_1\mathtt a_2\mathtt a_2\not\in \varepsilon+ \mathtt a_1^+(\varepsilon+\mathtt a_2)=
\{(2,1),(1,3),(2,3)\}(\mathtt a_1,\mathtt a_2)$.
\end{itemize}
\end{example}

\subsection{Back to (simple) multi-tildes}
The purpose of this section is to show that the restriction of the action to (simple) multi-tildes is compatible with the action described in \cite{LMN12}. In this paper, the action of multi-tildes involve another operad: the operad of sets of boolean vectors $\mathcal B=\bigcup_n\mathcal B_n$ with $\mathcal B_n=2^{\mathbb B^n}$ and $\mathbb B=\{0,1\}$. The composition is defined by
\[
E\circ_i F=\{[e_1,\dots,e_{i-1},e_if_1,\dots,e_if_{m},e_{i+1},\dots,e_n]:[e_1,\dots,e_m]\in E, [f_1,\dots,f_m]\in F\}
\]
for $E\in\mathcal B_n$ and $F\in\mathcal B_m$. The action on the languages is defined by
\[
E(L_1,\dots,L_n)=\bigcup_{[e_1,\dots,e_n]\in E}L_1^{e_1}\cdots L_n^{e_n}.
\]
We denote $[x,z]=\{y:x\leq y\leq z\}$. For each $T\in\mathcal T_n$ we set $\mathcal F(T)=\{S\subset T: (x,y),(z,t)\in S\mbox{ implies }[x,y]\cup[z,t]=\emptyset\}$. Finally we define $V(T)=\{v(S):S\in \mathcal F(T)\}$ with $v(S)=(v_1,\dots,v_n)$ where $v_j=0$ if $j\in\bigcup_{(x,y)\in S}[x,y]$ and $1$ otherwise. In \cite{LMN12} we proved that $V$ is an operadic morphism and defined the action $T(L_1,\dots,L_n)=V(T)(L_1,\dots,L_n)$.\\
Remark that $\mathcal T$ is isomorphic to the suboperad of $\mathcal{DT}$ generated by $(a_{T}^{(n)}, a^{(n)}_\emptyset)$ (the isomorphism sends each $T$ to $(a_{T}^{(n)}, a^{(n)}_\emptyset)$.
So we have to prove that $T(L_1,\dots,L_n)=(a_{T}^{(n)}, a^{(n)}_\emptyset)(L_1,\dots,L_n)$. Equivalently,
\[
T(\mathtt a_1,\dots,\mathtt a_n)=\mathbb L(\mathbf G_{T,\emptyset})(\mathtt a_1,\dots, \mathtt a_n).
\]
To this aim, we associate a set of boolean vectors to each grammar $\mathbf G_{T,\emptyset}$ in the following way: we consider the grammar $\mathbf G_{0,1}(T)$ which is obtained from $\mathbf G_{T,\emptyset}$ by substituting to each rule $\mathtt S_i\rightarrow \mathtt a_i\mathtt S_{i+1}$ the rule $\mathtt S_i\rightarrow 1\mathtt S_{i+1}$ and to each rule $\mathtt S_i\rightarrow \mathtt S_{j}$ the rule $\mathtt S_i\rightarrow 0^{j-i}\mathtt S_{j}$. Denote $\mathbb L_{0,1}(T)=\mathbb L(\mathbf G_{0,1}(T))$. Each word of $\mathbb L_{0,1}(T)$ has a length equal to $n$. Remark that
\[
\mathbb L(\mathbf G_{T,\emptyset})(\mathtt a_1,\dots, \mathtt a_n)=\{\mathtt a_1^{e_1}\cdots \mathtt a_n^{e_n}:e_1\dots e_n\in \mathbb L_{0,1}(T)\}.
\]
Assimilating each word $e_1\dots e_n\in \mathbb L_{0,1}(T)$ to the boolean vector $(e_1,\dots,e_n)$ we prove the following result:
\begin{proposition} For any $a_T^{(n)}\in\mathcal T_n$, we have
$a_T^{(n)}(L_1,\dots,L_n)=(a_{T}^{(n)}, a^{(n)}_\emptyset)(L_1,\dots,L_n)$.
\end{proposition}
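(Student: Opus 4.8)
The plan is to reduce the statement to an equality between two sets of boolean vectors and then to read it off from the shape of the derivations of $\mathbf G_{0,1}(T)$. Recall that the action of a set of boolean vectors is $E(L_1,\dots,L_k)=\bigcup_{(e_1,\dots,e_k)\in E}L_1^{e_1}\cdots L_k^{e_k}$, that by definition $a_T^{(k)}(L_1,\dots,L_k)=V(T)(L_1,\dots,L_k)$, and that the identity $\mathbb L(\mathbf G_{T,\emptyset})(\mathtt a_1,\dots,\mathtt a_k)=\{\mathtt a_1^{e_1}\cdots\mathtt a_k^{e_k}:(e_1,\dots,e_k)\in\mathbb L_{0,1}(T)\}$ recorded just above shows that $(a_T^{(k)},a^{(k)}_\emptyset)(L_1,\dots,L_k)=\bigcup_{(e_1,\dots,e_k)\in\mathbb L_{0,1}(T)}L_1^{e_1}\cdots L_k^{e_k}$. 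Hence it suffices to prove the set equality $V(T)=\mathbb L_{0,1}(T)$.

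Next I would describe the derivations of $\mathbf G_{0,1}(T)$ combinatorially. Since here the right component is $\emptyset$, the second family of productions reduces to $\mathtt S_x\rightarrow\mathtt S_{y+1}$ for each $(x,y)\in T$, which in $\mathbf G_{0,1}(T)$ becomes $\mathtt S_x\rightarrow 0^{y+1-x}\mathtt S_{y+1}$; together with the letter rules $\mathtt S_i\rightarrow 1\mathtt S_{i+1}$ and the terminal rule $\mathtt S_{k+1}\rightarrow\varepsilon$. Every rule strictly increases the index of the nonterminal, so a derivation starting at $\mathtt S_1$ is the same datum as a strictly increasing sequence of states $1=p_0<p_1<\cdots<p_m=k+1$ in which each step $p_t\rightarrow p_{t+1}$ is either a letter step with $p_{t+1}=p_t+1$ (emitting a single $1$ at position $p_t$) or a tilde step associated with some $(x,y)\in T$ where $x=p_t$ and $p_{t+1}=y+1$ (emitting $0$ at every position of $[x,y]$). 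In particular every word of $\mathbb L_{0,1}(T)$ has length $k$, and position $j$ carries a $0$ exactly when it lies in the interval of some chosen tilde.

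Then I would prove the two inclusions giving $V(T)=\mathbb L_{0,1}(T)$. For $\supseteq$: given $S\in\mathcal F(T)$ its intervals are pairwise disjoint, so scanning $\{1,\dots,k\}$ from left to right, using a tilde step over each interval of $S$ and a letter step on every position outside $\bigcup_{(x,y)\in S}[x,y]$, yields a legal derivation of $\mathbf G_{0,1}(T)$ whose emitted word is $0$ on $\bigcup_{(x,y)\in S}[x,y]$ and $1$ elsewhere, that is $v(S)$; hence $v(S)\in\mathbb L_{0,1}(T)$. For $\subseteq$: to any derivation I associate the set $S\subseteq T$ of the tildes used by its tilde steps; since the sequence of states is strictly increasing these tildes have pairwise disjoint intervals, so $S\in\mathcal F(T)$, and by the description of the labelling the emitted word equals $v(S)$. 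Thus $\mathbb L_{0,1}(T)=\{v(S):S\in\mathcal F(T)\}=V(T)$, which together with the first paragraph gives $a_T^{(k)}(L_1,\dots,L_k)=(a_T^{(k)},a^{(k)}_\emptyset)(L_1,\dots,L_k)$.

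The main obstacle is the careful handling of the correspondence in the third step: I must argue that strict monotonicity of the state sequence forces the chosen tildes to cover pairwise disjoint intervals, so that each position is labelled exactly once and the resulting constraint matches the defining condition of $\mathcal F(T)$, and, in the converse direction, that every compatible family $S$ admits a consistent interleaving of letter and tilde steps covering each position of $\{1,\dots,k\}$ exactly once. Once this is checked, the matching of the $0/1$ labelling with the vector $v(S)$ is only bookkeeping, and the equality of the two actions follows formally from the first paragraph.
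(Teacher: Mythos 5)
Your proof is correct, and it takes a genuinely more direct route than the paper's. Both arguments reduce the proposition, via the identity $\mathbb L(\mathbf G_{T,\emptyset})(\mathtt a_1,\dots,\mathtt a_k)=\{\mathtt a_1^{e_1}\cdots\mathtt a_k^{e_k}:e_1\dots e_k\in\mathbb L_{0,1}(T)\}$ recorded just before the statement, to the single set equality $V(T)=\mathbb L_{0,1}(T)$; the difference lies in how that equality is proved. The paper first replaces $T$ by its normal form $\widetilde T$, the smallest \emph{closed} multitilde containing $T$ (closed meaning $(i,j),(j+1,\ell)\in T\Rightarrow(i,\ell)\in T$), checks that neither the action nor the language $\mathbb L_{0,1}$ changes under this closure, and then, for closed $T$, matches each maximal block of $0$'s of a vector $v=0^{i_1}1\cdots 10^{i_p}$ with a \emph{single} tilde of $T$, hence a single jump rule of the grammar. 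You bypass the normal form entirely: a terminating derivation of $\mathbf G_{0,1}(T)$ is a strictly increasing walk from $\mathtt S_1$ to $\mathtt S_{k+1}$ whose tilde steps necessarily use pairwise disjoint intervals, so derivations correspond exactly to the families $S\in\mathcal F(T)$ and emit exactly the words $v(S)$ --- several adjacent tilde steps may sit inside one block of $0$'s, which is precisely the configuration the paper's closure operation is designed to eliminate. The trade-off: the paper's detour buys a crisper one-run-one-tilde-one-rule correspondence, at the cost of importing the normal-form machinery and proving two invariance statements ($V(T)=V(\widetilde T)$ and $\mathbb L_{0,1}(T)=\mathbb L_{0,1}(\widetilde T)$, the latter argued by adjoining composite rules); your argument is self-contained, works uniformly for all $T$, and its only delicate point --- that monotonicity of the state sequence forces disjointness, and that conversely any disjoint family can be interleaved with letter steps --- is exactly the one you identify and settle correctly.
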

\begin{proof}
Let us first recall that a \emph{closed} multi-tilde is a multi-tilde $T$ satisfying
\[
(i,j),\, (j+1,\ell)\in T\Rightarrow (i,\ell)\in T.
\]
The \emph{normal form} $\widetilde T$ of a multi-tilde $T$ is the smallest closed multi-tilde containing $T$ as a subset (see \emph{e.g.}\cite{CCM11a}). From the definition of the action of $\mathcal T$, we have $a_{T}^{(n)}(L_1,\dots,L_n)=a_{\widetilde T}^{(n)}(L_1,\dots,L_n)$. From the construction of $\mathbf G_{T,\emptyset}$ we observe that $ \mathbb L_{0,1}(\tilde T)= \mathbb L_{0,1}(T)$. Indeed, it is sufficient to remark that one can add the rule $\mathtt S_i\rightarrow 0^{l-i}\mathtt S_i$ in $\mathbf G_{0,1}(T)$, when $\mathtt S_i\rightarrow 0^{j-i}\mathtt S_j$ and $\mathtt S_j\rightarrow 0^{l-j}$ are two rules of $\mathbf G_{0,1}(T)$, without modifying the language.\smallskip

Thus, we have to prove $a_T^{(n)}(L_1,\dots,L_n)=(a_{T}^{(n)}, a^{(n)}_\emptyset)(L_1,\dots,L_n)$ for any closed multi-tilde $T$. That is $v=0^{i_1}10^{i_2}\cdots10^{i_p}\in V(T)$ (considering the vector as a word) if and only if $v\in \mathbb L_{0,1}(T)$. The case when $p=1$ means that $v=0^{i_1}=0^n$. For convenience, we set $i_0=1$. Obviously $(i_0,i_1),(i_0+i_1+1,i_0+i_1+i_2+1),\dots,(i_0+i_1+\cdots+i_{\ell-1}+2(\ell-1)+1,i_0+i_1+\cdots+i_{\ell}+2(\ell-1)+1)\in T$ if and only if $\displaystyle \mathtt S_1\mathop\rightarrow^*0^{i_1}1\dots 10^{i_\ell}\mathtt S_{i_0+i_1+\cdots+i_\ell+2\ell}$ for any $0\leq \ell\leq p$ (here $\displaystyle E\mathop\rightarrow^*w$ means that we can produce the word $w$ from $E$ by applying a finite sequence of rules). Equivalently $v\in V(T)$ if and only if $\mathtt S_1\displaystyle\mathop\rightarrow^*v\mathtt S_{n+1}\rightarrow v$. This proves the result.
\end{proof}

\section*{Conclusion and perspectives}
\def\rk{\mathrm{rank}}
We have described a faithful action of a combinatorial operad on regular languages. This means that we describe countable operations providing a new kind of expressions for denoting regular languages. One of the interest of the construction is that we propose expressions which are close to the representation by automata. The obtained expressions are more expressive in the sense that most of the complexity of the denoted language is concentrated at the operator. So this allows to define several measures of the complexity of a language. For instance, let us define $\rk_w(L)=\min\{k:\exists Q\in\QOSET_k,\ \alpha_1,\dots,\alpha_k\in \Sigma\cup\{\emptyset\} \mbox{ such that } L=Q(\alpha_1,\dots,\alpha_k)\}$ and
$\rk_h(L)=\min\{h:\exists k\geq 1, O\in\mathcal{DT}_k,\ \alpha_1,\dots,\alpha_k\in \Sigma\cup\{\emptyset\} \mbox{ such that } L=O(\alpha_1,\dots,\alpha_k)\mbox{ and }\#O=h\}$. The two ranks $\rk_w$ and $\rk_h$ can be respectively interpreted as the width and the height of a language. The first one, $\rk_w$, is the minimal number of occurrences of symbols or $\emptyset$ in the expression. The rank  $\rk_h$ expresses the minimal complexity of an operator involved for denoting the languages. These measures will be investigated; in particular a parallel with the size of a minimal (in terms of states or transitions) automaton should be established.
\smallskip

The operads considered in this paper are $\mathtt{SET}$-operads, that are operads that can be constructed from the category $\mathtt{SET}$. We can also consider linear combinations of operators which consists to use $\mathtt{VECT}$-operads based on the category of the vector spaces. By this way, we guess that the infinite matrices studied in our paper are  good candidates to describe a weighted analogue of multi-tilde operators for rational series.
\smallskip

Another perspective is the extension of the conversion methods from automata to expressions using double multi-tildes.
These conversions  were studied in~\cite{CCM10} and in~\cite{CCM12}.
By slightly modifying the action of our operads, we aim to extend these algorithms of conversions, and conversely from expressions to automata \emph{e.g.}, the position functions~\cite{Glu61} or the expression derivatives~\cite{Ant96,Brzo64}.
\smallskip

A last perspective, suggested by the referee, is the following. By the
Alexandroff correspondence~\cite{Ale37}, quasiorders on finite sets are 
in bijection with finite topologies. The question consists in 
investigating if the action of the operad of quasiorders $\QOSET$ on 
languages (see Section~\ref{sub:action_qoset}) has a topological 
interpretation.


\begin{thebibliography}{CCM12}

\bibitem[Ale37]{Ale37}
P.~Alexandroff.
\newblock {Diskrete R\"aume}.
\newblock {\em Rec. Math. [Mat. Sbornik] N.S.}, 2(44):501--519, 1937.

\bibitem[Ant96]{Ant96}
V.~Antimirov.
\newblock Partial derivatives of regular expressions and finite automaton
  constructions.
\newblock {\em Theoret. Comput. Sci.}, 155:291--319, 1996.

\bibitem[Brz64]{Brzo64}
J.~A. Brzozowski.
\newblock Derivatives of regular expressions.
\newblock {\em J. Assoc. Comput. Mach.}, 11(4):481--494, 1964.

\bibitem[BV72]{BV73}
M.~Boardman and R.~Vogt.
\newblock {\em The Geometry of Iterated Loop Spaces}.
\newblock Springer-Verlag, 1972.

\bibitem[CCM10]{CCM10}
P.~Caron, J.-M. Champarnaud, and L.~Mignot.
\newblock Acyclic automata and small expressions using multi-tilde-bar
  operators.
\newblock {\em Theoret. Comput. Sci.}, 411(38--39):3423--3435, 2010.

\bibitem[CCM11]{CCM11a}
P.~Caron, J.-M. Champarnaud, and L.~Mignot.
\newblock Multi-bar and multi-tilde regular operators.
\newblock {\em Journal of Automata, Languages and Combinatorics}, 16(1):11--26,
  2011.

\bibitem[CCM12]{CCM12}
P.~Caron, J.{-}M. Champarnaud, and L.~Mignot.
\newblock Multi-tilde-bar expressions and their automata.
\newblock {\em Acta Inf.}, 49(6):413--436, 2012.

\bibitem[Cho56]{CH56}
N.~Chomsky.
\newblock Three models for the description of language.
\newblock {\em IRE Transactions on Information Theory}, 2:113--124, 1956.

\bibitem[CZ01]{CZ01a}
J.-M. Champarnaud and D.~Ziadi.
\newblock From c-continuations to new quadratic algorithms for automata
  synthesis.
\newblock {\em Internat. J. Algebra Comput.}, 11(6):707--735, 2001.

\bibitem[EZ76]{EZ76}
A.~Ehrenfeucht and H.-P. Zeiger.
\newblock Complexity measures for regular expressions.
\newblock {\em J. Comput. Syst. Sci.}, 12(2):134--146, 1976.

\bibitem[Glu61]{Glu61}
V.~M. Glushkov.
\newblock The abstract theory of automata.
\newblock {\em Russian Mathematical Surveys}, 16:1--53, 1961.

\bibitem[Kac94]{Kac}
V.~G. Kac.
\newblock {\em Infinite-Dimensional Lie Algebras}.
\newblock Cambridge University Press; third edition, 1994.

\bibitem[Kle56]{Kle56}
S.~Kleene.
\newblock Representation of events in nerve nets and finite automata.
\newblock {\em Automata Studies}, Ann. Math. Studies 34:3--41, 1956.
\newblock Princeton U. Press.

\bibitem[LMN13]{LMN12}
J.-G. Luque, L.~Mignot, and F.~Nicart.
\newblock Some combinatorial operators in language theory.
\newblock {\em J. Autom. Lang. Comb.}, 18:27--52, 2013.

\bibitem[LV10]{LV10}
J.-L. Loday and B.~Vallette.
\newblock {\em Algebraic Operads}.
\newblock draft available at
  http://www-irma.u-strasbg.fr/~loday/PAPERS/LodayVallette.pdf, 2010.

\bibitem[May72]{May72}
J.~P. May.
\newblock {\em The geometry of iterated loop spaces}.
\newblock Number 271 in Lecture Notes in Mathematics. Springer-Verlag, 1972.

\bibitem[MSS02]{Markl}
M.~Markl, S.~Shnider, and J.~Stasheff.
\newblock {\em Operads in Algebra, Topology and Physics}.
\newblock American Mathematical Society, 2002.

\bibitem[MY60]{MY60}
R.~F. McNaughton and H.~Yamada.
\newblock Regular expressions and state graphs for automata.
\newblock {\em IEEE Transactions on Electronic Computers}, 9:39--57, March
  1960.

\bibitem[Slo11]{Sloane}
N.~J.~A. Sloane.
\newblock The on-line encyclopedia of integer sequences, 2011.

\end{thebibliography}

\end{document}